\theoremstyle{plain}
\newtheorem{theorem}{Theorem}[section]
\newtheorem*{theorem*}{Theorem}
\newtheorem{corollary}[theorem]{Corollary}
\newtheorem{observation}[theorem]{Observation}
\newtheorem{lemma}[theorem]{Lemma}
\newtheorem{proposition}[theorem]{Proposition}
\theoremstyle{definition}
\newtheorem{definition}[theorem]{Definition}
\newtheorem{remark}[theorem]{Remark}
\newtheorem{example}[theorem]{Example}
\newtheorem{notation}[theorem]{Notation}
\newcommand*{\mlstinline}[1]{\text{\lstinline|#1|}}
\DeclareMathOperator{\id}{id}
\newcommand{\R}{\mathbb R}
\newcommand{\N}{\mathbb N}
\newcommand{\T}{\mathbb{T}}
\renewcommand{\P}{\mathrm{Pr}}
\newcommand{\true}{\mathtt{true}}
\newcommand{\false}{\mathtt{false}}
\newcommand{\new}{\mathtt{new}}
\newcommand{\cpy}{\mathrm{copy}}
\newcommand{\del}{\mathrm{del}}
\newcommand{\dom}{\mathrm{dom}}
\newcommand{\cat}{\mathbb}
\newcommand{\catname}{\mathbf}
\newcommand{\sbs}{\catname{Sbs}}
\newcommand{\qbs}{\catname{Qbs}}
\newcommand{\meas}{\catname{Meas}}
\newcommand{\set}{\catname{Set}}
\renewcommand{\exp}{\mathrm{exp}}
\newcommand{\Exp}{\mathrm{Exp}}
\newcommand{\val}{\mathrm{val}}
\newcommand{\Val}{\mathrm{Val}}
\newcommand{\sem}[1]{\llbracket{#1}\rrbracket}
\newcommand{\norm}[1]{\langle {#1} \rangle} 
\renewcommand{\oplus}{\uplus} 
\newcommand{\dd}[2]{d#1(#2)} 
\newcommand{\defeq}{\stackrel{\text{def}}=}
\newcommand{\tboolintro}{\mathsf{bool}}
\newcommand{\tnamesintro}{\mathsf{name}}
\newcommand{\trealintro}{\mathsf{real}}
\newcommand{\tnames}{\mathsf{N}}
\newcommand{\tbool}{\mathsf{B}}
\newcommand{\ite}[3]{\mathtt{if}~#1~\mathtt{then}~#2~\mathtt{else}~#3}
\newcommand{\letin}[3]{\mathtt{let}~#1 \leftarrow #2~\mathtt{in}~#3}
\DeclareMathOperator{\Priv}{Priv}
\DeclareMathOperator{\Leak}{Leak}
\DeclareMathOperator{\Safe}{Safe}
\begin{document}

\title{Probabilistic Programming Semantics for Name Generation}

\author{Marcin Sabok}
\affiliation{
	\department{Department of Mathematics and Statistics}
	\institution{McGill University}
	\city{Montreal}
	\country{Canada}
}
\email{marcin.sabok@mcgill.ca}

\author{Sam Staton}
\affiliation{
	\department{Department of Computer Science}
	\institution{University of Oxford}
	\city{Oxford}
	\country{United Kingdom}
}
\email{sam.staton@cs.ox.ac.uk}

\author{Dario Stein}
\affiliation{
  \department{Department of Computer Science}
  \institution{University of Oxford}
  \city{Oxford}
  \country{United Kingdom}
}
\email{dario.stein@cs.ox.ac.uk}

\author{Michael Wolman}
\affiliation{
	\department{Department of Mathematics and Statistics}
	\institution{McGill University}
	\city{Montreal}
	\country{Canada}
}
\email{michael.wolman@mail.mcgill.ca}

\begin{abstract}
We make a formal analogy between random sampling and fresh name generation. We show that quasi-Borel spaces, a model for probabilistic programming, can soundly interpret the $\nu$-calculus, a calculus for name generation. Moreover, we prove that this semantics is fully abstract up to first-order types. This is surprising for an `off-the-shelf' model, and requires a novel analysis of probability distributions on function spaces. Our tools are diverse and include descriptive set theory and normal forms for the $\nu$-calculus.

\end{abstract}

\begin{CCSXML}
<ccs2012>
<concept>
<concept_id>10003752.10010124.10010131.10010133</concept_id>
<concept_desc>Theory of computation~Denotational semantics</concept_desc>
<concept_significance>500</concept_significance>
</concept>
<concept>
<concept_id>10003752.10010124.10010131.10010137</concept_id>
<concept_desc>Theory of computation~Categorical semantics</concept_desc>
<concept_significance>500</concept_significance>
</concept>
<concept>
<concept_id>10002950.10003648</concept_id>
<concept_desc>Mathematics of computing~Probability and statistics</concept_desc>
<concept_significance>500</concept_significance>
</concept>
</ccs2012>
\end{CCSXML}

\ccsdesc[500]{Theory of computation~Denotational semantics}
\ccsdesc[500]{Theory of computation~Categorical semantics}
\ccsdesc[500]{Mathematics of computing~Probability and statistics}

\keywords{probabilistic programming, name generation, nu-calculus, quasi-Borel spaces,
          standard Borel spaces, descriptive set theory, Borel on Borel,
          denotational semantics, synthetic probability theory}

\maketitle

\section{Introduction}
\label{sec:introduction}

This paper is a foundational study of two styles of programming and their relationship:
\begin{enumerate}
  \item fresh name generation (gensym) via random draws;
  \item statistical probabilistic programming with higher-order functions.
\end{enumerate}
We use a recent model of probabilistic programming, quasi-Borel spaces (QBSs,~\cite{heunen:qbs}), to give a first random model of the $\nu$-calculus~\cite{stark:whatsnew}, which is a $\lambda$-calculus with fresh name generation.
By further developing the theory of QBSs, we are able to arrive at a new theorem for name generation:
\begin{theorem*}[\ref{thm:fullabstraction}]
  The random model of the $\nu$-calculus is fully abstract at first order. That is, two first order programs are observationally equivalent if and only if their interpretation in QBSs is the same.
\end{theorem*}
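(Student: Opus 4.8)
The statement is a biconditional, so I would prove the two directions separately and by quite different means. The easier direction is \emph{soundness}: if $\sem{M} = \sem{N}$ then $M$ and $N$ are observationally equivalent. The plan is to establish (i) compositionality of the interpretation, so that $\sem{C[M]}$ is a function of $\sem{M}$ for every context $C$, and (ii) computational adequacy at ground type, namely that a closed boolean program evaluates operationally to $\true$ exactly when its denotation is the point mass $\delta_{\true}$ (and likewise for $\false$). Since the $\nu$-calculus is strongly normalizing, every closed boolean program reduces to a value, and because its outcome depends only on \emph{equalities} among names---and two names drawn by independent continuous samples collide with probability zero---that outcome is deterministic, so its denotation is genuinely a point mass. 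Adequacy itself I would prove by the usual logical-relations argument, now carried out in $\qbs$. Granting (i) and (ii), equal denotations yield equal denotations in every boolean context, hence identical observable outcomes, hence observational equivalence.

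The converse---observational equivalence implies $\sem{M}=\sem{N}$---is the crux, and I would argue it contrapositively: assuming $\sem{M}\neq\sem{N}$ for first-order $M,N$, I construct a distinguishing context. The first step is to put $M$ and $N$ into first-order normal forms, of the shape $\nu n_1\dots\nu n_k.\,V$ with $V$ a value whose name-tests have been simplified using that distinct fresh names are almost surely unequal. This exhibits $\sem{M}$ as a probability distribution on the relevant function space determined by finitely much combinatorial data: the joint distribution of the equality pattern between the argument and the generated names $n_1,\dots,n_k$, together with which of those names are returned. The restriction to first order is exactly what keeps this invariant finite.

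The second and main step is the measure-theoretic analysis of such distributions: I must show that the $\qbs$ interpretation records precisely this finite invariant and nothing more. Concretely, the denotation is insensitive to perturbations of a function on a set of arguments of measure zero---for instance $\nu n.\,\lambda x.(x = n)$ and $\lambda x.\,\false$ must receive the same denotation---because detecting such a perturbation would require an existential search over the continuous name space $\tnames$, and the set witnessing that search is analytic but not Borel, so the naive separating functional is not a $\qbs$ morphism at all. This is where descriptive set theory and the ``Borel on Borel'' machinery enter, and it is the novel analysis of distributions on function spaces promised in the abstract: one shows that every $\qbs$ morphism out of the first-order function space factors through the finite combinatorial invariant, so that two first-order denotations are equal if and only if their invariants agree.

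The third step is definability of tests. Having reduced $\sem{M}\neq\sem{N}$ to a genuine difference in the finite invariant---some equality pattern among inputs and returned names that occurs with different probability, or a fresh name returned by one term but not the other---I realize that difference by a concrete $\nu$-calculus context that generates probe names, applies the function, and compares the results, so that $C[M]$ and $C[N]$ produce different boolean outcomes. I expect the second step to be the principal obstacle: one must prove, uniformly across all first-order types, both an \emph{upper bound} (no $\qbs$ morphism separates distributions differing only by a measure-zero perturbation, via the analyticity argument) and a \emph{lower bound} (distributions with genuinely different invariants are separated, and separated by a \emph{definable} test rather than by some abstract morphism). Reconciling the abstract ``no spurious separations'' fact with the explicit construction of separating contexts, and handling the higher first-order types such as $\tnames \to \tnames$ and products---where returned fresh names carry nontrivial relational content---is the delicate part, and is what the normal-form reduction is designed to make tractable.
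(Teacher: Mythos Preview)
Your soundness direction is fine and matches the paper's (which simply invokes Stark's general adequacy theorem for categorical models, Thm.~\ref{thm:sound-adequate}, after checking the hypotheses hold in $\qbs$).

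For the hard direction, however, your contrapositive strategy diverges substantially from the paper and contains a real gap. The paper does \emph{not} argue by constructing distinguishing contexts, and it does \emph{not} attempt to characterize all $\qbs$ morphisms out of first-order function spaces. Instead it factors the problem through normal forms on both sides: syntactically, Stark's existing logical-relations theorem (Thm.~\ref{thm:logical-relation-observational-equivalence}) already gives $M_1\approx M_2 \iff M_1\,(\id_s)_\tau\,M_2$, and the paper refines this to $M_1\approx M_2 \iff \norm{M_1,s}=\norm{M_2,s}$ (Thm.~\ref{thm:observational-equivalence-normal-form}). Semantically, it then only needs $\sem{M}=\sem{\norm{M,s}}$. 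Crucially, this is \emph{not} done by analyzing the $\sigma$-algebra on each $\sem\tau$; rather, a single instance of privacy is proved at $P(2^\R)$ using Borel-on-Borel and Becker's inseparability theorem (Thm.~\ref{thm:becker}, Lemma~\ref{lemma:bob-singleton}), and then every first-order normalization step is reduced to that one equation by building an interpolating morphism $f:2^\T\times\T^{\neq s\oplus t}\to P(\sem\tau)$ with $f(\emptyset,-)=\sem{\norm{M,s}}$ and $f(\{g\},-)=\sem{M}$, exploiting the translation-invariant group structure on $\T$ (Prop.~\ref{prop:reduction}). Your step~3 is therefore unnecessary: definability of separating contexts never enters.

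The gap in your plan is step~2. Showing that ``every $\qbs$ morphism out of the first-order function space factors through the finite combinatorial invariant'' would require, for each type $\tau$, a direct analysis of $\Sigma_{P(\sem\tau)}$---and these spaces nest monads and exponentials (e.g.\ $P(\T\to P(\T\to P\T))$). The analytic-projection observation you cite rules out the specific map $\exists$, but does not by itself rule out \emph{all} separating morphisms; even at $2^\R$ the paper needs the stronger Lemma~\ref{lemma:bob-singleton}. Extending such a characterization uniformly to all first-order $\sem\tau$ is exactly the obstacle the paper sidesteps with its group-action reduction. Your outline gives no mechanism for that extension, and without it the argument does not close.
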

This is surprising because the simple \emph{non}-random models of the $\nu$-calculus, based on nominal sets~\cite[Ch.~9.6]{pitts:nominalsets} or functor categories~\cite[\S 5]{stark:cmln}, are \emph{not} fully abstract at first order~\cite[\S 5]{stark:cmln}.

\subsection{The \texorpdfstring{$\nu$}{Nu}-Calculus and its Observational Equivalence}
\label{sec:nucalc-intro}

The $\nu$-calculus (\S\ref{sec:names} and \cite{stark:whatsnew}) is a simply-typed $\lambda$-calculus with fresh name abstraction $\nu n.M$ in addition to $\lambda$-abstraction $\lambda x.M$. The idea is that $\nu n.M$ means ``generate a fresh name~$n$ and continue as $M$''. The $\nu$-calculus thus models name generation as used in various domains across computer science, including cryptography, distributed systems, and statistical modelling (see \S\ref{sec:relwork} for more background on name generation). Concretely, the $\nu$-calculus can also be viewed as a fragment of OCaml, where $\nu n.M$ abbreviates \lstinline|let n = ref() in M|, since a content-less reference is a pure name when there is no pointer arithmetic or comparison allowed.

The purpose of this paper is to give an interpretation of name generation in terms of randomness. The $\nu$-calculus already has a standard non-random operational semantics~\cite[\S2]{stark:whatsnew}, which induces a notion of observational equivalence $\approx$. For closed programs of ground type ($\tnamesintro,\tboolintro$), this is straightforward. For example, it includes the $\beta/\eta$ laws of the call-by-value $\lambda$-calculus, and also equations such as
\begin{equation}\label{eqn:namesdifferent}
  \nu m.\nu n.(m=n) \approx \false
\end{equation}
since any two separately generated names $m,n$ should be different.
Observational equivalence at first-order type ($\tnamesintro \to\tboolintro$, $\tboolintro\to\tboolintro\to\tnamesintro$, etc.), on the other hand,
is non-trivial in the $\nu$-calculus, because $\nu$'s and $\lambda$'s do not commute. For instance,
\begin{equation}\label{eq:lambdanu_commute}
  \nu n.\lambda x.n\not\approx \lambda x.\nu n.n\text.
\end{equation} So even at first order we can have complex nestings of $\nu$'s and $\lambda$'s.
In this paper we argue that a centerpiece of the first-order equational theory of the $\nu$-calculus is the following `privacy' equation~\cite[Ex.~4(2)]{stark:whatsnew}:
\begin{equation}
\label{eq:privnu}
  \nu n.\,\lambda x.(x=n)\ \ \approx\ \ \lambda x.\,\false \quad:\tnamesintro\to\tboolintro\text.
\end{equation}
On the left hand side, we generate a fresh name~$n$, and then return a function that takes an argument~$x$, and tests whether $x=n$. In this example, $n$ is chosen to be different from any name that the caller of the function knows, and the name is never revealed to the caller, and so, intuitively, it can never return $\true$. This is an example of an equation that is not validated by the standard nominal sets model, but it is validated by our QBS random model.

This aspect of name revelation is subtle, for instance, the program
\begin{equation}\label{eqn:call-twice}
  \nu m.\,\nu n.\,\lambda x.\,\ite{(x=m)} n m
\end{equation}
can reveal both $m$ and $n$, but it needs to be called twice to do this.
The random semantics takes care of this, as we explain.

\subsection{Probabilistic Programming and Name Generation as Randomness}
\label{sec:intro-probprog}

The idea of probabilistic programming (e.g.~\cite{mpyw-probprog-intro}) is to define complex probability distributions by writing programs. This is typically done by adding a \lstinline|sample| command to a $\lambda$-calculus, to allow primitive random draws. In the statistical setting, it is common to include continuous distributions over the real numbers, such as the normal distribution (Fig.~\ref{fig:normal-density}). For instance, the program
\begin{equation}\label{eqn:two-normal}
  \text{\lstinline|let x = sample(Normal(0,1)) in let y = sample(Normal(0,1)) in x+y|}
\end{equation}
is overall equivalent to sampling from a normal (Gaussian) distribution with mean~$0$ and variance~$2$.
The informal idea of this paper is to interpret $\nu n.M$ of the $\nu$-calculus as a probabilistic program:
\[\text{``}\nu n.\,M\ \ =\ \ \text{\lstinline|let n = sample(Normal(0,1)) in M|}\text{''}\]
so that freshly generated names are randomly sampled. A first observation is that any two draws from a normal distribution will almost surely be different, and so this interpretation validates~(\ref{eqn:namesdifferent}).

\begin{wrapfigure}[7]{r}{3.7cm}
  \pgfmathdeclarefunction{gauss}{2}{%
  \pgfmathparse{1/(#2*sqrt(2*pi))*exp(-((x-#1)^2)/(2*#2^2))}%
}\begin{tikzpicture}
\begin{axis}[
  domain=-3:3, samples=100,
  xlabel={}, ylabel={},
  axis lines=middle,
  ymax = 0.55,
  xmin = -4.2,
  xmax = 4.2,
  height=3cm, width=4.5cm,
  xtick={-4,-2,0,2,4}, ytick={0,0.5},
  ]
  \addplot [very thick,domain=-4.2:4.2,olive] {gauss(0,1)} ;
\end{axis}
\end{tikzpicture}
  \caption{Density of the normal distribution \lstinline|Normal(0,1)|.\label{fig:normal-density}}
\end{wrapfigure}
A probabilistic program involving sampling should be understood in terms of the histogram of results we see when we run the program a large number of times. To put it another way, the program~\eqref{eqn:two-normal} is a Monte Carlo description of the integral
$\iint k(x+y) \,\mathrm{d}y\,\mathrm{d}x$
where $\int$ denotes Lebesgue integration with respect to the normal probability measure and $k$ is some continuation function.
In this way, we may say, informally for now, that the random implementation of $\nu$-abstraction is also Lebesgue integration:
\[\textstyle\text{``}\nu n.\,M\ = \ \int M\,\mathrm{d}n\text{''}\]

As we will make precise in Sections~\ref{sec:intro-qbs} and~\ref{sec:probsemantics}, the measure-theoretic understanding of probability leads to full abstraction at first order. For a first glimpse, notice that in the $\nu$-calculus there is no definable function
\begin{equation}\label{eqn:exists}
  \exists :(\tnamesintro \to \tboolintro)\to \tboolintro
\end{equation}
such that $\exists (f)$ returns $\true$ if $f$ would ever return $\true$, as such a function would easily distinguish the programs in the privacy equation~\eqref{eq:privnu}. This function $\exists$ \emph{can} be defined in the nominal sets model (e.g.~\cite[\S2.5]{pitts:nominalsets}, \cite[eq.~2]{staton-local-state}), but is \emph{in}consistent with a measure-theoretic interpretation, as we now explain.
From this function $\exists$ we could easily define an expression
\[f:\tnamesintro \to \tnamesintro\to \tboolintro \ \ \vdash \ \ \lambda x.\,\exists(\lambda y.\,f\,x\,y)\ \  : \tnamesintro\to \tboolintro\]
which converts a subset of $(\tnamesintro \times \tnamesintro)$ to its existential projection as a subset of $(\tnamesintro)$.
In the setting of probability theory, we need to know that all definable expressions are measurable, so that integration can be used.
If we understand $(\tnamesintro)$ as the real numbers, and measurable subsets are Borel sets, as usual, then the projection of a Borel set is not necessarily Borel~\cite[14.2]{kechris}, and so the $\exists$ function~(\ref{eqn:exists}) cannot be in the model.
So our probabilistic interpretation of the $\nu$-calculus gives a new intuition for these privacy and definability issues.

\subsection{Quasi-Borel Spaces, Full Abstraction and Descriptive Set Theory}
\label{sec:intro-qbs}

A formalism that includes both measure theory and typed $\lambda$-calculus is quasi-Borel spaces (QBSs, \S\ref{sec:qbs} and \cite{heunen:qbs}). A QBS is a set $X$ together with a set of functions $M_X\subseteq[\mathbb{R}\to X]$ satisfying some conditions. The idea is to fix $\mathbb R$ as a source of randomness, and then $M_X$ describes the admissible random elements in $X$. For example,
for the QBS of booleans, we take $M_{\tboolintro}\subseteq [\mathbb R\to 2]$ to comprise the characteristic functions of Borel sets of $\mathbb R$, and for the QBS function space $[\trealintro\to \tboolintro]$ we take $M_{\trealintro\to \tboolintro}\subseteq [\mathbb R\to (\mathbb R\to 2)]$ to comprise the characteristic functions of Borel subsets of $\mathbb R^2$. In this way, we can interpret any $\nu$-calculus type as a QBS (\S\ref{sec:probsemantics}). Following the above discussion, we see that $\exists$ \eqref{eqn:exists} cannot be interpreted in QBSs.

We show our full abstraction theorem in this setting: two $\nu$-calculus programs of first-order type are observationally equivalent if and only if their interpretations in QBSs are equal (Thm.~\ref{thm:fullabstraction}). Our proof proceeds in three steps.
\begin{enumerate}
  \item We show that the privacy equation~(\ref{eq:privnu}) holds in QBS (\S\ref{sec:fullabstraction}). We have already mentioned that the $\exists$ function (\ref{eqn:exists}) cannot be defined in QBSs. The next step is to fully characterize the QBS space corresponding to $((\tnamesintro \to \tboolintro)\to \tboolintro)$. This turns out to correspond directly with the concept of `Borel-on-Borel' in descriptive set theory~\cite[\S18.B]{kechris}, and we use a pair of Borel inseparable sets to generalize the non-definability of $\exists$ and prove the privacy equation (Thm.~\ref{thm:privacy}).
  \item On the syntactic side, we give a normalization algorithm for observational equivalence at first order (\S\ref{sec:nf}, Thm.~\ref{thm:observational-equivalence-normal-form}). Our algorithm, which appears to be novel, refines a logical relations argument by Pitts and Stark~\cite{stark:whatsnew}, by identifying and eliminating all private names. This is non-trivial as, for instance,~(\ref{eqn:call-twice}) is already in normal form, but the similar program
  \[\nu m.\,\nu n.\,\lambda x.\,\ite{(x=m)} m n \qquad \text{ normalizes to } \qquad \nu n.\,\lambda x.\,n\text.\]
  Our construction simplifies the analysis of observational equivalence at first order (Thm.~\ref{thm:observational-equivalence-normal-form}). This also provides a general strategy for proving full abstraction (Thm.~\ref{thm:abstract-iff-normal-form-equal}).
  \item Returning to the semantic side, we show that the normalization steps are validated in the QBS model (\S\ref{sec:proof_fullabstraction}). The key idea here is that atomless measures such as the normal and uniform distributions are invariant under certain translations. We use this translation invariance to reduce our problem to the privacy equation~\eqref{eq:privnu}, and use this to prove full abstraction at first order (Thm.~\ref{thm:fullabstraction}). Our use of an invariant action on the space of names is similar to but distinct from nominal techniques~\cite[\S1.9]{pitts:nominalsets}; our action is internal to the model, and does not feature in its construction.
\end{enumerate}

In addition to proving full abstraction of the QBS semantics of the $\nu$-calculus at first order, we provide the first detailed investigation of the higher-typed function spaces in Borel-based probability theory (\S\ref{sec:privacy}, \S\ref{sec:structural}). The application of higher-order probabilistic methods is increasingly widespread in programming research (\S\ref{sec:related-hoprob} and \cite{sato:formal-verification,lew:tracetypes,plonk,ehrhard:cones,scibior}). We show that our programming-based development can alternatively be viewed in terms of recent categorical formulations of probability theory (\S\ref{sec:structural}). From this perspective, Bayesian inference (conditioning) is subtle in the higher-typed situation~(Prop.~\ref{prop:noconditionals}).
Intuitively, arbitrary conditioning
would mean that one could infer, from data as a function $(\tnamesintro\to \tboolintro)$, a posterior distribution on the names that the function privately uses, in violation of the privacy equation~(\ref{eq:privnu}).

In summary, through our full abstraction result (Thm.~\ref{thm:fullabstraction}), we formalize the relationship between random sampling and  fresh name generation, giving new perspectives on higher-order probability.

\section{Preliminaries on Name Generation and the \texorpdfstring{$\nu$}{Nu}-Calculus}
\label{sec:names}

In this section we recall the $\nu$-calculus~\cite{stark:whatsnew,stark:thesis}, which is a simple $\lambda$-calculus for name generation. We recall the syntax, the observational equivalence (\S\ref{sec:obseq}) and the denotational semantics (\S\ref{sec:names_catsemantics}).
Further discussion about name generation is in Section~\ref{sec:relwork}.

\begin{figure}
\small\centering
\begin{align*}
  \sigma,\tau &::= \tbool \,|\, \tnames \,|\, \sigma \to \tau
  \qquad\text{($\tbool$ and $\tnames$ abbreviate $(\mathsf{bool})$ and $(\mathsf{name})$ from \S\ref{sec:introduction} respectively.)}\\
M,N &::= x
\,|\, \true
\,|\, \false
\,|\, M = M
\,|\, M M
\,|\, \lambda x.M
\,|\, \nu n.M
\,|\, \ite M M M
\end{align*}
\vspace{-2mm}
\[
\infer[((x : \tau) \in \Gamma)]{\Gamma \vdash x : \tau}{} \qquad
\infer[(b = \true, \false)]{\Gamma \vdash b : \tbool}{}
\]
\vspace{-1mm}
\[
\infer{\Gamma \vdash \ite M {N_1} {N_2} : \tau}{\Gamma \vdash M : \tbool \quad \Gamma \vdash N_1 : \tau \quad \Gamma \vdash N_2 : \tau} \qquad
\infer{\Gamma \vdash (M=N) : \tbool}{\Gamma \vdash M : \tnames \quad \Gamma \vdash N : \tnames}
\]
\vspace{-1mm}
\[
\infer{\Gamma \vdash \nu x.M : \tau}{\Gamma,x\colon \tnames \vdash M:\tau} \qquad
\infer{\Gamma \vdash \lambda x.M : \sigma \to \tau}{ \Gamma, x:\sigma \vdash M : \tau} \qquad
\infer{\Gamma \vdash  M \, N : \tau}{\Gamma \vdash M : \sigma \to \tau \quad \Gamma \vdash N: \sigma}
\]
\caption{Grammar and typing rules for the $\nu$-calculus \cite[Table~1]{stark:whatsnew}.}
\Description{Grammar and typing rules for the $\nu$-calculus}
\label{fig:nucalculus}
\end{figure}

The types, syntax and typing judgements of the $\nu$-calculus are recalled in \cref{fig:nucalculus} \cite{stark:whatsnew}. The typing judgements are of the form $ \Gamma \vdash M : \tau$, where $\Gamma$ is a set of typed variables.

The types $\tbool,\tnames$ are called \emph{ground types}. Among higher types, we will pay special attention to \emph{first-order} function types, which are non-nested function types of the form  $\tau_1 \to \cdots \to \tau_n$ with each $\tau_i$ a ground type. As the $\nu$-calculus is call-by-value, first-order function types cannot be simplified by uncurrying and already contain considerable complexity. We elaborate this at the end of (\S\ref{sec:names_catsemantics}).



\subsection{Operational Semantics and Observational Equivalence}
\label{sec:obseq}

\begin{figure}
\small\centering
\[
  \infer{s \vdash V\Downarrow_\tau ()V }-
  \qquad
  \infer[m\neq n]{s\vdash (M=N)\Downarrow_\tbool (s_1\uplus s_2)\false}
  {s\vdash M\Downarrow_\tnames (s_1)m
    \quad
    s\vdash N\Downarrow_\tnames (s_2)n
  }
\qquad
  \infer{s\vdash (M=N)\Downarrow_\tbool (s_1\uplus s_2)\true}
  {s\vdash M\Downarrow_\tnames (s_1)m
    \quad
    s\vdash N\Downarrow_\tnames (s_2)m
  }
\]
\vspace{-1mm}
\[
\infer{s\vdash \ite M {N_\true}{N_\false} \Downarrow_\tau (s_1\uplus s_2)V'}
{s\vdash M\Downarrow_\tbool (s_1)V
  \quad
    (s\uplus s_1)\vdash N_V \Downarrow_\tau (s_2)V'}
  \qquad
  \infer[n\not\in s]{s\vdash \nu n.\,M\Downarrow_\tau (\{n\}\uplus s')V}
  {s\uplus\{n\}\vdash M\Downarrow_\tau (s')V}
\]
\vspace{-1mm}
\[
\infer{s\vdash M\, N\Downarrow_\tau (s_1\uplus s_2\uplus s_3)V'}
{s\vdash M\Downarrow_{\sigma \to \tau} (s_1)\lambda x.M'
  \quad
  (s\uplus s_1)\vdash N\Downarrow_{\sigma } (s_2)V
  \quad
 ( s\uplus s_1\uplus s_2)\vdash  M'[V/x]\Downarrow_\tau (s_3)V'
}\]
\caption{Evaluation relation for the $\nu$-calculus \cite[Table~2]{stark:whatsnew}.
  \label{fig:opsem}}
\Description{Evaluation relation for the $\nu$-calculus}
\end{figure}

The evaluation relation of the $\nu$-calculus is defined for terms with free variables of type $\tnames$, and no other free variables.
In this operational semantics, these variables are understood to be names that are generated in the course of running a program, and so they are assumed to be distinct, and we tend to use $m$ or $n$ for them.
If $s=\{n_1,\dots, n_k\}$ is a set of names and $\tau$ is a type, we define a set
\[\Exp_\tau(s)\ \defeq \ \Big\{M~|~n_1\colon \tnames,\dots n_k\colon \tnames \vdash M : \tau\Big\}\]
of expressions of type $\tau$ only involving the names $s$,
and we define the set $\Val_\tau(s)\subseteq \Exp_\tau(s)$ of values:
$\Val_\tau(s)=\{V\in\Exp_\tau(s)~|~V=\lambda x.M,\,V=\true,\,V=\false,\,V=n\}$.

If $s, t$ are sets of names, we write $s \oplus t$ to denote the \emph{disjoint union} of these names, which we can always form by renaming free names if necessary.

The big-step evaluation relation $s \vdash M \Downarrow_\tau (s')V$ is given in Figure~\ref{fig:opsem}, where $M \in \Exp_\tau(s)$ and $V \in \Val_\tau(s \oplus s')$, meaning $M$ evaluates to $V$ generating fresh names $s'$. Evaluation is terminating and deterministic up to choice of free names.
(We will not need to work directly with this evaluation relation very much in this paper, because we will build on existing methods for observational equivalence~\cite{stark:whatsnew,stark:cmln}, but we include it for completeness.)

Observational equivalence is defined in a standard way.
A \emph{boolean context $\mathcal C[\cdot]$ for type $\tau$}
is an expression $\mathcal C$ where some subexpressions are replaced by a placeholder,
such that if $M\in\Exp_\tau(s)$ then $\mathcal C[M]\in\Exp_\tbool(s)$.
Two terms $M_1,M_2\in \Exp_\tau(s)$ are \emph{observationally equivalent}, written $M_1 \approx_\tau M_2$, if for every boolean context $\mathcal{C}[\cdot]$ we have $\exists s'(s \vdash \mathcal{C}[M_1] \Downarrow_\tbool (s')\true) $ if and only if $ \exists s'(s \vdash \mathcal C[M_2] \Downarrow_\tbool (s')\true)$.

We have already given some examples of observational equivalences and inequivalences in Section~\ref{sec:nucalc-intro}. We illustrate the method a little more. To see that $\nu n.\lambda x.n \not \approx_{\tbool \to \tnames} \lambda x. \nu n. n$ (\ref{eq:lambdanu_commute}), consider the context $\mathcal C[-]=(\lambda f. (f\,\true) = (f\,\true))\,(-)$, which produces $\true$ for the first example and $\false$ for the right hand side.
On the other hand, an observational equivalence such as ${\nu n.\lambda x.(x=n)}\approx_{\tnames\to \tbool}\lambda x.\false$ (\ref{eq:privnu})
is a statement that quantifies over all contexts, and so requires a more elaborate method such as logical relations \cite[Example 5]{stark:whatsnew} or our random model (\S\ref{sec:privacy}).

We remark that the call-by-value semantics of the $\nu$-calculus form a central aspect of the intricacies of observational equivalence at first-order types. The $\lambda\nu$-calculus is a call-by-\emph{name} variation of the $\nu$-calculus \cite[\S9.4]{odersky:localnames,pitts:nominalsets}, and in that calculus, $\lambda$'s and $\nu$'s do commute \cite[Fig.~2]{odersky:localnames}, and then we can easily derive
\begin{equation}\label{eg:nu-lambda-commute}
  \nu n.\lambda x.(x=n) \approx_{\tnames \to \tbool} \lambda x.\nu n.(x=n) \approx_{\tnames \to \tbool} \lambda x.\false.
\end{equation}

\subsection{Categorical Semantics}
\label{sec:names_catsemantics}

The central definition of this paper is the random semantics of the $\nu$-calculus in Section~\ref{sec:probsemantics}.
Although this is a new semantics for the $\nu$-calculus, it is an instance of the very general categorical framework for $\nu$-calculus semantics given by Stark~\cite{stark:cmln}.
The rough idea is that one can interpret the $\nu$-calculus in any category with enough structure, by interpreting types as objects of the category and expressions as morphisms.

\paragraph{Metalanguages.}
This interpretation is clarified by using a metalanguage (aka internal language) to describe the morphisms of the category, and the way that they compose, instead of the traditional categorical composition notation (e.g.~\cite[\S I.10]{lambek-scott}). The metalanguage of cartesian closed categories allows us
to notate a morphism $A_1\times \dots \times A_n\to B$ as an expression
$x_1\colon A_1\dots x_n\colon A_n\vdash e :B$, and to use $\lambda$-notation and pairing to
manipulate the function spaces and products in the category.
Where the category also has a coproduct $1+1$, we can write the injections as $\vdash \true:1+1$ and $\vdash\false:1+1$, and the universal property of coproducts can be expressed in terms of an $\ite{/}{/}{}$ construction.
The interpretation of the $\nu$-calculus in a categorical model can be given by a translation from the $\nu$-calculus to this metalanguage.

\paragraph{Commutative Affine Monads.}
A strong monad $(T,[-],(-)^*)$ on a cartesian closed category $\mathbb C$
comprises an assignment of an object $T(A)$ for every object $A$ in $\mathbb C$,
a family of `return' morphisms $[-]:A\to T(A)$,
and a family of `bind' operations $(-)^*:T(B)^A\to T(B)^{T(A)}$, satisfying associativity and identity laws~\cite{moggi:computation_and_monads}.
In terms of the metalanguage, for any morphisms
described by expressions $\Gamma\vdash e : T(A)$
and $\Gamma,x\colon A \vdash e':T(B)$, we have a morphism
described by an expression
${\Gamma \vdash  \letin x e e'} : T(B)$ \cite{moggi:computation_and_monads}.
A strong monad is called
\emph{affine} and \emph{commutative} if the following \emph{discardability}~(\ref{eqn:affine}) and \emph{exchangeability}~(\ref{eqn:commutativity}) equations in the metalanguage are valid:
\begin{align}
  & \letin x e e'  \ = \ e' \hspace{4cm} \text{($x$ not free in $e'$)}\label{eqn:affine}\\
  & \begin{aligned}
      & \letin {x_1} {e_1}{\letin {x_2} {e_2} {e_3}} \ = \ \letin {x_2} {e_2}{\letin {x_1} {e_1} {e_3}}\\
      & \phantom{\letin x e e'  \ = \ e' \hspace{4cm} }\text{($x_1$ not free in $e_2$, $x_2$ not free in $e_1$).}
    \end{aligned}\label{eqn:commutativity}
\end{align}
Informally, affine means that we can discard any unused expressions, and is equivalent to $T(1)\cong 1$.
Commutativity means that we can exchange independent expressions~(e.g.~\cite{Kammar-Plotkin}).

\begin{definition}[\mbox{\cite[\S4.1]{stark:cmln}}]\label{def:catmodel}
  A \emph{categorical model of the $\nu$-calculus} comprises
  \begin{enumerate}
    \item a cartesian closed category $\mathbb C$ with finite limits;
    \item a strong monad $T$ on $\mathbb C$;
    \item a disjoint coproduct $B := 1+1$ of the terminal object with itself;
    \item a distinguished object of names $N$ with a decidable equality test $(=) : N \times N \to B$; and
    \item a distinguished morphism $\new : 1 \to T(N)$.
  \end{enumerate}
  We ask that this category satisfies the following additional axioms:
  \begin{enumerate}
    \item the monad $T$ is affine and commutative;
    \item the following equation holds in the metalanguage
      \begin{gather}
        \label{eq:freshness}
        m : N \vdash \letin n \new {[(n,m=n)]} \ = \ \letin n \new {[(n,\false)]} : T(N \times B). \tag{FRESH}
      \end{gather}
  \end{enumerate}
\end{definition}
The \eqref{eq:freshness} requirement allows us to reason within the metalanguage that any name generated with $(\new)$ is different from other names.
This definition references `disjoint coproducts' and `decidable equality', concepts from categorical logic, but we will not assume familiarity with these in the rest of the article except in the proof of Thm~\ref{thm:qbs-model}.

\paragraph{Denotational Semantics.}
In any categorical model we can interpret $\nu$-calculus types (Fig.~\ref{fig:nucalculus}) as objects, using the standard call-by-value translation into the monadic metalanguage: $\sem{\tbool} \defeq  B$, $\sem{\tnames} \defeq N$ and $\sem{\sigma \to \tau} \defeq \sem{\sigma} \to T\sem{\tau}$.
This is extended to contexts: $\sem\Gamma\defeq \prod_{(x\colon \tau)\in\Gamma}\sem \tau$.
A $\nu$-calculus expression $\Gamma\vdash M:\tau$ is routinely interpreted as a morphism
$\sem\Gamma \to T\sem \tau$ by induction on the structure of $M$ (Fig.~\ref{fig:densem}).

\begin{figure}\small
  \begin{align*}
    &\sem{\lambda x.M}\defeq [\lambda x.\sem M]
      \qquad
      \sem{x}\defeq [x]\qquad\sem{\true}\defeq [\true]\qquad\sem{\false}\defeq [\false]
    \\
    &
      \sem{M=N}\defeq \letin m {\sem M} {\letin  n {\sem N} {[m=n]}}
     \quad
    \sem{M\,N}\defeq \letin f {\sem M} {\letin  x {\sem N} {f(x)}}
     \\
    &
      \sem{\nu x.M}\defeq \letin x \new {\sem M}
    \quad\ 
    \sem{\ite M {N_1}{N_2}}
      \defeq \letin b {\sem M} \ite b {\sem {N_1}}{\sem{N_2}}
  \end{align*}
  \caption{Interpretation of $\nu$-calculus expressions in a categorical model, using its metalanguage~\cite[Fig.~5]{stark:cmln}. \label{fig:densem}}
  \Description{Interpretation of $\nu$-calculus expressions in a categorical model, using its metalanguage}
\end{figure}

Using the categorical limits and the equality test on $N$,
we can build a subobject $N^{\neq s} \rightarrowtail N^s$ for all finite sets $s$, modelling the assumption $(\neq s)$ of distinct names.
Formally, $N^{\neq s}$ is the equalizer of
$(n:N^s\vdash \bigvee_{i\neq j} (n_i= n_j):B)$ and $(n:N^s\vdash \false:B)$.
For expressions $M\in \Exp_\tau(s)$, we will typically use the restricted interpretation $\sem{M}_{\neq s} : N^{\neq s}\rightarrowtail N^s\xrightarrow {\sem M} T\sem{\tau}$.

We note that values $V\in \Val_\tau(s)$ factor through $[-]_{\sem{\tau}}:\sem{\tau}\to T\sem{\tau}$, i.e.~we can assume $\sem V:N^{\neq s}\to \sem\tau$. Intuitively, the values do not need a top-level monad because they do not generate fresh names.

Any categorical model according to Definition~\ref{def:catmodel} is sound and, under mild assumptions, adequate:
\begin{theorem}[{\cite[Prop.~1--4]{stark:cmln}}]\label{thm:sound-adequate}
  For any categorical model of the $\nu$-calculus:
  \begin{itemize}
    \item The big-step semantics is sound with respect to the denotational semantics: If $s \vdash M \Downarrow_\tau (s')V$ then $\sem{M}_{\neq s} = \sem{\nu s'.V}_{\neq s}$.
    \item If $1$ is not an initial object and $[-]_B:B\to T(B)$ is monic, then the denotational semantics is adequate for observational equivalence:
      If $\sem{M_1}_{\neq s} = \sem{M_2}_{\neq s}$ then $M_1 \approx_\tau M_2$, for all expressions $M_1, M_2\in\Exp_\tau(s)$.
  \end{itemize}
\end{theorem}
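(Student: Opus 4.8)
The plan is to prove the two bullets separately: the first by induction on evaluation derivations, the second by compositionality reducing to the first, with both relying on the monad laws together with the four model axioms (affineness~\eqref{eqn:affine}, commutativity~\eqref{eqn:commutativity}, disjointness of $B = 1+1$, and~\eqref{eq:freshness}).

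For \textbf{soundness}, I would induct on the derivation of $s \vdash M \Downarrow_\tau (s')V$, with one case per rule of \cref{fig:opsem}. The prerequisite is a routine \emph{substitution lemma}, proved by induction on the term, stating that $\sem{M'[V/x]}$ is the composite of $\sem{M'}$ with the value morphism $\sem V : N^{\neq s} \to \sem\sigma$; this is exactly what the application rule consumes once its three premises have been rewritten by the induction hypothesis. The value rule is immediate, and $\nu n.M$ just prepends the generated name, matching $\sem{\nu x.M} = \letin x \new {\sem M}$. The if-then-else and application cases then follow by unfolding the $\mathtt{let}$-notation and applying the associativity and identity laws of $T$ to thread the intermediate values through, together with the universal property of $B = 1+1$ for the branch selection.

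The genuinely load-bearing case, and where I expect essentially all the difficulty to concentrate, is the equality rule, together with the pervasive bookkeeping of the disjoint name sets $s_1 \uplus s_2$. After rewriting the premises by the induction hypothesis, $\sem{M=N}_{\neq s}$ becomes a nested $\mathtt{let}$ that generates the fresh names of $s_1$ and $s_2$ and then tests $m = n$. To collapse this to $[\false]$ (resp.\ $[\true]$) I would repeatedly invoke commutativity~\eqref{eqn:commutativity} to make the two $\new$-draws adjacent and independent, affineness~\eqref{eqn:affine} to discard any generated name not appearing in the result, and crucially~\eqref{eq:freshness} to establish that a freshly drawn name differs from every previously known name; when both $m,n$ already lie in the context $s$, the required distinctness comes instead from the restriction to the equalizer $N^{\neq s} \rightarrowtail N^s$. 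Arranging the term so that~\eqref{eq:freshness} becomes applicable, and tracking which names survive into the result, is the delicate part.

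For \textbf{adequacy}, suppose $\sem{M_1}_{\neq s} = \sem{M_2}_{\neq s}$ and let $\mathcal C[\cdot]$ be any boolean context. Since the interpretation of \cref{fig:densem} is defined by structural induction, the denotation of $\mathcal C[M_i]$ depends on $M_i$ only through $\sem{M_i}$, so $\sem{\mathcal C[M_1]}_{\neq s} = \sem{\mathcal C[M_2]}_{\neq s}$. As evaluation is terminating and deterministic, $s \vdash \mathcal C[M_i] \Downarrow_\tbool (s_i')V_i$ with $V_i \in \{\true,\false\}$, and soundness gives $\sem{\mathcal C[M_i]}_{\neq s} = \sem{\nu s_i'.V_i}_{\neq s}$; discardability~\eqref{eqn:affine} then removes the unused generated names $s_i'$, leaving the constant $[V_i]$. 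Hence $[V_1] = [V_2]$, and it remains to separate the two boolean answers. This is precisely what the hypotheses buy: disjointness of the coproduct $B = 1+1$ with $1$ not initial makes $\true \neq \false : 1 \to B$, and monicity of $[-]_B$ propagates this to $[\true] \neq [\false]$ in $T(B)$ (noting that $N^{\neq s}$ is non-initial in the models of interest, so the constant maps out of it remain distinct). Therefore $V_1 = V_2$, so $\mathcal C[M_1]$ evaluates to $\true$ exactly when $\mathcal C[M_2]$ does; as $\mathcal C$ was arbitrary, $M_1 \approx_\tau M_2$.
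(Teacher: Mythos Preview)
The paper does not prove this theorem; it is imported from Stark~\cite{stark:cmln} (Propositions~1--4) and stated without argument. Your outline is the standard one and matches what Stark does: soundness by induction on evaluation derivations, using the monad laws for sequencing, commutativity/affineness for the disjoint-name bookkeeping, and \eqref{eq:freshness} together with the equalizer $N^{\neq s}$ for the equality rules; adequacy by compositionality, then soundness, then separating $[\true]$ from $[\false]$.

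The one step that is not merely routine is your parenthetical ``noting that $N^{\neq s}$ is non-initial in the models of interest''. Nothing in Definition~\ref{def:catmodel} guarantees this, and you need $N^{\neq s}\to 1$ to be epic in order to conclude that the two constant morphisms $N^{\neq s}\to T(B)$ remain distinct. The clean way to close the gap is to eliminate $s$ altogether: iterating $\new$ and applying \eqref{eq:freshness} at each step shows that the resulting map $1\to T(N^s)$ factors through $T(N^{\neq s})$; binding this in front of $\sem{\mathcal C[M_i]}_{\neq s}$ yields $\sem{\nu s.\,\mathcal C[M_i]} : 1 \to T(B)$. Soundness for the closed term $\nu s.\,\mathcal C[M_i]$ together with affineness then gives $[V_1]=[V_2]:1\to T(B)$, and at that point the stated hypotheses ($1$ not initial, disjointness of $B=1+1$, $[-]_B$ monic) are exactly what is needed to conclude $V_1=V_2$.
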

In Section~\ref{sec:nu} we survey the examples categorical models of the $\nu$-calculus from the literature. In Section~\ref{sec:probsemantics} we show that quasi-Borel spaces form a categorical model.

Categorical models need not identify observationally equivalent terms at higher types. The simplest example of such an equivalence is the privacy equation~\eqref{eq:privnu}, whose translation into the metalanguage is
\begin{align}
  \letin {a}{\new}{[\lambda x.[x = a]]} = [\lambda x.[\false]] : T\sem{\tnames \to \tbool} = T(N \to TB). \label{eq:rawpriv-c}
\end{align}

The metalanguage has extra types such as $(N\Rightarrow B)$ which are not
the interpretation of $\nu$-calculus types.
So in the metalanguage it is possible to consider the following simpler variation of \eqref{eq:rawpriv-c}:
\begin{align}
  \letin {a}{\new}{[\lambda x.(x = a)]} = [\lambda x.\false] : T(N \to B). \label{eq:priv-c}\tag{PRIV}
\end{align}
Note that \eqref{eq:priv-c} straightforwardly implies \eqref{eq:rawpriv-c} in the metalanguage.
So to prove the privacy observational equivalence \eqref{eq:privnu}, it is sufficient to find a categorical model that satisfies \eqref{eq:priv-c}.
In Section~\ref{sec:privacy} we show that quasi-Borel spaces satisfy \eqref{eq:priv-c}.
We discuss other models in Section~\ref{sec:nu}, in particular, neither~\eqref{eq:priv-c} nor~\eqref{eq:rawpriv-c} are satisfied in the nominal sets model (\ref{eqn:ctx-ex}).

We remark that because of the call-by-value semantics of $\nu$-calculus, first-order functions already exhibit an interesting degree of complexity that cannot be simplified by uncurrying. At type $\sem{\tnames \to (\tnames \to \tbool)} = T(\tnames \to T (\tnames \to T \tbool))$, name-generation effects may occur at three different stages, unlike in the uncurried version $T(\tnames \times \tnames \to T \tbool)$.

\section{Higher-Order Probability}
\label{sec:hoprob}

The central new definition of this paper is the random model of the
$\nu$-calculus based on quasi-Borel spaces.
We recall Borel spaces in Section~\ref{sec:measureable-spaces}, quasi-Borel spaces in Section~\ref{sec:qbs}, and then explain the model in Section~\ref{sec:probsemantics}, in prepration for the full abstraction result in Section~\ref{sec:fullabstraction}.

\subsection{Rudiments of Measurable Spaces}
\label{sec:measureable-spaces}

Probability spaces are traditionally defined in terms of measurable spaces \cite{kallenberg,pollard}. A \emph{measurable space} is a set $X$ together with a $\sigma$-algebra $\Sigma_X$ on $X$. We call a set $U \subseteq X$ \emph{measurable} if $U \in \Sigma_X$. A function $f : X \to Y$ between measurable spaces is \emph{measurable} if for all measurable $A \subseteq Y$, the set $f^{-1}(A)$ is measurable in $X$.

The measurable spaces and measurable functions form the category $\meas$. This category has products given by equipping $X \times Y$ with the product $\sigma$-algebra $\Sigma_X \otimes \Sigma_Y$.

The \emph{Borel $\sigma$-algebra $\Sigma_\R$} is the $\sigma$-algebra on $\R$ generated by the open intervals. We will always consider $\mathbb R$ as a measurable space with the Borel $\sigma$-algebra. We say a measurable space $X$ is \emph{discrete} if $\Sigma_X = \mathcal P(X)$, where $\mathcal P(X)$ denotes the power set of $X$.

A \emph{measure} on a measurable space $X$ is a $\sigma$-additive map $\mu : \Sigma_X \to [0, \infty]$ with $\mu(\emptyset) = 0$. It is \emph{finite} if $\mu(X) < \infty$, \emph{$s$-finite} if it is the countable sum of finite measures, and a \emph{probability measure} if $\mu(X) = 1$. A \emph{probability space} $(X, \mu)$ is a measurable space $X$ and a fixed probability measure $\mu$ on $X$.
If $\mu$ is a probability measure on $X$ and $f : X \to Y$ is measurable, then the \emph{pushforward measure} $f_*\mu$ on $Y$ is defined by $f_*\mu(U) = \mu(f^{-1}(U))$ for $U \in \Sigma_Y$. If $f\colon X\to \R$, then we can find the Lebesgue integral $\int_Xf(x)\,\dd \mu x\in \R$.

There is a monad $\mathcal G : \meas \to \meas$ due to \cite{giry} that assigns to $X$ the space of probability measures $\mathcal{G}X$ on $X$, with the $\sigma$-algebra generated by the maps $\mu \mapsto \mu(U)$ for all $U \in \Sigma_X$. The unit of this monad is the Dirac distribution $X \to \mathcal GX, x \mapsto \delta_x$. The bind of this monad consists of the averaging of measures, so that if $f: X \to \mathcal{G}Y$, we get the map $f^*: \mathcal{G}X \to \mathcal{G}Y$ taking $\mu \in \mathcal{G}X$ to the measure $f^*(\mu)(U) = \int_X f(x)(U)\,\dd\mu x$ on $Y$.
In the metalanguage, we can regard $\letin x \mu {f(x)}$ ($=f^*(\mu)$) as a generalized integral
$\int f(x)\,\dd \mu x$.
This monad is strong and commutative~(\ref{eqn:commutativity}), which is a categorical way to state Fubini's theorem \cite[1.27]{kallenberg}.
The monad is moreover affine~(\ref{eqn:affine}), since in general $g(y)=\int g(y)\,\dd\mu x$ for a probability measure~$\mu$.

When a probability space $(\Omega, \mu)$ is fixed, we say a \emph{random variable} $A$ with values in $X$ is a measurable map $A : \Omega \to X$. Two random variables $A,B$ are said to be \emph{equal in distribution}, written $A \overset{d}{=} B$, if they have the same law, i.e.~$A_*\mu=B_*\mu$ on $X$.

The spaces $\R$ and $[0, 1]$ are part of an important class of well-behaved measurable spaces called the standard Borel spaces. A \emph{standard Borel space} is a measurable space that is either countable and discrete or measurably isomorphic to $\R$ with the Borel $\sigma$-algebra. Note that this is not the usual definition of standard Borel spaces, which can be found in \cite[\S12.B]{kechris} and is equivalent to the one above. In particular, the definition of a standard Borel space ignores any underlying topology.

We refer to measurable subsets of standard Borel spaces as \emph{Borel sets}, measurable maps between standard Borel spaces as \emph{Borel measurable} and denote the full subcategory of standard Borel spaces by $\sbs$.

The standard Borel spaces form a well behaved full subcategory of $\meas$ closed under taking countable products and coproducts and the Giry monad. Additionally, Borel subsets of standard Borel spaces are standard Borel \cite[\S12.B, 13.4, 17.23]{kechris}.

Given a standard Borel space $X$, we call a probability measure $\mu$ on $X$ \emph{atomless} if $\mu(\{x\}) = 0$ for all $x \in X$. We have the following isomorphism theorem for standard Borel spaces with atomless probability measures:

\begin{theorem}[{\cite[17.41]{kechris}}]\label{thm:iso-meas}
  Let $\rho$ be the uniform measure on $[0, 1]$. If $X$ is a standard Borel space and $\mu$ an atomless measure on $X$, then there is a Borel measurable isomorphism $f: [0, 1] \to X$ such that $f_* \rho = \mu$.
\end{theorem}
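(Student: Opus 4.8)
The plan is to reduce to the case $X=[0,1]$, build a measure-preserving Borel bijection between \emph{conull} subsets using the cumulative distribution function, and then upgrade this to a genuine Borel isomorphism of the whole spaces by a cardinality-balancing argument on the exceptional null sets.

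First I would reduce to $[0,1]$. Since $\mu$ is atomless it cannot be carried by a countable set, so $X$ is uncountable; by the classification of standard Borel spaces $X$ is Borel isomorphic to $\R$, hence to $[0,1]$, via some Borel isomorphism $h\colon X\to[0,1]$. Pushing forward, $\nu\defeq h_*\mu$ is again atomless (points map to points under $h$), and if I can produce a Borel isomorphism $g\colon[0,1]\to[0,1]$ with $g_*\rho=\nu$, then $f\defeq h^{-1}\circ g$ solves the problem, since $f_*\rho=(h^{-1})_*\nu=\mu$. So it suffices to realize an arbitrary atomless probability measure $\nu$ on $[0,1]$ as a pushforward of $\rho$ along a Borel automorphism.

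Next I would use the CDF $F(t)\defeq\nu([0,t])$. Atomlessness makes $F$ continuous, and $F$ is nondecreasing with $F(0)=0$, $F(1)=1$, hence surjective. A direct computation on intervals $[0,s]$ shows $F_*\nu=\rho$. The map $F$ fails to be injective exactly on the union $Z$ of the maximal intervals on which it is constant; each such interval is $\nu$-null and there are countably many, so $Z$ is a $\nu$-null Borel set and $F(Z)$ is a countable (hence $\rho$-null, Borel) set. Restricted to $A_0\defeq[0,1]\setminus Z$, the map $F$ is a continuous injection onto the conull Borel set $B_0\defeq[0,1]\setminus F(Z)$; by Lusin--Souslin (continuous injective Borel images are Borel, with Borel inverse, \cite[15.1, 15.2]{kechris}) its inverse $g_0\defeq(F|_{A_0})^{-1}\colon B_0\to A_0$ is a Borel isomorphism satisfying $(g_0)_*(\rho|_{B_0})=\nu|_{A_0}$. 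Thus I obtain a measure-preserving Borel isomorphism between the $\rho$-conull set $B_0$ and the $\nu$-conull set $A_0$.

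The remaining, and main, obstacle is upgrading $g_0$ to a Borel automorphism of all of $[0,1]$: the exceptional null sets $F(Z)$ and $Z$ need not have the same cardinality (indeed $F(Z)$ may be countable while $Z$ contains intervals), so I cannot directly biject them. I would first balance their cardinalities: since $\rho$ is atomless I can choose an uncountable $\rho$-null Borel set $C\subseteq B_0$, whose image $g_0(C)\subseteq A_0$ is an uncountable $\nu$-null Borel set; replacing $B_0,A_0$ by $B_0\setminus C$, $A_0\setminus g_0(C)$ enlarges both exceptional sets to uncountable Borel sets, which therefore have cardinality $2^{\aleph_0}$. Being uncountable standard Borel spaces (Borel subsets of standard Borel spaces are standard Borel), the two exceptional sets are Borel isomorphic via some $g_1$; gluing $g\defeq g_0$ on $B_0\setminus C$ and $g\defeq g_1$ on the exceptional set yields a Borel bijection of $[0,1]$ with Borel inverse. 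Since $g$ was altered only on $\rho$-null sets and $g_0$ was already measure preserving, $g_*\rho=\nu$, completing the argument. The delicate points to check are the Lusin--Souslin invocation and verifying that the glued map and its inverse are Borel and that the measure is genuinely unchanged off the null sets.
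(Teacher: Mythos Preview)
The paper does not give its own proof of this theorem: it is stated with a citation to \cite[17.41]{kechris} and used as a black box. Your reconstruction is correct and is essentially the standard argument found in Kechris --- reduce to $[0,1]$ by the Borel isomorphism theorem, use the CDF to get a measure-preserving Borel isomorphism between conull subsets, then repair the null exceptional sets by padding with an uncountable null Cantor set so that both leftover pieces become uncountable standard Borel and hence Borel isomorphic. The points you flag as delicate (Lusin--Souslin for the inverse of $F|_{A_0}$, Borelness of the glued bijection, and that altering on null sets preserves $g_*\rho=\nu$) are all routine.
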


\begin{example}\label{eg:atomless-sbs}
  The following are examples of familiar standard Borel spaces with atomless probability measures:
  \begin{enumerate}
    \item The space $\R$ of real numbers with the Gaussian distribution.
    \item The Cantor space $2^\N$, which can be viewed as the space of infinite sequences of coin flips, with the measure generated uniformly on the basic open sets: $\mu(\{s \in 2^\N : a \subseteq s\}) = 2^{-|a|}$, where $a$ is a finite sequence of flips.
    \item The circle $\T = [0, 1)$ (one-dimensional torus) with the uniform measure.
  \end{enumerate}
  By \cref{thm:iso-meas}, these are all isomorphic as probability spaces.
\end{example}

We note that a standard Borel space admitting an atomless probability measure is necessarily uncountable and in bijection with $\R$.

Measurable spaces are satisfactory for first-order probabilistic programming \cite{kozen-probprog,staton:sfinite}, but a result of Aumann shows that they fail to accommodate higher-order functions.

\begin{theorem}[Aumann {\cite{aumann}}]\label{thm:aumann}
  There is no $\sigma$-algebra on the space $2^\R$ of measurable functions $\R \to 2$ such that the evaluation map $2^\R \times \R \to 2$ is measurable.
\end{theorem}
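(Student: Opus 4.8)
The plan is to argue by contradiction, converting the hypothetical $\sigma$-algebra into a genuine Borel set in a standard Borel product whose sections exhaust all Borel subsets of $\R$ --- a ``universal Borel set'' --- and then to diagonalize. Suppose $\Sigma$ is a $\sigma$-algebra on $2^\R$ (the Borel-measurable functions $\R \to 2$, equivalently the Borel subsets of $\R$) for which the evaluation $\mathrm{ev} : 2^\R \times \R \to 2$ is $\Sigma \otimes \Sigma_\R$-measurable, and set $E \defeq \mathrm{ev}^{-1}(1) = \{(f,r) : f(r) = 1\} \in \Sigma \otimes \Sigma_\R$. The sections $E_f = f^{-1}(1)$ range over \emph{all} Borel subsets of $\R$ as $f$ varies, and distinct $f$ give distinct sections. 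I emphasize that the naive cardinality argument that settles the analogous statement for \emph{all} functions $\R \to 2$ is unavailable here: the measurable functions number only $2^{\aleph_0}$, matching the number of atoms a countably generated $\sigma$-algebra can have, so the contradiction must come from descriptive set theory rather than cardinality alone.

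First I would reduce to a countably generated subalgebra. Since every member of a product $\sigma$-algebra lies in $\Sigma_0 \otimes \Sigma_\R$ for some countably generated $\Sigma_0 = \sigma(\{C_n\}_n) \subseteq \Sigma$, we may assume $E \in \Sigma_0 \otimes \Sigma_\R$. A routine ``good sets'' argument shows that $f \mapsto E_f$ is constant on the atoms of $\Sigma_0$; since $f \mapsto f^{-1}(1)$ is injective, these atoms are singletons. Hence $\theta(f) \defeq (1_{C_n}(f))_n$ is an injection $2^\R \to 2^\N$ that is a Borel isomorphism onto its image $S \defeq \theta(2^\R) \subseteq 2^\N$, equipped with the subspace $\sigma$-algebra.

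Next I would transport $E$ to the Cantor space. Using the trace identity $(\mathcal B(2^\N) \otimes \Sigma_\R)|_{S \times \R} = \mathcal B(S) \otimes \Sigma_\R$ together with $\mathcal B(2^\N) \otimes \Sigma_\R = \mathcal B(2^\N \times \R)$, the image of $E$ under $\theta \times \id$ extends to an honest Borel set $\hat E \subseteq 2^\N \times \R$ with $\hat E_{\theta(f)} = f^{-1}(1)$. Thus every Borel subset of $\R$ occurs as a section $\hat E_s$ for some $s \in S$. Finally I would diagonalize: fixing a Borel isomorphism $\iota : \R \to 2^\N$ (Borel isomorphism theorem, \cite{kechris}), the set $D \defeq \{r \in \R : (\iota(r), r) \notin \hat E\}$ is Borel, so $D = \hat E_{s_0}$ for some $s_0 \in S$; writing $r_0 = \iota^{-1}(s_0)$ yields $r_0 \in D \iff r_0 \notin \hat E_{\iota(r_0)} = \hat E_{s_0} = D$, the desired contradiction. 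Alternatively, one can invoke the properness of the Borel hierarchy: all sections of a Borel $\hat E$ lie at a single, fixed level of the hierarchy, whereas Borel subsets of $\R$ occur at unboundedly high levels.

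The main obstacle is the middle step --- faithfully converting the abstract hypothesis into genuine descriptive set theory. Everything hinges on collapsing $\Sigma$ to a countably generated subalgebra, recognizing its atoms as singletons, and thereby realizing $E$ as the restriction of a \emph{Borel} subset of the standard Borel space $2^\N \times \R$; the trace-$\sigma$-algebra bookkeeping is where care is needed. Once $\hat E$ is in hand the diagonalization is entirely routine, mirroring the classical proof that no universal Borel set exists.
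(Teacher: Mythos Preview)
The paper does not give its own proof of this theorem: it is stated with attribution to Aumann and used as a black box, so there is nothing to compare your argument against. Your proof is correct and follows the standard route --- reduce $E$ to a countably generated sub-$\sigma$-algebra, observe that its atoms must be singletons because $f\mapsto E_f=f^{-1}(1)$ is injective, embed into $2^\N$ via the characteristic map $\theta$, use the trace identity to realise the image of $E$ as the restriction of a Borel $\hat E\subseteq 2^\N\times\R$, and then diagonalize against a Borel isomorphism $\R\cong 2^\N$. This is essentially Aumann's original argument, and your bookkeeping for the trace step is sound (the identity $(\mathcal A|_S)\otimes\mathcal B=(\mathcal A\otimes\mathcal B)|_{S\times Y}$ holds for arbitrary $S$, so no Borelness of $S$ is needed). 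The alternative ending via boundedness of the Borel rank of sections is also valid and perhaps slightly slicker.
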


We note that $2^\R$ can be identified with the set $\Sigma_\R$ of Borel sets in $\R$, and in this case the evaluation map $2^\R \times \R \to 2$ is simply the inclusion check $(B, x) \mapsto B \ni x$.

\subsection{Preliminaries on Quasi-Borel Spaces}
\label{sec:qbs}

Quasi-Borel spaces~\cite{heunen:qbs} are a convenient setting including both measure theory and higher-typed function spaces that are increasingly widely used (e.g.~\cite{scibior,lew:tracetypes,plonk,sato:formal-verification}).
They work by first restricting probability theory to the well-behaved domain of standard Borel spaces (\S\ref{sec:measureable-spaces}). They then provide a conservative extension to function spaces, achieving cartesian closure.
(We survey other models of higher-order probability in Section~\ref{sec:related-hoprob}.)

\begin{definition}[\cite{heunen:qbs}]
  A \emph{quasi-Borel space} is a set $X$ together with a collection $M_X$ of distinguished functions $\alpha : \R \to X$ called \emph{random elements}. The collection $M_X$ must satisfy
  \begin{enumerate}
    \item for every $x \in X$, the constant map $\lambda r.x$ lies in $M_X$;
    \item if $\alpha \in M_X$ and $\varphi : \R\to \R$ is Borel measurable, then $\alpha \circ \varphi \in M_X$; and
    \item if $\{A_i\}_{i=1}^\infty$ is a countable Borel partition of $\R$ and $\alpha_i \in M_X$ are given, then the case-split $\alpha(r) = \alpha_i(r)$ for $r \in A_i$ lies in $M_X$.
  \end{enumerate}
  A map $f : X \to Y$ between quasi-Borel spaces is a morphism if for all $\alpha \in M_X$ we have $f \circ \alpha \in M_Y$. This defines a category $\qbs$.
\end{definition}

We consider the reals with a canonical quasi-Borel structure $M_\R = \meas(\R,\R)$. Under that definition, any other quasi-Borel space $X$ satisfies $M_X = \qbs(\R,X)$. Similarly, we obtain a quasi-Borel structure on the space of booleans by taking $M_2 = \meas(\R,2)$ where $2$ is the two-point standard Borel space. This has the structure of a coproduct $2 \cong 1 + 1$.

The category $\qbs$ is cartesian closed, and we have $Y^X = \qbs(X,Y)$. By cartesian closure, a map $\R \to Y^X$ is a random element iff its uncurrying $\R \times X \to Y$ is a morphism.
For example, $2^\R$ comprises the characteristic functions of Borel subsets of $\R$, and the random elements $\R\to 2^\R$ are the curried characteristic functions of Borel subsets of $\R^2$.

Any quasi-Borel space $(X,M_X)$ can be equipped with a $\sigma$-algebra
$\Sigma_{M_X}=\qbs(X,2)$, where we identify subsets with their characteristic functions;
equivalently, $\Sigma_{M_X}$ is the greatest $\sigma$-algebra making the random elements measurable.

We now define probability theory in this new setting. Given a probability measure $\mu \in \mathcal G(\R)$ and $\alpha \in M_X$, we can push forward the randomness from $\R$ onto $X$, obtaining a distribution on $X$. The definition of the induced $\sigma$-algebra $\Sigma_{M_X}$ makes sure this pushforward is well-defined.

\begin{definition}[\cite{heunen:qbs}]
  A \emph{probability distribution} on a quasi-Borel space $X$ is an equivalence class $[\alpha,\mu]_\sim$, where $\alpha \in M_X, \mu \in \mathcal G(\R)$ and $(\alpha,\mu) \sim (\alpha',\mu')$ if $\alpha_*\mu = \alpha'_*\mu' \in \mathcal G(X,\Sigma_{M_X})$.
\end{definition}

We note that the significance of the induced $\sigma$-algebra on a quasi-Borel space $X$ is to give a notion of \emph{equality of distributions} on $X$, which is simply extensional equality of the pushforward measures.

There is a Giry-like strong monad $P$ on $\qbs$ which sends $X$ to the space $P(X)$ of probability distributions on $X$, endowed with the quasi-Borel structure
\begin{equation*}
  M_{P(X)} = \{\beta: \R \to P(X) \mid \exists \alpha \in M_X, g: \R \to \mathcal{G}\R ~\text{measurable s.t.}~ \beta(r) = [\alpha, g(r)]_\sim\}.
\end{equation*}
For $x \in X$, one can form the Dirac distribution $\delta_x$ on $X$ by taking $\delta_x = [\lambda r.x, \mu]_\sim$ for any $\mu \in \mathcal{G}\R$. This forms the unit of the monad. On the other hand, given $f: X \to P(Y)$ and $[\alpha, \mu]_\sim \in P(X)$, we have $f \circ \alpha \in M_{P(Y)}$ so there is some $\beta \in M_Y$ and $g: \R \to \mathcal{G}\R$ such that $f \circ \alpha(r) = [\beta, g(r)]_\sim$. We define a measure on $Y$ by taking $f^*([\alpha, \mu]_\sim) = [\beta, g^*(\mu)]_\sim$. This forms the bind of the monad.

Finally, we note that all of probability theory over standard Borel spaces is the same whether done in $\meas$ or $\qbs$.
\begin{proposition}[Conservativity {\cite[Prop. 19, 22]{heunen:qbs}}]\label{prop:conservativity}
  Any measurable space $(X,\Sigma_X)$ can be regarded as a quasi-Borel space $(X,M_{\Sigma_X})$, where $M_{\Sigma_X}=\meas(\R,X)$.
  This restricts to a full and faithful embedding $\sbs \to \qbs$ of standard Borel spaces into quasi-Borel spaces that preserves countable products, coproducts and the probability monad.
\end{proposition}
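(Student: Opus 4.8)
The plan is to treat the two clauses separately, with the crux being a comparison of the induced $\sigma$-algebra $\Sigma_{M_{\Sigma_X}}$ with the original $\sigma$-algebra $\Sigma_X$. First I would check that $(X, M_{\Sigma_X})$ with $M_{\Sigma_X} = \meas(\R, X)$ genuinely satisfies the three quasi-Borel axioms. Each is immediate from standard closure properties of measurable functions: constant maps are measurable (axiom 1); the composite $\alpha \circ \varphi$ of a measurable $\alpha$ with a Borel map $\varphi : \R \to \R$ is measurable (axiom 2); and for a countable Borel partition $\{A_i\}$ with each $\alpha_i$ measurable, the case-split $\alpha$ has $\alpha^{-1}(U) = \bigcup_i (A_i \cap \alpha_i^{-1}(U))$ Borel for every measurable $U$, so $\alpha$ is measurable (axiom 3). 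This gives the functor $\meas \to \qbs$ on objects; on morphisms it is the identity, since any measurable $f$ sends a measurable $\alpha$ to the measurable composite $f \circ \alpha$, hence is a quasi-Borel morphism.

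The heart of the argument is the lemma that for a standard Borel space $X$ one has $\Sigma_{M_{\Sigma_X}} = \Sigma_X$. The inclusion $\Sigma_X \subseteq \Sigma_{M_{\Sigma_X}}$ holds because the characteristic function of any $U \in \Sigma_X$ is Borel and therefore a quasi-Borel morphism $X \to 2$. For the reverse inclusion I would use the Borel isomorphism theorem: if $X$ is uncountable, fix a Borel isomorphism $\phi : \R \to X$; given $U \in \Sigma_{M_{\Sigma_X}}$, the map $\phi$ is itself a random element, so $\phi^{-1}(U)$ is Borel in $\R$, and since $\phi$ is a Borel isomorphism, $U = (\phi^{-1})^{-1}(\phi^{-1}(U))$ is Borel, i.e.\ $U \in \Sigma_X$ (the countable discrete case is trivial). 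Faithfulness of the embedding $\sbs \to \qbs$ is then immediate, as morphisms have the same underlying functions, and fullness follows from the lemma: given a quasi-Borel morphism $f : X \to Y$ between standard Borel spaces and $V \in \Sigma_Y = \Sigma_{M_{\Sigma_Y}}$, the characteristic function $\chi_V$ is a quasi-Borel morphism, so $\chi_V \circ f$ is too, whence $f^{-1}(V) \in \Sigma_{M_{\Sigma_X}} = \Sigma_X$, and thus $f$ is Borel measurable.

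For preservation of countable products and coproducts I would compare quasi-Borel structures directly. For a countable coproduct, axiom 3 identifies the random elements of $\coprod_i X_i$ with Borel case-splits of random elements of the summands, matching exactly the measurable maps into the coproduct of standard Borel spaces; for a countable product, a map into $\prod_i X_i$ is a random element iff each projection is, which matches measurability of each coordinate. Finally, for the probability monad I would exhibit the natural bijection $P(X) \to \mathcal{G}(X)$, $[\alpha,\mu]_\sim \mapsto \alpha_*\mu$. It is well-defined and injective by the definition of $\sim$ as extensional equality of pushforwards (using $\Sigma_{M_{\Sigma_X}} = \Sigma_X$), and surjective because every probability measure $\nu$ on a standard Borel $X$ is a pushforward $\phi_*(\phi^{-1}_*\nu)$ through a Borel isomorphism $\phi$ (or by \cref{thm:iso-meas} in the atomless case); one then checks this bijection is a quasi-Borel isomorphism commuting with the units and binds.

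The main obstacle is this last clause. Unlike the product and coproduct cases, monad preservation is not formal: it requires the isomorphism theorem in order to realize every measure as a pushforward from $\R$, and it requires verifying that the bijection respects the Giry-style bind, i.e.\ that the averaging of measures computed in $\qbs$ via $[\alpha,\mu]_\sim$ agrees with Lebesgue integration in $\meas$. I would expect the verification that unit and bind are transported correctly to be the most delicate bookkeeping, since it must be performed through representatives $[\alpha,\mu]_\sim$ and shown independent of their choice.
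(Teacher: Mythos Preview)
The paper does not give its own proof of this proposition: it is stated as a citation to \cite[Prop.~19, 22]{heunen:qbs} and used as a black box. There is therefore nothing in the paper to compare your argument against. That said, your sketch is essentially the standard argument one finds in the cited source, and it is correct.

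One small point worth tightening: in your surjectivity argument for $P(X)\to\mathcal G(X)$, writing $\nu=\phi_*(\phi^{-1}_*\nu)$ for a Borel isomorphism $\phi:\R\to X$ only covers the uncountable case, and your parenthetical invocation of \cref{thm:iso-meas} does not help with measures that have atoms. The clean uniform statement is that every probability measure on a standard Borel space is the pushforward of Lebesgue measure on $[0,1]$ along a suitable measurable map (the generalized quantile function), which handles countable and uncountable $X$ at once; alternatively one treats the countable discrete case by hand using a step function on a Borel partition of $\R$. Your identification of the monad-structure bookkeeping as the most delicate part is accurate, but it is routine once the bijection on underlying sets is in place.
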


Due to this we will identify the standard Borel spaces in both $\meas$ and $\qbs$ and write say $2$ or $\R$ for the quasi-Borel space and measurable space alike.

Probability theory in $\qbs$ departs from the traditional foundations only if we go beyond standard Borel spaces. To emphasise this, we briefly make a digression to recall the categorical relationship between quasi-Borel spaces and measurable spaces.
\begin{proposition}[{\cite[Prop.~15]{heunen:qbs}}]\label{prop:qbs-meas-adjunction}
  The maps $\Sigma : (X, M_X) \mapsto (X, \Sigma_{M_X})$ and $M : (X, \Sigma_X) \mapsto (X, M_{\Sigma_X})$ are functorial and form an adjunction
  \begin{equation*}
    \xymatrix{
      \qbs \ar@/^/[r]_\bot^\Sigma & \meas \ar@/^/[l]^M
    }
  \end{equation*}
\end{proposition}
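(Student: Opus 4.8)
The plan is to observe that both functors act as the identity on underlying sets and on underlying functions, so that functoriality, the hom-set bijection, and its naturality all reduce to a single statement about when a function $f\colon X\to Y$ is a morphism on each side. Concretely, I would first record the two equivalent descriptions of the induced $\sigma$-algebra: a subset $A\in\Sigma_{M_X}$ iff its characteristic map $\chi_A$ lies in $\qbs(X,2)$, equivalently iff $\alpha^{-1}(A)$ is Borel for every $\alpha\in M_X$ (this being the greatest $\sigma$-algebra making all random elements measurable). The whole argument then runs by unwinding these definitions.

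First I would check functoriality. For $\Sigma$, given a quasi-Borel morphism $f\colon(X,M_X)\to(Y,M_Y)$ and a set $B\in\Sigma_{M_Y}$ with characteristic map $\chi_B\in\qbs(Y,2)$, the composite $\chi_B\circ f$ lies in $\qbs(X,2)$, since $(\chi_B\circ f)\circ\alpha=\chi_B\circ(f\circ\alpha)\in M_2$ for every $\alpha\in M_X$; hence $f^{-1}(B)\in\Sigma_{M_X}$ and $f$ is measurable as a map $(X,\Sigma_{M_X})\to(Y,\Sigma_{M_Y})$. For $M$, given a measurable $f\colon(X,\Sigma_X)\to(Y,\Sigma_Y)$ and $\alpha\in M_{\Sigma_X}=\meas(\R,X)$, the composite $f\circ\alpha$ is measurable, hence lies in $M_{\Sigma_Y}=\meas(\R,Y)$, so $f$ is a quasi-Borel morphism. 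In both cases identities and composites are preserved because neither functor changes the underlying function.

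Then I would establish the adjunction $\Sigma\dashv M$ by showing that for $X\in\qbs$ and $Y\in\meas$ the two hom-sets $\meas(\Sigma X,Y)$ and $\qbs(X,MY)$ coincide as sets of functions $X\to Y$. Unwinding the left side, $f$ is measurable $(X,\Sigma_{M_X})\to(Y,\Sigma_Y)$ iff $f^{-1}(B)\in\Sigma_{M_X}$ for all $B\in\Sigma_Y$, i.e.\ iff $(f\circ\alpha)^{-1}(B)$ is Borel for all $\alpha\in M_X$ and all $B\in\Sigma_Y$. Unwinding the right side, $f\in\qbs(X,MY)$ iff $f\circ\alpha\in\meas(\R,Y)$ for all $\alpha\in M_X$, i.e.\ iff $(f\circ\alpha)^{-1}(B)$ is Borel for all $\alpha\in M_X$ and all $B\in\Sigma_Y$. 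These conditions are literally identical, so the identity on underlying functions is a bijection $\meas(\Sigma X,Y)\cong\qbs(X,MY)$; since it does not touch the function itself, naturality in both $X$ and $Y$ is automatic, yielding the adjunction.

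The only real care needed---and the single place the argument could go wrong---is in matching the two descriptions of $\Sigma_{M_X}$, so that the left-hand unwinding produces exactly the pointwise-Borel condition used on the right. Once that equivalence is in hand the rest is purely formal, and no nontrivial measure theory is required.
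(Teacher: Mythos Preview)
Your proof is correct and is the standard unwinding of the definitions; the hom-set identification $\meas(\Sigma X,Y)=\qbs(X,MY)$ really does come down to the single condition ``$(f\circ\alpha)^{-1}(B)$ is Borel for all $\alpha\in M_X$, $B\in\Sigma_Y$'', and your treatment of functoriality and naturality is fine. Note, however, that the paper does not supply its own proof of this proposition: it is stated with a citation to \cite[Prop.~15]{heunen:qbs} and used as a black box, so there is nothing in the paper to compare your argument against. Your write-up is essentially what one would find in the cited source.
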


Now consider the quasi-Borel space $2^\R$. Using this adjunction (Prop.~\ref{prop:qbs-meas-adjunction}), we obtain a $\sigma$-algebra $\Sigma_{2^\R}$ on the set $2^\R$ and a measurable evaluation map $\Sigma(2^\R \times \R) \to 2$. We note that this does not contradict \cref{thm:aumann} because $\Sigma$ does not preserve products, and indeed the $\sigma$-algebra $\Sigma_{2^\R \times \R}$ induced from the quasi-Borel space $2^\R \times \R$ is strictly larger than the product algebra $\Sigma_{2^\R} \otimes \Sigma_\R$ (cf.~\cref{thm:privacy,prop:nu-nonpositive,obs:product-of-marginals}).

\subsection{Probabilistic Semantics for the \texorpdfstring{$\nu$}{Nu}-Calculus}
\label{sec:probsemantics}

We can now give probabilistic semantics to the $\nu$-calculus (cf. Def.~\ref{def:catmodel}) by interpreting names as elements of a probability space and name generation as random sampling.

\begin{theorem}\label{thm:qbs-model}
  $\qbs$ is a categorical model of the $\nu$-calculus under the following assignment:
  \begin{enumerate}
    \item the object of names is $N = \R$, and the object of Booleans is $B=2$;
    \item the name-generation monad is $T=P$; and
    \item $\new$ is given by the Gaussian distribution $\nu \in P(\R)$.
  \end{enumerate}
  Moreover, it is adequate: If $\sem{M_1}_{\neq s} = \sem{M_2}_{\neq s}$ then $M_1 \approx_\tau M_2$, for all expressions $M_1, M_2\in\Exp_\tau(s)$.
\end{theorem}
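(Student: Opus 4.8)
The plan is to verify each clause of Definition~\ref{def:catmodel} together with its two axioms for the stated assignment $N = \R$, $B = 2$, $T = P$, $\new = \nu$, and then to derive adequacy from Theorem~\ref{thm:sound-adequate}. Most of the infrastructure is already in place from Section~\ref{sec:qbs}: $\qbs$ is cartesian closed, has finite products and equalizers (hence all finite limits), and carries the strong monad $P$, while the Booleans $2$ form a coproduct $1 + 1$. What remains for the basic structure are the categorical-logic conditions that the paper flagged as needing attention here. For the \emph{disjoint} coproduct $2 = 1 + 1$, I would observe that the two injections $\true, \false : 1 \to 2$ are monic and that their pullback is the empty quasi-Borel space, which is initial in $\qbs$. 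For \emph{decidable equality} on $N = \R$, the diagonal $\Delta = \{(x,x) : x \in \R\}$ is a closed, hence Borel, subset of $\R^2$, so its characteristic function $(=) : \R \times \R \to 2$ is a morphism; decidability is exactly the observation that $\R^2 = \Delta \uplus \Delta^c$ exhibits $\Delta$ as a complemented subobject, which the map into $2 = 1 + 1$ witnesses. Finally, $\new = \nu$ is a genuine element of $P(\R)$, hence a morphism $1 \to T(N)$.

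Next I would record that $P$ is affine and commutative. Affineness is the isomorphism $P(1) \cong 1$, which holds because there is a unique probability measure on the one-point space. Commutativity~\eqref{eqn:commutativity} is a Fubini property, and it reduces to the classical Fubini theorem on $\R$: every distribution in $\qbs$ is a pushforward $[\alpha,\mu]_\sim$ of a measure $\mu$ on $\R$, so exchanging the order of two such integrations is exchanging the order of integration of $\mu_1 \otimes \mu_2$ on $\R^2$, a standard Borel space. Thus both properties follow from the corresponding facts about the Giry monad through the conservativity embedding (Proposition~\ref{prop:conservativity}).

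The mathematical heart of the proof is the freshness axiom~\eqref{eq:freshness}, and I expect this to be the only step with genuine content. Unfolding the monad, the left-hand side is the distribution on $N \times B$ obtained by sampling $n \sim \nu$ and returning the pair $(n, (m = n))$, while the right-hand side samples $n \sim \nu$ and returns $(n, \false)$. Two probability distributions in $\qbs$ are equal precisely when their pushforwards agree on the induced $\sigma$-algebra, so it suffices to show these two laws coincide. They differ only on the event $\{n = m\}$; since the Gaussian $\nu$ is atomless we have $\nu(\{m\}) = 0$ for the fixed name $m$, so the set where $(m = n) = \true$ is $\nu$-null, and the two pushforward measures on $N \times B$ are therefore equal. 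This is exactly the measure-theoretic content of ``two independently sampled names are almost surely distinct.''

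For adequacy I would invoke the second bullet of Theorem~\ref{thm:sound-adequate}, which requires that $1$ is not initial and that $[-]_B : B \to T(B)$ is monic. The terminal quasi-Borel space is a one-point space, which is not initial since the initial object is empty; and the unit $2 \to P(2)$, $b \mapsto \delta_b$, is injective because $\delta_{\true} \neq \delta_{\false}$ as distributions on $2$. Both hypotheses hold, so adequacy follows. In summary, the categorical-logic checks are routine once the definitions of disjoint coproduct and decidable subobject are unwound, the monad laws are inherited from the Giry monad through conservativity, and the single substantive input is atomlessness of the Gaussian in the verification of~\eqref{eq:freshness}.
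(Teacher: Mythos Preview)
Your proposal is correct and follows essentially the same approach as the paper: verify the categorical structure, check that $P$ is affine and commutative via conservativity and Fubini, use atomlessness of $\nu$ for \eqref{eq:freshness}, and invoke Theorem~\ref{thm:sound-adequate} for adequacy. You spell out the disjoint-coproduct and decidable-equality conditions more carefully than the paper does (it simply asserts that ``quasi-Borel spaces have the required categorical structure''), but the key idea and overall shape are the same.
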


\begin{proof}
Quasi-Borel spaces have the required categorical structure, and the equality test is a Borel measurable map \mbox{$(=) : \R^2 \to 2$}, hence a morphism. The probability monad is commutative~(\ref{eqn:commutativity}), i.e.~Fubini holds~\cite[Prop.~22]{heunen:qbs}, and affine because  $P(1)\cong 1$, i.e.~probability measures must have total mass~$1$. The freshness requirement is the following identity in the internal language of $\qbs$, which reduces by Conservativity (Prop.~\ref{prop:conservativity}) to a statement about ordinary measure theory:
\begin{equation*}
  x : \R \vdash \letin y \nu {[(y, y = x)]} = \letin y \nu {[(y, \false)]} : P(\R \times 2)
\end{equation*}
Because $\nu$ is atomless, both sides denote the same distribution $\nu \otimes [\false]$.

For adequacy, we verify the assumptions of Thm.~\ref{thm:sound-adequate}. It is clear that $0\not\cong 1$. To see that the unit $[-]_B:B\to P(B)$ at~$B$ is monic, notice that by conservativity (Prop~\ref{prop:conservativity}) it is equivalent to check that $2\to \mathcal G(2)$ is injective in ordinary measure theory, which is trivial.
\end{proof}

\begin{remark}\label{rmk:any-sbs}
Any choice of standard Borel space and atomless measure will provide us with a model of the $\nu$-calculus. For example, we could consider $2^\N$ or $\T = [0, 1)$ with the uniform measure (cf. \cref{eg:atomless-sbs}), or $\R$ with any other atomless distribution.

By \cref{thm:iso-meas}, all such choices give isomorphic models of the $\nu$-calculus. More specifically, as the choice of standard Borel space and atomless measure completely determine the semantics of the $\nu$-calculus in $\qbs$, we always obtain the same equational theory of the $\nu$-calculus.

We may therefore choose to use any such space and measure when reasoning about the $\nu$-calculus in $\qbs$. We will take advantage of this in \cref{sec:proof_fullabstraction}, where we will find it convenient to work with the circle $\T$.
\end{remark}

By the general properties of categorical models, $\qbs$ semantics are sound and adequate for the $\nu$-calculus. In Section~\ref{sec:fullabstraction} we turn to studying the probabilistic semantics at higher types.

\paragraph{Aside on the `MONO' Requirement.}
When working with a monadic metalanguage, several authors~\cite{stark:cmln,moggi:computation_and_monads} ask that a monad~$T$ satisfies
the requirement
\begin{gather}\label{eq:mono} 
  \text{$[-]_X : X \to TX$ is monic for all $X$.}\tag{MONO}
\end{gather}
As we now explain, by using `separated' quasi-Borel spaces we can support the full \eqref{eq:mono} requirement. We mention this for completeness with respect to the literature, and will not use this notion later in this paper. In Stark's adequacy result~(Thm.~\ref{thm:sound-adequate}(2)), he only requires that $[-]_B:B\to T(B)$ be monic (for $B=1+1$).
\begin{definition}
  A quasi-Borel space $(X,M_X)$ is \emph{separated} if the maps $X \to 2$ separate points, meaning that for all $x \neq x' \in X$ there is some morphism $f: X \to 2$ such that $f(x) \neq f(x')$.
\end{definition}

This is equivalent to saying that the induced $\sigma$-algebra $\Sigma_{M_X}$ on $X$ separates points.

\begin{proposition}
  A quasi-Borel space $X$ is separated if and only if it satisfies the (\ref{eq:mono}) rule: the unit $X \to P(X)$ of the probability monad is injective.

  Additionally, we have:
  standard Borel spaces are separated;
  if $X,Y$ are separated, so is $X \times Y$;
  if $Y$ is separated, so is $Y^X$; and
  for every $X$, $P(X)$ is separated.
\end{proposition}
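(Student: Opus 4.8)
The plan is to establish the biconditional by a direct definition chase, and then obtain each of the four closure properties by exhibiting, for a given pair of distinct points, a morphism to $2$ that separates them. Throughout I use the identification $\Sigma_{M_X} = \qbs(X,2)$ and the fact that two distributions in $P(X)$ are equal exactly when their pushforwards agree on $(X,\Sigma_{M_X})$. For the equivalence, I would unwind the unit $\delta\colon X \to P(X)$, $x \mapsto \delta_x$: for $U \in \Sigma_{M_X}$ we have $\delta_x(U) = 1$ if $x \in U$ and $0$ otherwise, so $\delta_x = \delta_{x'}$ iff $x$ and $x'$ belong to exactly the same sets of $\Sigma_{M_X}$. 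Hence $\delta$ is injective iff every pair $x \neq x'$ is separated by some $U \in \Sigma_{M_X} = \qbs(X,2)$, which is precisely separatedness; this proves that $X$ is separated iff the unit is injective, i.e.\ the (\ref{eq:mono}) rule holds at $X$.

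Three of the closure properties are short. For $X \in \sbs$, Conservativity (Prop.~\ref{prop:conservativity}) identifies $\qbs(X,2)$ with the measurable maps $X \to 2$; since singletons are Borel in any standard Borel space, each $\chi_{\{x\}}$ is a morphism separating $x$ from every other point, so standard Borel spaces are separated. For a product, given $(x,y) \neq (x',y')$ assume without loss of generality $x \neq x'$, take a separating morphism $f\colon X \to 2$, and precompose with the projection to obtain $f \circ \pi_1$. For a function space $Y^X$ with $Y$ separated, given $f \neq g$ choose $x_0$ with $f(x_0) \neq g(x_0)$; the evaluation $\mathrm{ev}\colon Y^X \times X \to Y$ is a morphism by cartesian closure, so $\mathrm{ev}_{x_0} = \mathrm{ev} \circ \langle \id, \lambda\phi.x_0\rangle \colon Y^X \to Y$ is a morphism, and composing it with a morphism $h\colon Y \to 2$ separating $f(x_0)$ and $g(x_0)$ yields a morphism separating $f$ and $g$.

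The remaining case, $P(X)$, is the only one needing real care. Given $\mu \neq \nu$ in $P(X)$, the definition of equality of distributions supplies some $U \in \Sigma_{M_X} = \qbs(X,2)$ with $\mu(U) \neq \nu(U)$. The key step is to realize the evaluation $e_U \colon P(X) \to \R$, $\rho \mapsto \rho(U)$, as a genuine quasi-Borel morphism rather than merely a measurable map in $\meas$: I would take it to be the composite $P(X) \xrightarrow{P(\chi_U)} P(2) \cong [0,1] \hookrightarrow \R$, using functoriality of $P$ on the morphism $\chi_U$ together with the identification $P(2) \cong \mathcal G(2) \cong [0,1]$ from Conservativity. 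Since $\R$ is standard Borel and hence separated, there is a morphism $g\colon \R \to 2$ with $g(\mu(U)) \neq g(\nu(U))$, and then $g \circ e_U$ separates $\mu$ from $\nu$. The main obstacle is exactly this verification that $e_U$ is a morphism of quasi-Borel spaces; everything else reduces to closure of morphisms under composition and the universal maps of the cartesian closed and monadic structure.
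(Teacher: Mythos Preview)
Your proof is correct and matches the paper's approach; the paper gives only the one-line hint $\int_X f(y)\,\dd{\delta_x}{y} = f(x)$ for the biconditional and declares the closure properties ``routine calculation'', and you have simply filled in those routines. Your explicit realization of $e_U$ as the composite $P(\chi_U)$ followed by $P(2)\cong[0,1]\hookrightarrow\R$ is a clean way to see that evaluation at $U$ is a genuine $\qbs$ morphism, which is exactly the point the paper leaves implicit.
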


\begin{proof}[Proof notes]
  The first part follows because for $f: X \to 2$ and $x \in X$, we have $\int_X f(y) \,\dd{\delta_x}y = f(x)$.
  The rest is routine calculation.
\end{proof}
Therefore we could model the full \eqref{eq:mono} requirement by restricting to \emph{separated} quasi-Borel spaces. Moreover, this would not change the semantic interpretation.

\section{Full Abstraction}
\label{sec:fullabstraction}

In this section, we will prove that $\qbs$ is a fully abstract model of the $\nu$-calculus at first-order types. This will proceed in three steps, as described in \S\ref{sec:intro-qbs}. We will first prove that privacy holds in $\qbs$ (\S\ref{sec:privacy}). We will then construct a normal form invariant observational equivalence at first-order types, eliminating the use of private names (\S\ref{sec:nf}). Finally, we will make use of a measure-invariant group structure on the set of names and the privacy equation established in \S\ref{sec:privacy} to prove that $\qbs$ validates our normalization and is therefore fully abstract at first-order types (\S\ref{sec:proof_fullabstraction}).

\subsection{The Privacy Equation}
\label{sec:privacy}

\begin{theorem}[Privacy for $\qbs$]\label{thm:privacy}
  $\qbs$ satisfies \eqref{eq:priv-c}. This means that the random singleton is indistinguishable from the empty set:
  \begin{equation*}
    \letin a \nu {[\{a\}]} = [\emptyset] : P(2^\R).
  \end{equation*}
  In particular, $\qbs$ validates the privacy equation~\eqref{eq:privnu}.
\end{theorem}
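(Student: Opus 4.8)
The plan is to unfold the definition of equality in $P(2^\R)$ and reduce the theorem to a single measure-theoretic dichotomy about Borel-on-Borel sets, which I then establish by contradiction using a pair of Borel-inseparable analytic sets.

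First I would unfold \eqref{eq:priv-c} in the present model. The distributions $\letin a \nu {[\{a\}]}$ and $[\emptyset]$ are the pushforwards of $\nu$ along the random elements $a \mapsto \{a\}$ and $a \mapsto \emptyset$ of $2^\R$, so they are equal in $P(2^\R)$ iff their pushforwards agree on the induced $\sigma$-algebra $\Sigma_{M_{2^\R}} = \qbs(2^\R, 2)$. Thus it suffices to prove, for every morphism $f : 2^\R \to 2$, that $\int_\R f(\{a\})\,\dd{\nu}{a} = f(\emptyset)$. Recalling that a morphism $f : 2^\R \to 2$ is exactly (the characteristic function of) a \emph{Borel-on-Borel} set $A \subseteq 2^\R$ --- one for which $\{r : D_r \in A\}$ is Borel for every Borel $D \subseteq \R^2$, where $D_r$ is the $r$-section --- and applying this to the Borel diagonal $\{(r,x) : x = r\}$ and to the empty set, we see that $S_A := \{a : \{a\} \in A\}$ is Borel, that the left-hand side equals $\nu(S_A)$, and that the right-hand side equals $[\emptyset \in A]$. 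Since $A \mapsto 2^\R \setminus A$ preserves the Borel-on-Borel property and flips both sides simultaneously, it suffices to prove the following dichotomy: if $A$ is Borel-on-Borel and $\emptyset \notin A$, then $\nu(S_A) = 0$.

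The heart of the argument is this dichotomy, which I would prove as a sharpening of the non-definability of $\exists : 2^\R \to 2$. The existential quantifier would satisfy $\exists(\{a\}) = \true$ for all $a$ (so $S_\exists = \R$) yet $\exists(\emptyset) = \false$; it fails to be a morphism because $\{r : D_r \in \exists^{-1}(\true)\}$ is the projection $\pi_1(D)$, which for a suitable Borel $D$ is analytic but not Borel. The dichotomy asserts that no morphism can even achieve $\nu(S_A) > 0$ while excluding $\emptyset$. To prove it, I would assume for contradiction that $\emptyset \notin A$ and $\nu(S_A) > 0$, and use the Borel-on-Borel property to manufacture a Borel set that separates two disjoint analytic sets known to be Borel-inseparable \cite[\S14]{kechris}. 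Concretely, I would present the two inseparable analytic sets as projection-type data of a single Borel set $D \subseteq \R^2$ and arrange, using the positive-measure set $S_A$ to witness membership, that $\{p : D_p \in A\}$ contains one of the analytic sets and is disjoint from the other; since this set is Borel by the Borel-on-Borel property, it would be the forbidden separator.

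The main obstacle is precisely this construction of the separator. The difficulty is that encoding genuinely analytic (non-Borel) information into the sections $D_p$ forces those sections to be \emph{uncountable} --- by the Luzin--Novikov theorem \cite{kechris} a Borel set with countable sections has Borel projection, so sections that are singletons or empty could never reproduce a non-Borel projection --- whereas the hypothesis controls $f$ only on singletons and on $\emptyset$. Bridging this gap is exactly where the \emph{positive} (rather than full) measure of $S_A$ is used: instead of asking $f$ to detect a single hidden point, one randomizes over an auxiliary parameter and uses a selection/Fubini argument so that, on a positive-measure event, the relevant section collapses to a singleton lying in $S_A$ (hence in $A$), while over the complementary analytic set the section is empty (hence outside $A$). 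This is why mere non-Borelness of one analytic set --- already enough to defeat $\exists$ --- does not suffice, and a genuinely inseparable \emph{pair} is required. Once the Borel separator is obtained, the contradiction with inseparability completes the dichotomy and yields \eqref{eq:priv-c}; and since \eqref{eq:priv-c} implies \eqref{eq:rawpriv-c}, adequacy (\cref{thm:qbs-model}) delivers the observational equivalence \eqref{eq:privnu}.
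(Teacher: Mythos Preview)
Your reduction is correct and matches the paper: equality in $P(2^\R)$ is tested on the Borel-on-Borel $\sigma$-algebra, and after taking complements the claim becomes ``if $\emptyset\in\mathcal U$ and $\mathcal U$ is Borel-on-Borel, then $\nu\{a:\{a\}\in\mathcal U\}=1$''. You also correctly identify that Borel inseparability is the engine.

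The gap is in your treatment of the ``main obstacle''. You invoke Luzin--Novikov to argue that sections which are singletons or empty cannot carry non-Borel information, and then propose a vague ``selection/Fubini'' manoeuvre using the positive measure of $S_A$ to collapse sections to singletons on a positive-probability event. Neither step is right. First, Luzin--Novikov only tells you that the \emph{projection} of a Borel set with countable sections is Borel; it says nothing about the separability of the loci $\{p:D_p=\emptyset\}$ and $\{p:D_p\text{ is a singleton}\}$, and indeed these are typically coanalytic/analytic, not Borel. Second, your Fubini sketch never produces a concrete Borel $D$ whose sections over one inseparable set are singletons in $S_A$ and over the other are empty --- randomizing an auxiliary parameter does not manufacture such a $D$, and measurable selection theorems go the wrong way here (they select points \emph{from} nonempty sections; they do not force sections to be singletons).

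The missing ingredient is a theorem of Becker (\cite[35.2]{kechris}): there exists a Borel $B\subseteq\R\times\R$ such that $B^0=\{x:B_x=\emptyset\}$ and $B^1=\{x:B_x\text{ is a singleton}\}$ are Borel-inseparable. This is exactly the ``impossible'' configuration you thought Luzin--Novikov ruled out; it is possible because sections of $B$ outside $B^0\cup B^1$ are unconstrained. With Becker in hand the paper's argument is short and in fact stronger than what you aim for: if $A=\{a:\{a\}\notin\mathcal U\}$ were uncountable, transport Becker's $B$ into $\R\times A$ via a Borel isomorphism; then $\{x:B_x\in\mathcal U\}$ is Borel (by Borel-on-Borel), contains $B^0$ (since $\emptyset\in\mathcal U$), and is disjoint from $B^1$ (since $B_x\in A$ means $\{B_x\}\notin\mathcal U$) --- a Borel separator, contradiction. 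Hence $A$ is \emph{countable}, not merely $\nu$-null, and atomlessness of $\nu$ finishes. Your positive-measure hypothesis is not needed and does not help; what is needed is that $A$ uncountable implies $A\cong\R$, so that Becker applies.
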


In statistical notation, we would consider a Borel set-valued random variable $\{X\}$ where $X \sim \nu$. Privacy states that $\{X\} \overset{d}{=} \emptyset$ in distribution. Before presenting the proof, let us consider some examples of measurable operations which we can apply to Borel sets and see why they fail to distinguish $\{X\}$ from $\emptyset$.

\begin{example}\label{eg:inclusion-rand}
  For any fixed number $x_0 \in \R$, the evaluation map $x_0 \in (-) : 2^\R \to 2$ is a morphism. However, testing membership of $x_0$ will almost surely not distinguish $\{X\}$ and $\emptyset$, as $X$ is sampled from an atomless distribution, so
  \begin{align*}
    \P(x_0 \in \{X\}) = \P(X = x_0) = 0 = \P(x_0 \in \emptyset).
  \end{align*}
  This is merely stating freshness: a freshly generated name is distinct from any fixed existing name. As discussed in \cref{eg:nu-lambda-commute}, this is a strictly weaker statement than privacy, because $\lambda$ and $\nu$ don't commute.
\end{example}

\begin{example}\label{eg:s-finite-rand}
  \Cref{eg:inclusion-rand} shows that Dirac distributions cannot distinguish the random singleton from the empty set. More generally, they cannot be distinguished by $s$-finite measures. Evaluating an $s$-finite measure $\mu$ is a morphism $2^\R \to [0, \infty]$ {\cite[\S 4.3]{scibior}}. However because the set of atoms of $\mu$ is countable, we have $\mu(\{X\}) = 0 = \mu(\emptyset)$ almost surely.
\end{example}

\begin{example}\label{ex:qbs-nonemptiness}
  In Section~\ref{sec:intro-probprog} we discussed the Boolean existence function~(\ref{eqn:exists}), recalling that if it was in a model then the privacy equation~(\ref{eq:privnu}) would not hold. As we suggested, this function is incompatible with Borel-based probability. We can now be precise: the nonemptiness check $\exists: 2^\R \to 2$ is not a quasi-Borel morphism.

  To see that this is the case, recall that there exists a Borel subset $B \subseteq \R^2$ of the plane whose projection $\pi(B)$ is not Borel \cite[14.2]{kechris}. The characteristic function $\chi_B : \R \times \R \to 2$ is a morphism, and so is its currying $\beta : \R \to 2^\R$. However, the characteristic function $\chi_{\pi(B)} = \exists \circ \beta$ is not a morphism because $\pi(B)$ is not measurable. Therefore $\exists : 2^\R \to 2$ cannot be a quasi-Borel map.

  This implies that the singleton $\{\emptyset\} \subseteq 2^\R$ is not measurable. Furthermore, the equality check between sets $2^\R \times 2^\R \to 2$ is not a morphism in $\qbs$.
\end{example}

As \cref{thm:privacy} is a statement about measures on $2^\R$, we must analyze the $\sigma$-algebra $\Sigma_{2^\R}$ on $2^\R$ induced by its quasi-Borel structure.

\begin{notation}
  Let $B \subseteq X \times Y$ and $x \in X$. We let $B_x = \{y \in Y \mid (x, y) \in B\}$ denote the vertical section of $B$ at $x$.
\end{notation}

Recall that we can identify the space $2^\R = \qbs(\R, 2)$ with the Borel subsets of $\R$. We can similarly identify the set $\qbs(\R \times \R, 2)$ with the Borel subsets of $\R \times \R$, and by currying this means that the maps in $\qbs(\R, 2^\R)$ are exactly the maps $\lambda r.B_r$ for Borel $B \subseteq \R \times \R$. If $B \subseteq \R \times \R$ and $\mathcal{U} \subseteq 2^\R$, we note that
\begin{equation*}
  (\lambda r.B_r)^{-1}(\mathcal{U}) = \{r \in \R \mid B_r \in \mathcal{U}\}.
\end{equation*}

\begin{definition}[{\cite{kechris}}]
  A collection $\mathcal{U} \subseteq 2^\R$ of Borel sets is \emph{Borel on Borel} if for all Borel $B \subseteq \R \times \R$, the set $\{r \in \R \mid B_r \in \mathcal{U}\}$ is Borel.
\end{definition}

It follows that the $\sigma$-algebra $\Sigma_{2^\R}$ on $2^\R$ induced by the quasi-Borel structure is exactly the collection of Borel on Borel sets. Examples of such families include the family of null sets with respect to a Borel probability measure (\cref{eg:s-finite-rand}) and the family of meager sets \cite[\S18.B]{kechris}.

\begin{definition}[{\cite{kechris}}]
  Let $X$ be a standard Borel space. Two disjoint sets $A, A' \subseteq X$ are said to be \emph{Borel separable} if there is a Borel set $B \subseteq X$ such that $A \subseteq B$ and $A' \cap B = \emptyset$. $A, A'$ are \emph{Borel inseparable} if no such set exists.
\end{definition}

\begin{theorem}[Becker {\cite[35.2]{kechris}}]\label{thm:becker}
  There exists a Borel set $B \subseteq \R \times \R$ such that the sets
  \begin{equation*}
    B^0 = \{x \in \R \mid B_x = \emptyset\} \quad\text{and}\quad B^1 = \{x \in \R \mid B_x ~\text{is a singleton}\}
  \end{equation*}
  are Borel inseparable.
\end{theorem}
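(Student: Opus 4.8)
The plan is to recast the statement as a transparent inseparability fact about trees, where the cardinality of a section has an immediate meaning, prove the inseparability there, and transport it back to $\R$ by a Borel isomorphism.

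First I would work in the standard Borel space $\mathrm{Tr}$ of trees on $\N$, writing $[T]\subseteq\N^\N$ for the set of infinite branches of $T$, and consider
\[
  B_0 \;=\; \{(T,\beta)\in \mathrm{Tr}\times \N^\N \mid \beta\in[T]\}.
\]
Since $\beta\in[T]$ is the condition $\forall n\,(\beta\restriction n\in T)$, a countable intersection of clopen conditions, $B_0$ is Borel (even closed). Its vertical section is $(B_0)_T=[T]$, so $(B_0)_T=\emptyset$ exactly when $T$ is well-founded and $(B_0)_T$ is a singleton exactly when $T$ has a unique infinite branch. Writing $\mathrm{WF}$ for the well-founded trees and $\mathrm{UB}$ for the trees with a unique branch, this gives $B_0^0=\mathrm{WF}$ and $B_0^1=\mathrm{UB}$. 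Both $\mathrm{Tr}$ and $\N^\N$ are uncountable standard Borel spaces, hence Borel isomorphic to $\R$; fixing isomorphisms $\varphi\colon\R\to\mathrm{Tr}$, $\psi\colon\R\to\N^\N$ and setting $B=(\varphi\times\psi)^{-1}(B_0)$ gives $|B_x|=|(B_0)_{\varphi(x)}|$, so $B^0=\varphi^{-1}(\mathrm{WF})$ and $B^1=\varphi^{-1}(\mathrm{UB})$. Because Borel isomorphisms preserve Borel (in)separability, it suffices to prove that $\mathrm{WF}$ and $\mathrm{UB}$ are Borel inseparable.

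Before attacking that, it is worth recording why the naive strategies fail, since this locates the real difficulty. One might hope to take a disjoint Borel-inseparable pair of $\Pi^1_1$ sets $A_0,A_1$ (these exist by the failure of the separation property for $\Pi^1_1$) and reduce it into $(\mathrm{WF},\mathrm{UB})$ by a continuous coding $x\mapsto T_x$ sending $A_0$ into $\mathrm{WF}$ and $A_1$ into $\mathrm{UB}$. But the only continuous ways to change branch counts are to add a guaranteed spine ($c\mapsto 1+c$), to take disjoint sums at the root (counts add), or to interleave (counts multiply); none of these can \emph{cap} an ill-founded tree to a single branch, since selecting one branch is a $\Sigma^1_1$, not a Borel, operation. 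If such a coding existed it would reduce a $\Sigma^1_1$ set into the $\Pi^1_1$ set $\mathrm{UB}$, making that set Borel and contradicting inseparability. Relatedly, by the Luzin--Suslin theorem any Borel set whose sections all have size $\le 1$ has Borel projection, so no \emph{complementary} inseparable pair can be realized as empty-versus-singleton; the theorem genuinely needs some sections with two or more points, and no purely combinatorial reduction can succeed.

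The route I would therefore take is the effective/boundedness method. Suppose toward a contradiction that a Borel (hence $\Delta^1_1$) set $C$ separates the pair, $\mathrm{WF}\subseteq C$ and $C\cap\mathrm{UB}=\emptyset$. The key gadget is $g\colon S\mapsto S^+$, attaching to $S$ a fixed spine together with one hung copy of $S$ at the root, so that $[S^+]$ is the spine plus a disjoint copy of $[S]$; thus $S\in\mathrm{WF}\Rightarrow S^+\in\mathrm{UB}$ while $S\in\mathrm{IF}\Rightarrow S^+$ has at least two branches, and a unique-branch tree $S^+$ arising from a well-founded $S$ of rank $\xi$ is a limit of well-founded trees whose ranks grow with $\xi$. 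Since there is no universal Borel set but there \emph{is} a universal $\Pi^1_1$ set, I would make the diagonalization against all Borel separators simultaneous by the recursion theorem: build a recursive family $T_e$ whose branch structure is defined so that, reading off how the $e$-th Borel code treats $T_e$, the tree is forced into $\mathrm{WF}\setminus C$ or into $\mathrm{UB}\cap C$, and a fixed point $e^\ast$ then contradicts that $C$ separates. The quantitative reason the fixed point lands as required is $\Sigma^1_1$-boundedness for the canonical $\Pi^1_1$-rank on $\mathrm{WF}$: a $\Delta^1_1$ set cannot keep pace with all countable ranks, whereas $\mathrm{UB}$ reaches all of them.

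The main obstacle is exactly this effective diagonalization: turning ``$C$ can capture only boundedly many ranks, but $\mathrm{UB}$ reaches every rank'' into a rigorous contradiction. As the cardinality discussion shows, it cannot be replaced by an elementary continuous reduction, so the argument must genuinely invoke the recursion theorem together with $\Sigma^1_1$-boundedness (equivalently a Gandy--Harrington category argument) to defeat every Borel separator at once; this is the content of Becker's theorem \cite[35.2]{kechris}. Granting the inseparability of $\mathrm{WF}$ and $\mathrm{UB}$, transporting back along $\varphi$ yields the required Borel set $B\subseteq\R\times\R$ whose empty-section and singleton-section sets $B^0=\varphi^{-1}(\mathrm{WF})$ and $B^1=\varphi^{-1}(\mathrm{UB})$ are Borel inseparable.
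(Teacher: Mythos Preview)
The paper does not prove this theorem at all; it simply cites Kechris~[35.2] and uses the result as a black box in the proof of Lemma~4.5. Your proposal correctly observes that the statement as phrased in the paper is nothing more than Kechris's theorem (that $\mathrm{WF}$ and $\mathrm{UB}$ are Borel inseparable in $\mathrm{Tr}$) transported along Borel isomorphisms $\R\cong\mathrm{Tr}$ and $\R\cong\N^\N$; this reduction is entirely correct and is presumably exactly how the citation is meant to be unpacked.

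Where your proposal falls short is in the attempt to go further and actually \emph{prove} the inseparability of $\mathrm{WF}$ and $\mathrm{UB}$. The sketch you give---building a recursive family $T_e$ via the recursion theorem so that a fixed point defeats any putative Borel separator, with $\Sigma^1_1$-boundedness supplying the quantitative leverage---is in the right spirit, but it is not a proof: the step ``reading off how the $e$-th Borel code treats $T_e$ [and forcing] the tree into $\mathrm{WF}\setminus C$ or into $\mathrm{UB}\cap C$'' is precisely the hard part, and you do not carry it out. You acknowledge this yourself (``The main obstacle is exactly this effective diagonalization'') and then fall back on citing~[35.2], which is circular since that \emph{is} the theorem in question. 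Your discussion of why na\"ive continuous reductions fail (Luzin--Suslin forbidding a purely $\le 1$-section coding, the arithmetic of branch counts under sums and interleavings) is correct and worth keeping as motivation, but it does not substitute for the argument. In short, your proposal and the paper end up in the same place: both defer the real work to Kechris.
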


Using this, we prove that quasi-Borel spaces validate privacy.

\begin{lemma}\label{lemma:bob-singleton}
  Let $\mathcal{U} \subseteq 2^\R$ be Borel on Borel. If $\emptyset \in \mathcal{U}$ then $\{r\} \in \mathcal{U}$ for all but countably many $r \in \R$.
\end{lemma}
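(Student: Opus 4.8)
The plan is to contradict Becker's inseparable sets (\cref{thm:becker}). Write $S = \{r \in \R : \{r\} \notin \mathcal U\}$ for the exceptional set; we must show $S$ is countable. First I would observe that $S$ is itself Borel. Indeed, the diagonal $D = \{(r,y) \in \R \times \R : r = y\}$ is a Borel set whose vertical section at $r$ is exactly the singleton $D_r = \{r\}$. Since $\mathcal U$ is Borel on Borel, the set $\{r : D_r \in \mathcal U\} = \{r : \{r\} \in \mathcal U\}$ is Borel, and hence so is its complement $S$.

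Now suppose toward a contradiction that $S$ is uncountable. Since $S$ is a Borel subset of $\R$ it is a standard Borel space, and being uncountable it is measurably isomorphic to $\R$; fix such a Borel isomorphism $\phi: \R \to S \subseteq \R$. Take the Becker set $B \subseteq \R \times \R$ from \cref{thm:becker} and transport its second coordinate into $S$ by setting
\[
  B' = \{(x, \phi(y)) : (x, y) \in B\} \subseteq \R \times S.
\]
Because $(x,y) \mapsto (x, \phi(y))$ is a Borel isomorphism of $\R \times \R$ onto $\R \times S$, the set $B'$ is Borel. Since $\phi$ is injective, each section satisfies $B'_x = \phi(B_x)$, so $B'_x$ is empty exactly when $B_x$ is, and $B'_x$ is a singleton exactly when $B_x$ is; moreover when $B_x = \{r\}$ we have $B'_x = \{\phi(r)\}$ with $\phi(r) \in S$, whence $\{\phi(r)\} \notin \mathcal U$ by the definition of $S$. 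Consequently $(B')^0 = B^0$ and $(B')^1 = B^1$.

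Finally I would apply the Borel-on-Borel hypothesis to $B'$: the set $W = \{x : B'_x \in \mathcal U\}$ is Borel. Using $\emptyset \in \mathcal U$, every $x \in B^0$ has $B'_x = \emptyset \in \mathcal U$, so $B^0 \subseteq W$; and every $x \in B^1$ has $B'_x = \{\phi(r)\} \notin \mathcal U$, so $B^1 \cap W = \emptyset$. Thus $W$ Borel-separates $B^0$ from $B^1$, contradicting \cref{thm:becker}. Hence $S$ is countable, as required. I expect the main obstacle to be the transport step: one must arrange that the singleton sections of the modified Becker set take their values inside $S$ while keeping the set Borel and preserving which sections are empty or singletons. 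The preliminary observation that $S$ is Borel, so that it carries a measurable isomorphism with $\R$, is precisely what makes this transport possible, and is the technical crux of the argument.
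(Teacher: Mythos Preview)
Your proof is correct and follows essentially the same approach as the paper's: show the exceptional set $S$ is Borel, assume it is uncountable, transport Becker's set so that its singleton sections land in $S$, and use the Borel-on-Borel hypothesis to obtain a Borel separator of $B^0$ and $B^1$. Your version is simply more explicit about two points the paper leaves implicit---the use of the diagonal to witness that $S$ is Borel, and the details of the transport step (including why $B'$ remains Borel and why $(B')^0 = B^0$, $(B')^1 = B^1$).
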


\begin{proof}
Let $A = \{r \in \R \mid \{r\} \notin \mathcal{U}\}$. This is a Borel set because $\mathcal{U}$ is Borel on Borel. Borel subsets of standard Borel spaces are standard Borel, so $A$ is standard Borel.

Now suppose for the sake of contradiction that $A$ were uncountable. Because $A$ is standard Borel it is isomorphic to $\R$. Fixing such an isomorphism, we have by \cref{thm:becker} a Borel set $B \subseteq \R \times A$ such that $B^0, B^1$ are Borel inseparable.

However, if $r \in B^0$ then $B_r = \emptyset \in \mathcal{U}$. On the other hand, if $r \in R^1$ then $B_r = \{a\}$ for some $a \in A$, and so $B_r = \{a\} \notin \mathcal{U}$. It follows that $B^0 \subseteq \{r \in \R \mid B_r \in \mathcal{U}\}$ and $B^1 \subseteq \{r \in \R \mid B_r \notin \mathcal{U}\}$. As $\mathcal{U}$ is Borel on Borel, $\{r \in \R \mid B_r \in \mathcal{U}\}$ provides a Borel separation of $B^0, B^1$, a contradiction.
\end{proof}

\begin{proof}[Proof of \cref{thm:privacy}]
To show that these two quasi-Borel measures are equal, we must check that the pushforward measures agree on the measurable space $(2^\R, \Sigma_{2^\R})$, meaning that for $\mathcal{U} \in \Sigma_{2^\R}$,
\begin{equation*}
\emptyset \in \mathcal{U} \iff \nu \{r \in \R \mid \{r\} \in \mathcal{U}\} = 1.
\end{equation*}
Every such $\mathcal{U}$ is Borel on Borel, and by possibly taking complements we can assume that $\emptyset \in \mathcal{U}$. By \cref{lemma:bob-singleton} the set $\{r \in \R \mid \{r\} \in \mathcal{U}\}$ is co-countable, and because $\nu$ is atomless this must have $\nu$-measure $1$.
\end{proof}

We offer some comments about this proof: the strategy we employed generalizes beyond the category of quasi-Borel spaces. Take any model of higher-order probability which agrees with standard Borel spaces on ground types, that is all morphisms $\R \to 2$ are measurable and all measurable maps $\R^2 \to 2$ are morphisms. Then this Borel on Borel property is a necessary constraint on second-order functions $2^\R \to 2$, arising from cartesian closure alone. In this case, \cref{lemma:bob-singleton} applies and it is inconsistent for such morphisms to tell apart the empty set from a random singleton with positive probability.

It is now merely an extensionality aspect of $\qbs$ that these constraints are also sufficient, and that the inability to distinguish the empty set from singletons implies equality in distribution. The category of sheaves in \cite{staton:sheaves} features a more intensional probability monad, where the two sides of the privacy equation presumably cannot be identified.

\subsection{A Normal Form for Privacy}
\label{sec:nf}

The privacy equation is a crucial stepping stone to full abstraction at first-order types. In \cref{sec:proof_fullabstraction} we will show that all other first-order observational equivalences can be reduced to privacy. In order to do this, we will first define a syntactic procedure to eliminate private names. Intuitively, private names are names that are not leaked to the environment --- if they are not already known outside the program, then they have no observable effects. In this section, we will provide a concrete definition of private names in terms of a logical relation originally developed in \cite{stark:whatsnew}, and we will construct a normal form invariant under observational equivalence that eliminates the use of private names in first-order terms.\\

\begin{figure}
   \centering
    \[
    b_1 \,R^\val_\tbool\, b_2 \Leftrightarrow b_1 = b_2 \qquad n_1 \,R^\val_\tnames\, n_2 \Leftrightarrow n_1 \,R\, n_2
    \]
    %
    %
    \begin{align*}
        (\lambda x.M_1)\,R^\val_{\sigma \to \tau}\,(\lambda x.M_2) \Leftrightarrow
        \forall R' &\colon s'_1 \leftrightharpoons s'_2, V_1 \in \Val_\sigma(s_1\oplus s'_1), V_2 \in \Val_\sigma(s_2 \oplus s'_2), \\
        &V_1 \,(R \oplus R')^\val_\sigma\, V_2 \Rightarrow M_1[V_1/x] \,(R \oplus R')^\exp_\tau\, M_2[V_2/x]
    \end{align*}
    %
    %
    \begin{align*}
        M_1 \,R^\exp_\tau\, M_2 \Leftrightarrow \exists R' &\colon s'_1 \leftrightharpoons s'_2, V_1 \in \Val_\sigma(s_1 \oplus s'_1), V_2 \in \Val_\sigma(s_2 \oplus s'_2), \\
        & s_1 \vdash M_1 \Downarrow_\sigma (s'_1)V_1 \,\&\, s_2 \vdash M_2 \Downarrow_\sigma (s'_2)V_2 \,\&\, V_1 \,(R \oplus R')^\val_\sigma\, V_2
    \end{align*}
    \caption{Stark's logical relation}
    \label{fig:logicalrelation}
    \Description{Stark's logical relation}
\end{figure}

Let $s_1,s_2$ be sets of free names; we write $R \colon s_1 \leftrightharpoons s_2$ for a partial bijection or \emph{span} between $s_1$ and $s_2$. We write $R \oplus R'$ for the disjoint union of spans between disjoint sets of names, and we write $\id_s\colon s \oplus t_1 \leftrightharpoons s \oplus t_2$ to denote the partial bijection defined that is the identity on $s$ and undefined on $t_1, t_2$. Stark \cite{stark:whatsnew} defines two families of relations $R^\val_\tau \subseteq \Val_\tau(s_1) \times \Val_\tau(s_2)$ and $R^\exp_\tau \subseteq \Exp_\tau(s_1) \times \Exp_\tau(s_2)$ by mutual induction, given in \cref{fig:logicalrelation}.

We note that $R^\val_\tau$ and $R^\exp_\tau$ coincide at values, so we will simply write the relations as $R_\tau$. Additionally, by renaming related names we can without loss of generality reduce any span $R$ to a subdiagonal, writing $s_i = s \oplus t_i$ and $R=\id_s$.

The logical relation agrees with observational equivalence $(\approx)$ at first-order types:

\begin{theorem}[{\cite[Theorem 22]{stark:whatsnew}}]\label{thm:logical-relation-observational-equivalence}
  Let $\tau$ be a first-order type. Then for $M_1, M_2 \in \Exp_\tau(s)$ we have
  \begin{equation*}
    M_1 \approx_\tau M_2 \Leftrightarrow M_1 \,(\id_s)_\tau\,M_2
  \end{equation*}
\end{theorem}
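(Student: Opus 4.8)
The plan is to prove the two inclusions separately: \emph{soundness} ($M_1\,(\id_s)_\tau\,M_2 \Rightarrow M_1 \approx_\tau M_2$), which in fact holds at all types, and \emph{completeness} ($M_1 \approx_\tau M_2 \Rightarrow M_1\,(\id_s)_\tau\,M_2$), for which first-orderness is essential. Throughout I would use the simplification noted just after the definition, reducing every span to a subdiagonal $\id_s$ by renaming related names, so that the only bookkeeping concerns the freshly generated names on each side.

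For soundness, the core is a \emph{Fundamental Lemma} (compatibility): every term constructor preserves the logical relation, and substitution of related values into related open expressions yields related expressions. I would prove this by structural induction on terms, simultaneously for $R^\val$ and $R^\exp$. The delicate cases are: equality tests, where I use that a span $R$ is a partial bijection, so that related names satisfy $m_1 = n_1 \iff m_2 = n_2$ and the relation $R_\tbool$ (which is equality) is preserved; name generation $\nu n.M$, where I extend the current span by a single fresh related pair and appeal to the evaluation rule for $\nu$; and application/abstraction, which unfold directly into the arrow-type clause of the relation. The induction for the elimination forms relies on the fact that $R^\exp$ composes with big-step evaluation — if $M_1\,R^\exp\,M_2$ and $M_i \Downarrow (s_i')V_i$, then $V_1\,(R\oplus R')\,V_2$ for the generated names — which is built into the $R^\exp$ clause and made available by determinacy of evaluation up to names. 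Granting compatibility, the relation is a congruence, so for any Boolean context $\mathcal{C}[\cdot]$ we obtain $\mathcal{C}[M_1]\,(\id_s)_\tbool\,\mathcal{C}[M_2]$; since $R_\tbool$ is equality and both sides terminate, $\mathcal{C}[M_1]$ and $\mathcal{C}[M_2]$ evaluate to the same Boolean, giving $M_1 \approx_\tau M_2$.

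For completeness I would argue by induction on the first-order type $\tau = \sigma_1 \to \cdots \to \sigma_n$ with each $\sigma_i$ ground. At ground types the statement is direct: at $\tbool$, two observationally equivalent expressions evaluate to the same Boolean (observe with the trivial context); at $\tnames$, the contexts $(\,\cdot = n)$ for each $n \in s$ pin down the evaluated name up to freshness, so observationally equivalent names either coincide on some known $n \in s$ or are both fresh, and in the latter case a span extension $R'$ pairing the two fresh names witnesses $R_\tnames$. For function types the heart is a \emph{definability} argument: to verify the arrow-type clause I must show that for every span extension $R'$ and all related ground arguments the two applications land in $R^\exp$. The key point is that first-order arguments are \emph{ground}, hence definable by contexts — a known name $n \in s$ is supplied directly, a fresh related name is supplied by wrapping the hole in a $\nu$-binder $\nu a.(\,\cdot\,)\,a\cdots$ (the same binder on both sides symmetrically realises a span extension pairing the new name with itself), and Booleans are supplied by $\true,\false$. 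Each such experiment — feed definable ground arguments, then observe the ground result — is realised by a genuine Boolean context, so observational equivalence forces the two sides to agree on every experiment, which is exactly the logical relation at $\tau$.

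The main obstacle is the completeness direction, and within it the definability argument for function types: one must realise an arbitrary related-argument experiment as a single context applied to both terms while correctly reconciling the fresh names generated \emph{by the term} against those generated \emph{by the context's $\nu$-binders}, keeping the two spans aligned. This is precisely where first-orderness is used — ground arguments are definable, so the space of experiments is captured by contexts, whereas at higher order one would need to supply function-valued arguments that a context cannot in general construct, and the coincidence of the logical relation with observational equivalence breaks down. I expect the bookkeeping of spans and fresh names (handled cleanly via the subdiagonal normalisation) to be routine once the definability of ground experiments is set up, so the conceptual weight rests on that realisability step.
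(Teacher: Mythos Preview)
The paper does not give its own proof of this theorem: it is quoted verbatim as \cite[Theorem~22]{stark:whatsnew} and used as a black box. So there is no paper-proof to compare against; your proposal is effectively a reconstruction of Stark's original argument.

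That said, your outline is essentially the right one and matches Stark's method. Soundness via a Fundamental Lemma (compatibility of all term formers with the relation, hence congruence, hence adequacy at $\tbool$) is standard and your case analysis is correct. For completeness you have identified the crux: at first-order types every argument one must supply in the $\forall$-clause of the arrow relation is of \emph{ground} type, and ground values (known names in $s$, fresh names, booleans) are realisable by a single context applied uniformly to both sides, with $\nu$-binders in the context standing in for span-extensions by fresh related pairs. One point you pass over quickly but which needs care in a full proof is the inductive step at types like $\tnames\to\tnames\to\cdots$: after applying the first argument the two sides may generate \emph{different} sets of fresh names, and those names can then be passed back in as later arguments. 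Stark handles this by showing that one may uniformly rename so that the span extension after each application is again a subdiagonal, and that the context can simulate ``feed back a name the term just produced'' via an auxiliary $\mathtt{let}$; your sketch gestures at this (``keeping the two spans aligned'') but a complete proof has to spell out this interaction between names generated by the term and names generated by the context. This is routine bookkeeping rather than a new idea, so your identification of definability of ground experiments as the conceptual core is accurate.
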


It is important to note that the logical relation is defined at all types $\tau$, but the relation at first-order types need only quantify over smaller first-order or ground types, making it possible to reason about observational equivalence of such terms inductively. In this paper we will primarily focus on the logical relation at first order types. In this setting we can tighten up Theorem 4.10 further, as we will explain: for any $s'\subseteq s$, $(\id_{s'})_\tau$ is a partial equivalence relation whose domain comprises those expressions that don’t leak any names when $s'$ is public, and $(\id_{s'})_\tau$ relates expressions whose behaviours are equivalent when $s'$ is public.

\begin{example}\label{eg:rel-priv}
  The privacy equation for the $\nu$-calculus~(\ref{eq:privnu}) can be established by means of this logical relation. Because $\{a, x\} \vdash (x = a) \Downarrow_\tbool \false$ whenever $a, x$ are distinct names, the logical relation implies that
  \begin{equation*}
    \lambda x. (x=a) \,(\id_\emptyset)_{\tnames \to \tbool}\, \lambda x.\false,
  \end{equation*}
  so that intuitively $a$ is private in $\lambda x. (x=a)$. This in turn implies that
  \begin{equation*}
    \nu a . \lambda x. (x=a) \,(\id_\emptyset)_{\tnames \to \tbool}\, \lambda x.\false,
  \end{equation*}
  which by \cref{thm:logical-relation-observational-equivalence} establishes the privacy equation of the $\nu$-calculus.
\end{example}

\begin{example}\label{eg:rel-trans}
  For names $a, b$, let $\lambda x.(a\,b)x: \tnames \to \tnames$ denote the term
  \begin{equation*}
    \lambda x.\ite {(x = a)} b \ite {(x = b)} a x.
  \end{equation*}
  This is the transposition of $a, b$, swapping $a$ and $b$ and otherwise behaving as the identity. It is clear that $\lambda x.(a\,b)x \,(\id_{\{a, b\}})_{\tnames \to \tnames}\, \lambda x.(a\,b)x$. One can also verify that $\lambda x.(a\,b)x \,(\id_\emptyset)_{\tnames \to \tnames}\, \lambda x.(a\,b)x$ as well. Here we no longer allow relations to be made with the names $a, b$, which we think of as private. Similarly, one can check that $\lambda x.(a\,b)x \,(\id_\emptyset)_{\tnames \to \tnames}\, \lambda x.x$, so that
  \begin{equation*}
    \nu a.\nu b.\lambda x.(a\,b)x \,(\id_\emptyset)_{\tnames \to \tnames}\, \lambda x.x
  \end{equation*}
  and by \cref{thm:logical-relation-observational-equivalence} $\nu a.\nu b.\lambda x.(a\,b)x$ is observationally equivalent to the identity.

  We note that it is not the case that $\lambda x.(a\,b)x \,(\id_{\{a\}})_{\tnames \to \tnames}\, \lambda x.(a\,b)x$, as this would require that $b \,(\id_{\{a\}})\, b$. The same holds if we swap $a$ for $b$. It is therefore apparent that the logical relations capture some of the connections between names; in this case, that if $a$ or $b$ are known, then by passing them as an argument to $\lambda x.(a\,b)x$ the other will be made public as well.
\end{example}

We notice that in these examples, private names are unmatched by spans. Intuitively, this is because the unmatched names do not affect the (observational) semantics of the terms; if we do not already know what they are, then they have no observable effects. In general, given $M \in \Exp_\tau(s \oplus t)$, we are interested in the names in $t$ with observable effects given that the names in $s$ are known to the environment. This motivates \cref{def:private-names} of private and leaked names.

\begin{notation}
  If $R\colon s_0 \leftrightharpoons s_1$ and $S\colon s_1 \leftrightharpoons s_2$ are spans, we let $R; S$ denote the \emph{composition of relations}, meaning that $m (R; S) n$ if there is some $z$ such that $m R z$ and $z S n$.
\end{notation}

\begin{lemma}\label{lem:transitivity}
  The logical relations are transitive at first-order types. This means that if $\sigma$ is a first-order type, $M_i \in \Exp_\sigma(s_i)$ for $i = 0, 1, 2$ and $R\colon s_0 \leftrightharpoons s_1, S\colon s_1 \leftrightharpoons s_2$ are spans such that $M_0 \,R_\sigma\, M_1$ and $M_1 \,S_\sigma\, M_2$, then $M_0 \,(R; S)_\sigma\, M_2$.
\end{lemma}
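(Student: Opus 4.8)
The plan is to prove transitivity of the logical relation at first-order types by induction on the structure of the type, using the characterization of the relation at first order in terms of the operational behavior of the terms. First I would use the normalization reduction mentioned in the text: by renaming related names, I can assume without loss of generality that all three spans are subdiagonals, writing $s_i = s \oplus t_i$ with $R = \id_{s_{01}}$ on a common public part and $S = \id_{s_{12}}$, so that the composite $R;S$ is again (a restriction to) an identity span on the names public to all three terms. This reduces the bookkeeping about arbitrary partial bijections to tracking which names remain public under composition, and lets me focus on the genuine content.

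The core of the argument is an induction on first-order types. At ground type ($\tbool$, $\tnames$) transitivity is immediate from the definitions: for $\tbool$ it is just transitivity of $=$, and for $\tnames$ it is transitivity of composition of the span $R$ on names (by the definition $n_1 \,R_\tnames\, n_2 \Leftrightarrow n_1 \,R\, n_2$ together with the definition of $R;S$). For a first-order function type $\sigma_1 \to \tau$, I would unfold the definition of $R^\val_{\sigma\to\tau}$: given $M_0 \,R_\sigma\, M_1$ and $M_1 \,S_\sigma\, M_2$, and given an extension span $T$ together with arguments $V_0, V_2$ related by $(R;S) \oplus T$, I must produce a matching evaluation of $M_0$'s and $M_2$'s bodies. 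The key maneuver is to supply a suitable \emph{intermediate} argument $V_1 \in \Val_{\sigma_1}(s_1 \oplus \cdots)$ and an intermediate extension span so that $V_0$ relates to $V_1$ via $R \oplus (\text{something})$ and $V_1$ relates to $V_2$ via $S \oplus (\text{something})$; applying the hypotheses $M_0 \,R_\sigma\, M_1$ and $M_1 \,S_\sigma\, M_2$ to these gives related evaluations at the smaller type $\tau$, to which the inductive hypothesis for transitivity at $\tau$ applies. Since $\tau$ is a strictly smaller first-order (or ground) type, the induction is well-founded; this is exactly the feature noted after \cref{thm:logical-relation-observational-equivalence}, that the first-order relation only quantifies over smaller types.

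The main obstacle will be constructing the intermediate argument $V_1$ and the intermediate extension span in a coherent way. The difficulty is that the relation quantifies over \emph{all} extension spans $T$ and \emph{all} related argument pairs, so to chain the two hypotheses I must decompose a given pair $(V_0, V_2)$ related by the composite span into a pair $(V_0, V_1)$ related by $R$-extension and a pair $(V_1, V_2)$ related by $S$-extension, where $V_1$ lives over the middle name-set $s_1$. Producing such a $V_1$ amounts to proving transitivity of the \emph{value} relation $R^\val$ at the argument type simultaneously — so the induction must be set up to carry both the expression-level and value-level transitivity statements together, and I must check that the chosen intermediate extension spans compose to the originally given $T$. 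Keeping track of the fresh names generated on each side during evaluation, and ensuring the spans on those fresh names compose correctly so that the final pair is related by $(R;S) \oplus T$, is where the careful work lies; once the span arithmetic is arranged, the appeals to the inductive hypothesis and to determinism of evaluation (Fig.~\ref{fig:opsem}) are routine.
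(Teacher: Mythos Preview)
Your approach---induction on the structure of the type---is exactly the paper's; the paper's entire proof is the single sentence ``This follows by induction on the type $\sigma$.'' Your elaboration of that induction is largely correct, including the separation into value and expression cases and the appeal to determinism of evaluation for the latter.

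There is one conceptual slip worth flagging, because it touches the reason the lemma is restricted to first-order types. At the function case you write that producing the intermediate argument $V_1$ ``amounts to proving transitivity of the value relation $R^\val$ at the argument type.'' It does not. Transitivity says: from $V_0$ related to $V_1$ and $V_1$ related to $V_2$, conclude $V_0$ related to $V_2$. What you actually need is the \emph{converse}: given $V_0 \,((R;S)\oplus T)_{\sigma_1}\, V_2$, you must \emph{produce} a middle value $V_1$ over an intermediate fresh-name set $s'_1$, together with a factorization $T = T';T''$ with $T' : s'_0 \leftrightharpoons s'_1$ and $T'' : s'_1 \leftrightharpoons s'_2$, such that $V_0 \,(R\oplus T')_{\sigma_1}\, V_1$ and $V_1 \,(S\oplus T'')_{\sigma_1}\, V_2$. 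This factorization property is strictly stronger than transitivity and does \emph{not} follow from the inductive hypothesis at $\sigma_1$; at higher types it can fail outright. It holds here only because at a first-order type $\sigma_1 \to \tau$ the argument type $\sigma_1$ is ground: for $\tbool$ take $V_1 = V_0 = V_2$; for $\tnames$, either $V_0\,(R;S)\,V_2$ and the definition of relational composition supplies the intermediate name in $s_1$, or $V_0\,T\,V_2$ and you may freely choose $s'_1$ and route $T$ through a fresh intermediate name. Once this step is stated correctly, the rest of your sketch (apply the two hypotheses to $(V_0,V_1)$ and $(V_1,V_2)$, then invoke the inductive hypothesis for $\tau$ on the resulting expression-level relations) goes through as you describe.
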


\begin{proof}
  This follows by induction on the type $\sigma$.
\end{proof}

\begin{proposition}\label{prop:minimal-relation}
  Let $\sigma$ be a first-order type and $M \in \Exp_\sigma(s \oplus t)$. There is a least $u \subseteq t$ such that $M \,(\id_{s \oplus u})_\sigma\, M$.
\end{proposition}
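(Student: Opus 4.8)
The plan is to show that the family $\mathcal C = \{u \subseteq t \mid M \,(\id_{s \oplus u})_\sigma\, M\}$ has a least element. Since $t$ is a finite set of names, $\mathcal C$ is a finite subfamily of the powerset of $t$, so it suffices to prove that $\mathcal C$ is nonempty and closed under binary intersection. Granting these two facts, $\bigcap \mathcal C$ is itself a finite intersection of members of $\mathcal C$, hence lies in $\mathcal C$, and is by construction contained in every element of $\mathcal C$; this is exactly the least $u$ we seek.

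For nonemptiness I would observe that $t \in \mathcal C$, i.e.\ $M \,(\id_{s \oplus t})_\sigma\, M$. This is just reflexivity of the logical relation at the full identity span covering all the free names of $M$, and it follows from \cref{thm:logical-relation-observational-equivalence} together with the reflexivity of observational equivalence: applying that theorem with both terms equal to $M$ and taking the public names to be all of $s \oplus t$, the trivial equivalence $M \approx_\sigma M$ yields $M \,(\id_{s \oplus t})_\sigma\, M$.

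The heart of the argument is closure under intersection. Suppose $u_1, u_2 \in \mathcal C$, so $M \,(\id_{s \oplus u_1})_\sigma\, M$ and $M \,(\id_{s \oplus u_2})_\sigma\, M$. The key observation is the span identity $\id_{s \oplus u_1}; \id_{s \oplus u_2} = \id_{s \oplus (u_1 \cap u_2)}$, which is immediate from the definition of relational composition: a name $m$ is matched with $n$ through the composite iff there is some $z$ with $m \,\id_{s \oplus u_1}\, z$ and $z \,\id_{s \oplus u_2}\, n$, which forces $m = z = n \in (s \oplus u_1) \cap (s \oplus u_2) = s \oplus (u_1 \cap u_2)$. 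Instantiating transitivity of the logical relation (\cref{lem:transitivity}) with all three terms equal to $M$, all three name-sets equal to $s \oplus t$, and $R = \id_{s \oplus u_1}$, $S = \id_{s \oplus u_2}$, then gives $M \,(\id_{s \oplus u_1}; \id_{s \oplus u_2})_\sigma\, M$, i.e.\ $M \,(\id_{s \oplus (u_1 \cap u_2)})_\sigma\, M$, so $u_1 \cap u_2 \in \mathcal C$.

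The substantive ingredient is \cref{lem:transitivity}, which is already established; once it is in hand the remaining steps are routine. The only points demanding care are the identification of the composite span $\id_{s \oplus u_1}; \id_{s \oplus u_2}$ with $\id_{s \oplus (u_1 \cap u_2)}$, and the legitimacy of instantiating \cref{lem:transitivity} with a single term over a single name-set and with $R, S$ of the special identity-span form; both are direct unwindings of the definitions. I therefore expect no genuine obstacle beyond the bookkeeping of spans, and the finiteness of $t$ to pass from binary to arbitrary intersections.
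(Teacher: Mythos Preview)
Your proposal is correct and follows essentially the same approach as the paper: the paper's proof also observes that $\id_{s \oplus u_0}; \id_{s \oplus u_1} = \id_{s \oplus (u_0 \cap u_1)}$ and invokes \cref{lem:transitivity} to conclude closure under intersection, then takes $u$ to be the intersection of all such sets. The only difference is cosmetic: the paper leaves nonemptiness implicit (reflexivity of the logical relation at the full identity span being a standard fact), whereas you route it through \cref{thm:logical-relation-observational-equivalence}, which is a perfectly valid if slightly heavier justification.
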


\begin{proof}
  If $u_0, u_1 \subseteq t$, $M \,(\id_{s \oplus u_0})_\sigma\, M$ and $M \,(\id_{s \oplus u_1})_\sigma\, M$, then $\id_{s \oplus u_0}; \id_{s \oplus u_1} = \id_{s \oplus (u_0 \cap u_1)}$ so by transitivity (\ref{lem:transitivity}) we have $M \,(\id_{s \oplus (u_0 \cap u_1)})_\sigma\, M$. We can therefore take $u$ to be the intersection of all such sets.
\end{proof}

\begin{proposition}\label{prop:logical-relations-minimality}
  Let $\sigma$ be a first-order type. Let $M_i \in \Exp_\sigma(s \oplus t_i)$ and suppose there is some $R\colon t_1 \leftrightharpoons t_2$ such that $M_1 \,(\id_s \oplus R)_\sigma\, M_2$. Let $u_i \subseteq t_i$ be the least set such that $M_i \,(\id_{s \oplus u_i})_\sigma\, M_i$. Then after possibly renaming names in $u_i$ we have $u_1 = u_2 = u$, $\id_u \subseteq R$ and $M_1 \,(\id_{s \oplus u})_\sigma\, M_2$.
\end{proposition}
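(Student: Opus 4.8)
The plan is to exploit that, at first-order types, the logical relation is a partial equivalence relation: it is transitive by \cref{lem:transitivity}, and symmetric in the sense that $M_1\,R_\sigma\,M_2$ iff $M_2\,(R^{-1})_\sigma\,M_1$ (immediate by induction on $\sigma$ from the clauses in \cref{fig:logicalrelation}). The whole argument is then a sequence of compositions of the given relation $M_1\,(\id_s\oplus R)_\sigma\,M_2$ with the minimal self-relations $M_i\,(\id_{s\oplus u_i})_\sigma\,M_i$ and with inverse spans, each time reading off information from the minimality of $u_i$ (\cref{prop:minimal-relation}). The only computation needed is that spans of the form $\id_s\oplus R$ compose componentwise, $(\id_s\oplus R);(\id_s\oplus S)=\id_s\oplus(R;S)$, since $s$ is disjoint from the $t_i$ and is fixed throughout.

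First I would locate the leaked names inside $\dom R$ and $\cod R$. Composing $M_1\,(\id_s\oplus R)_\sigma\,M_2$ with its symmetric form $M_2\,(\id_s\oplus R^{-1})_\sigma\,M_1$ yields $M_1\,(\id_{s\oplus\dom R})_\sigma\,M_1$, because $R;R^{-1}=\id_{\dom R}$; minimality of $u_1$ then gives $u_1\subseteq\dom R$, and symmetrically $u_2\subseteq\cod R$. Next I tighten the given relation: composing the self-relation $M_1\,(\id_{s\oplus u_1})_\sigma\,M_1$ with $M_1\,(\id_s\oplus R)_\sigma\,M_2$, and using $u_1\subseteq\dom R$ so that $\id_{u_1};R=R|_{u_1}$, produces $M_1\,(\id_s\oplus R|_{u_1})_\sigma\,M_2$, where $R|_{u_1}$ is the restriction of $R$ to the domain $u_1$. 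This relation is already the desired $M_1\,(\id_{s\oplus u})_\sigma\,M_2$, once we know $R$ carries $u_1$ exactly onto $u_2$.

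That last point, namely $R(u_1)=u_2$, is the crux, and I expect it to be the only delicate step, since it requires pushing minimality through on both sides while tracking the domain and codomain of the partial bijection. For $\supseteq$: composing $M_1\,(\id_s\oplus R|_{u_1})_\sigma\,M_2$ with its inverse $M_2\,(\id_s\oplus (R|_{u_1})^{-1})_\sigma\,M_1$ gives $M_2\,(\id_{s\oplus R(u_1)})_\sigma\,M_2$, so $u_2\subseteq R(u_1)$ by minimality of $u_2$. For $\subseteq$: running the symmetric version of the tightening step on $M_2$ yields $M_2\,(\id_s\oplus R^{-1}|_{u_2})_\sigma\,M_1$, and composing it with its inverse gives $M_1\,(\id_{s\oplus R^{-1}(u_2)})_\sigma\,M_1$, whence $u_1\subseteq R^{-1}(u_2)$, i.e.\ $R(u_1)\subseteq u_2$. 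Thus $R$ restricts to a bijection $R|_{u_1}\colon u_1\to u_2$; renaming the names of $u_i$ along this bijection makes $u_1=u_2=u$ with $\id_u\subseteq R$, and the tightened relation from the previous paragraph becomes exactly $M_1\,(\id_{s\oplus u})_\sigma\,M_2$, establishing all three conclusions.
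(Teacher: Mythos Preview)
Your proof is correct and follows essentially the same approach as the paper's: use transitivity and symmetry of the logical relation to first locate $u_1\subseteq\dom R$, then tighten to $M_1\,(\id_s\oplus R|_{u_1})_\sigma\,M_2$, and finally argue that $R|_{u_1}$ is a bijection onto $u_2$ so that renaming finishes. The paper compresses the last step into the phrase ``a symmetric argument shows that $R\restriction_{u_1}$ is a bijection of $u_1$ onto $u_2$'', whereas you spell out both inclusions $u_2\subseteq R(u_1)$ and $R(u_1)\subseteq u_2$; you are also more explicit than the paper in stating and justifying the symmetry property of the logical relation, which the paper silently uses when invoking $R;R^{-1}$.
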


\begin{proof}
We know that $R; R^{-1} = \id_{\dom(R)}$, so $M_1 \,(\id_{s \oplus dom(R)})_\sigma\, M_1$ by transitivity (\ref{lem:transitivity}). As $u_1$ is least with this property, $u_1 \subseteq dom(R)$.

Now consider the restriction $R\restriction_{u_1}$ of $R$ to $u_1$. Because $R\restriction_{u_1} = \id_{u_1}; R$, we have by transitivity that $M_1 \,(\id_s \oplus R\restriction_{u_1})_\sigma\, M_2$.

A symmetric argument shows that $R\restriction_{u_1}$ is a bijection of $u_1$ onto $u_2$. Therefore, after renaming names, we can assume that $u_1 = u_2 = u$ and $R\restriction_{u} = \id_u$.
\end{proof}

\begin{definition}[Private and Leaked Names]\label{def:private-names}
  Let $M \in \Exp_\tau(s \oplus t)$. We define \emph{the set of leaked names in $M$ relative to $s$}, denoted by $\Leak(M, s)$, to be the least $u \subseteq t$ such that $M \,(\id_{s \oplus u})_\tau\, M$. We call the names that are not leaked \emph{private relative to $s$}, denoted $\Priv(M, s) = t \setminus \Leak(M, s)$. Given a type $\tau$ and a set of names $s$, we let
   \[\Safe^s_\tau = \{M \in \Exp_\tau(s \oplus t) \mid \Leak(M, s) = \emptyset\} = \{M \in \Exp_\tau(s \oplus t) \mid M \,(\id_s)_\tau\, M\}\]
   be the set of terms that leak no names relative to $s$. If $s$ is empty, we write $\Priv(M)$, $\Leak(M)$ and $\Safe_\tau$.
\end{definition}

\begin{remark}
By \cref{lem:transitivity,prop:logical-relations-minimality}, the relation $(\id_s)_\tau$ induces an equivalence relation on $\Safe^s_\tau$. In fact, this corresponds to the usual notion of reducibility by logical relations. Concretely, one could equivalently define $\Safe^s_\tau$ directly as follows:
\begin{gather*}
\true, \false \in \Safe^s_\tbool \qquad n \in \Safe^s_\tnames \Leftrightarrow n \in s\\
\lambda x.M \in \Safe^s_{\sigma \to \tau} \Leftrightarrow \forall s', V \in \Safe^{s \oplus s'}_\sigma, M[V/x] \in \Safe^{s \oplus s'}_\tau\\
M \in \Safe^s_\tau \Leftrightarrow \exists s', V \in \Safe^{s \oplus s'}_\tau, M \Downarrow (s')V.
\end{gather*}
\end{remark}

\begin{example}\label{eg:private-names}
We have $\Priv(\lambda x.(x = a)) = \{a\}$ and $\lambda x.(x = a) \in \Safe_{\tnames \to \tbool}$ (cf. \cref{eg:rel-priv}). Similarly, $\Priv(\lambda x.(a\,b)x) = \{a, b\}$ and $\lambda x.(a\,b)x \in \Safe_{\tnames \to \tnames}$ (cf. \cref{eg:rel-trans}).
\end{example}

In \cref{eg:rel-priv,eg:rel-trans,eg:private-names}, we identified private names and found logically related terms that eliminate them. We will now show that this can be done for all terms of first-order type by constructing a normal form that recursively eliminates private names.

\begin{notation}
  If $s = \{n_1, \dots, n_k\}$ is a set of names, we write $\nu s.M$ as shorthand for $\nu n_1.\dots.\nu n_k.M$. We also write
  \begin{equation*}
    \ite {x = n \in s} {M_n} {M_0}
  \end{equation*}
  as shorthand for
  \begin{equation*}
    \ite {x = n_1} {M_{n_1}} {\mathtt{if}}~\cdots~{\mathtt{else}}~\ite {x = n_k} {M_{n_k}} {M_0}.
  \end{equation*}
\end{notation}

\begin{definition}[Normal form for privacy]
Let $\sigma$ be a first-order type and let $M \in \Safe^s_\sigma$ for $M \in \Exp_\tau(s \oplus t)$. We define the \emph{normal form} $\norm{M, s}$ of $M$ by induction on the type $\sigma$ as follows:

\textbf{Ground case:} If $\sigma$ is a ground type and $M$ is a value, then we let $\norm{M, s} = M$.

\textbf{Function case $\tbool \to \tau$:} Suppose $M$ is a value of type $\tbool \to \tau$ and that we have already constructed normal forms for expressions of type $\tau$. Expanding $M$ into its $\eta$-normal form, we have
\begin{equation*}
M = \lambda x.\ite {x = \true} {M_\true} {M_\false}
\end{equation*}
for some $M_\true, M_\false \in \Exp_\tau(s \oplus t)$. We know $M \,(\id_s)_\sigma\, M$, so we have that $M_\true \,(\id_s)_\tau\, M_\true$ and $M_\false \,(\id_s)_\tau\, M_\false$. We then define
\begin{equation*}
\norm{M, s} = \lambda x.\ite {x = \true} {\norm{M_\true, s}} {\norm{M_\false, s}}.
\end{equation*}

\textbf{Function case $\tnames  \to \tau$:} Suppose that $M$ is a value of type $\tnames \to \tau$ and that we have already constructed normal forms for expressions of type $\tau$. Expanding $M$ to its $\eta$-normal form, we have
\begin{equation*}
M = \lambda x.\ite {x = n \in s \oplus t} {M_n} {M_0}
\end{equation*}
for some $M_n \in \Exp_\tau(s \oplus t)$ and $M_0 \in \Exp_\tau(s \oplus t \oplus \{x\})$. In this case, $M \,(\id_s)_\sigma\, M$ implies that $M_0 \,(\id_{s \oplus \{x\}})_\tau\, M_0$ and $M_n \,(\id_s)_\tau\, M_n$ for all $n \in s$. We then define
\begin{equation*}
\norm{M, s} = \lambda x.\ite {x = n \in s} {\norm{M_n, s}} {\norm{M_0, s \oplus \{x\}}}.
\end{equation*}

\textbf{Expression case:} Suppose that we have constructed normal forms for values of type $\sigma$. Because $M \,(\id_s)_\sigma\, M$, there is some $V \in \Val_\sigma(s \oplus t \oplus w)$ such that $s \oplus t \vdash M \Downarrow_\sigma (w)V$ and $V \,(\id_{s \oplus w'})_\sigma\, V$ for some $w' \subseteq w$. Let $u = \Leak(V, s) \subseteq w'$. Then $V \,(\id_{s \oplus u})_\tau\, V$, so we can define
\begin{equation*}
\norm{M, s} = \nu u.\norm{V, s \oplus u}.
\end{equation*}
\end{definition}

If $s$ is empty, we omit it and write $\norm{M}$.

\begin{example}\label{eg:normal-forms}
  This normal form generalizes the observations in \cref{eg:rel-priv,eg:rel-trans}. Specifically, we have
  \begin{equation*}
    \norm{\nu a.\lambda x.(x = a)} = \lambda x.\false \qquad\text{and}\qquad \norm{\nu a.\nu b.\lambda x.(a\,b)x} = \lambda x.x.
  \end{equation*}
  The choice of $u = \Leak(V, s)$ in the expression case of our construction is crucial here; it is of course true that
  \begin{equation*}
    \lambda x.(a\,b)x \,(\id_{\{a, b\}})_{\tnames \to \tnames}\, \lambda x.(a\,b)x,
  \end{equation*}
  but this does not help us identify and eliminate the private names $a, b$.
\end{example}

\begin{proposition}\label{prop:normal-form-facts}
Let $\tau$ be a first-order type and $M \in \Safe^s_\tau$. Then
\begin{enumerate}
\item if $M$ is a value, so is $\norm{M, s}$;\label{prop:normal-form-facts-canonical}
\item the names that appear in $\norm{M, s}$ are a subset of the names that appear in $M$;\label{prop:normal-form-facts-name-inclusion}
\item $\norm{M, s}$ eliminates the names in $\Priv(M, s)$ (i.e. $\norm{M, s} \in \Exp_\tau(s)$);\label{prop:normal-form-facts-names}
\item $\norm{M, s} \in \Safe^s_\tau$;\label{prop:normal-form-facts-leak}
\item $\norm{M, s}$ is well-defined up to renaming bound variables and names; and\label{prop:normal-form-facts-wd}
\item $M \,(\id_s)_\tau\, \norm{M, s}$.\label{prop:normal-form-facts-relations}
\end{enumerate}
\end{proposition}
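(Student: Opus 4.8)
The plan is to prove all six parts simultaneously by induction on the first-order type $\sigma$, following exactly the case structure of the definition of $\norm{M,s}$ and treating values before expressions within each type. The semantic heart is \cref{prop:normal-form-facts-relations}, the assertion $M\,(\id_s)_\tau\,\norm{M,s}$; the remaining parts are either bookkeeping that rides along the same induction or are corollaries of it. At each node of the recursion the relevant subterms are guaranteed to be safe (relative to the appropriate name set) by the remarks accompanying the definition of $\norm{M,s}$ and by the definition of $\Leak$, so the recursive calls are legitimate.

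Parts \ref{prop:normal-form-facts-canonical}--\ref{prop:normal-form-facts-names} are direct inductions on the construction. For \ref{prop:normal-form-facts-canonical}, the ground and function cases visibly return a value (a ground constant or a $\lambda$-abstraction), and the expression case is only invoked when $M$ is not already a value. For \ref{prop:normal-form-facts-name-inclusion} and \ref{prop:normal-form-facts-names} I would track free and bound names through each clause: the function case for $\tnames\to\tau$ discards the branches for $n\in t$ and only tests $x = n \in s$, so no name outside $s$ is introduced there, and in the expression case the only names added are the $\nu$-bound names $u\subseteq w$ produced by evaluation, which \cref{prop:normal-form-facts-wd} lets us choose among the names occurring in $M$. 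Combined with the inductive hypotheses on the recursive calls, this shows that the free names of $\norm{M,s}$ lie in $s$ and that every name occurring in $\norm{M,s}$ occurs in $M$.

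For the crucial part \ref{prop:normal-form-facts-relations} I verify the logical relation clause of \cref{fig:logicalrelation} directly in each case. The ground case is immediate since $\norm{M,s}=M$ and $M\in\Safe^s_\sigma$. In the expression case, $M$ evaluates to $V$ generating $w$, while $\nu u.\norm{V,s\oplus u}$ evaluates, generating exactly $u=\Leak(V,s)\subseteq w$, to the value $\norm{V,s\oplus u}$; taking the span $\id_u$ between the generated names, leaving $w\setminus u$ unmatched on the left, reduces the expression relation to $V\,(\id_{s\oplus u})_\sigma\,\norm{V,s\oplus u}$, which is the inductive hypothesis for the value $V$ (here $V\,(\id_{s\oplus u})_\sigma\,V$ holds by the defining minimality of $\Leak$). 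The $\tbool\to\tau$ case is the simpler analogue: related boolean arguments are equal, both sides branch identically, and the inductive hypothesis at $\tau$ applies. The essential case is $\tnames\to\tau$: given a related pair $n_1\,(\id_s\oplus R')\,n_2$, the point is that the domain of $\id_s\oplus R'$ meets the private names $t$ only trivially, since any related argument lies either in $s$ or among the fresh names paired by $R'$ and therefore never lands in $t$. Thus if $n_1\in s$ both sides branch to $M_{n_1}$, respectively $\norm{M_{n_1},s}$, and I apply the inductive hypothesis at $\tau$; otherwise both sides fall through to their default branches, and I apply the inductive hypothesis for $M_0$ at $s\oplus\{x\}$ with $x$ identified along the pair $(n_1,n_2)$. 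A routine monotonicity lemma --- that Stark's relation is preserved under extending the span $\id_s\oplus R'$ --- closes both subcases.

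Once \ref{prop:normal-form-facts-relations} is established, part \ref{prop:normal-form-facts-leak} is a corollary: the relation at $\id_s$ is a partial equivalence relation on safe terms (symmetric because $\id_s^{-1}=\id_s$, transitive by \cref{lem:transitivity}), so $\norm{M,s}\,(\id_s)_\tau\,M$ and $M\,(\id_s)_\tau\,\norm{M,s}$ compose to $\norm{M,s}\,(\id_s)_\tau\,\norm{M,s}$, i.e.\ $\norm{M,s}\in\Safe^s_\tau$. Part \ref{prop:normal-form-facts-wd} rests on the evaluation relation being deterministic up to renaming of generated names, together with the canonicity of the $\eta$-expansion and of the least leaked set $\Leak(V,s)$ (\cref{prop:minimal-relation}); these pin down every choice in the construction up to renaming of bound variables and names. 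I expect the main obstacle to be the $\tnames\to\tau$ case of part \ref{prop:normal-form-facts-relations}: formalizing precisely that private names never arise as related arguments, and marshalling the span extensions and the value/expression passage of the relation so that the inductive hypotheses apply cleanly.
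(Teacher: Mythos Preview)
Your proposal is correct and follows essentially the same inductive structure as the paper's proof, supplying more detail in the $\tnames\to\tau$ case of part~(\ref{prop:normal-form-facts-relations}). The one minor difference is that the paper derives part~(\ref{prop:normal-form-facts-leak}) directly from part~(\ref{prop:normal-form-facts-names})---once $\norm{M,s}\in\Exp_\tau(s)$ there are no names outside $s$ to leak---rather than from part~(\ref{prop:normal-form-facts-relations}) via symmetry and transitivity of $(\id_s)_\tau$.
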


\begin{proof}
(\ref{prop:normal-form-facts-canonical}) is clear by construction. (\ref{prop:normal-form-facts-leak}) follows trivially from (\ref{prop:normal-form-facts-names}). We prove (\ref{prop:normal-form-facts-name-inclusion}), (\ref{prop:normal-form-facts-names}), (\ref{prop:normal-form-facts-wd}) and (\ref{prop:normal-form-facts-relations}) by induction on $\tau$, following the construction of the normal form $\norm{M, s}$. For (\ref{prop:normal-form-facts-name-inclusion}) and (\ref{prop:normal-form-facts-names}), the induction steps are clear and so is the case where $M$ is a value of type $\tbool$. If $M$ is a value of type $\tnames$, then $M \,(\id_s)_\tnames\, M$ implies that $M \in s$, and so $\norm{M, s} = M \in \Exp_\tnames(s)$. For (\ref{prop:normal-form-facts-wd}), the cases where $M$ is a value are clear, and the expression case follows because we made a canonical choice of $u = \Leak(V, s)$ in the construction of $\norm{M, s}$. For (\ref{prop:normal-form-facts-relations}), the expression case follows directly from the inductive hypothesis and the definition of logical relations. In the case where $M$ is a value and $\tau = \N \to \sigma$, we $\eta$-expand and write
\[M = \lambda x.\ite {x = n \in s \oplus t} {M_n} {M_0}.\]
We need to verify that $M_0 \,(\id_{s \oplus \{x\}})_\sigma\, \norm{M_0, s \oplus \{x\}}$ and that $M_n \,(\id_s)_\sigma\, \norm{M_n, s}$ for $n \in s$, both of which follow from the inductive hypothesis. The case that $M$ is a value and $\sigma = \tbool \to \tau$ is handled similarly.
\end{proof}

\begin{example}
As noted in \cref{eg:private-names}, $\Leak(\lambda x.(x = a)) = \{a\}$ and $\Leak(\lambda x.(a\,b)x) = \{a, b\}$, and these are indeed eliminated from the normal forms computed in \cref{eg:normal-forms}.
\end{example}

We can now equate the problem of checking if two terms are observationally equivalent to one of verifying the equality of their normal forms:

\begin{theorem}\label{thm:observational-equivalence-normal-form}
  Let $\sigma$ be a first-order type and let $M_i \in \Exp_\sigma(s \oplus t_i)$ for $i = 1, 2$. The following are equivalent:
  \begin{enumerate}
    \item $M_1 \,(\id_s)_\sigma\, M_2$;
    \item $M_i \in \Safe^s_\sigma$ and $\norm{M_1, s} = \norm{M_2, s}$ after possibly renaming bound variables and names.
  \end{enumerate}
\end{theorem}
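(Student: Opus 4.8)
The plan is to prove the two implications separately. Direction $(2)\Rightarrow(1)$ is a short consequence of the properties already established for $\norm{-,s}$, while direction $(1)\Rightarrow(2)$ proceeds by induction on the type $\sigma$, following the case split of the normal-form construction.

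For $(2)\Rightarrow(1)$, I would invoke \cref{prop:normal-form-facts-relations} to obtain $M_i\,(\id_s)_\sigma\,\norm{M_i,s}$ for $i=1,2$, and \cref{prop:normal-form-facts-leak} to see that each $\norm{M_i,s}\in\Safe^s_\sigma$. Since $(\id_s)_\sigma$ restricts to an equivalence relation on $\Safe^s_\sigma$ (the Remark following \cref{def:private-names}), it is in particular symmetric there, so from $M_2\,(\id_s)_\sigma\,\norm{M_2,s}$ we get $\norm{M_2,s}\,(\id_s)_\sigma\,M_2$. Feeding the hypothesis $\norm{M_1,s}=\norm{M_2,s}$ into the chain $M_1\,(\id_s)_\sigma\,\norm{M_1,s}=\norm{M_2,s}\,(\id_s)_\sigma\,M_2$ and composing with transitivity (\cref{lem:transitivity}), using $\id_s;\id_s=\id_s$, yields $M_1\,(\id_s)_\sigma\,M_2$.

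For $(1)\Rightarrow(2)$, assume $M_1\,(\id_s)_\sigma\,M_2$. I would first record that both terms are safe: the logical relation is symmetric under span inversion and $\id_s^{-1}=\id_s$, so $M_2\,(\id_s)_\sigma\,M_1$, and transitivity with $\id_s;\id_s=\id_s$ gives $M_1\,(\id_s)_\sigma\,M_1$ and $M_2\,(\id_s)_\sigma\,M_2$, i.e.\ $M_i\in\Safe^s_\sigma$; this licenses forming the normal forms. The equality $\norm{M_1,s}=\norm{M_2,s}$ is then shown by induction on $\sigma$. In the ground-value cases the relation forces $M_1=M_2$ outright (for $\tbool$ by definition of $R_\tbool$, for $\tnames$ because $\id_s$ relates a name only to itself, so $M_1=M_2\in s$). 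In the function-value case $\tbool\to\tau$, instantiating the relation at the two arguments $\true$ and $\false$ gives $M_{1,b}\,(\id_s)_\tau\,M_{2,b}$ for each $b$, and the inductive hypothesis at the smaller type $\tau$ equates the corresponding normal forms. In the function-value case $\tnames\to\tau$, I instantiate at each public name $n\in s$ (related to itself) to get $M_{1,n}\,(\id_s)_\tau\,M_{2,n}$, and at a single fresh argument paired by a one-element span, where both sides fall through to their default branches, giving $M_{1,0}\,(\id_{s\oplus\{x\}})_\tau\,M_{2,0}$; induction then closes both, matching the two surviving parts of $\norm{-,s}$.

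The expression case is where the substantive work lies and is where I expect the main obstacle. Here $M_1,M_2$ evaluate to values $V_1,V_2$ generating fresh names $w_1,w_2$, and unfolding the expression clause of the relation produces a span $R'\colon w_1\leftrightharpoons w_2$ with $V_1\,(\id_s\oplus R')_\sigma\,V_2$. The difficulty is that $\norm{-,s}$ discards exactly the leaked sets $u_i=\Leak(V_i,s)$, which are defined independently on the two sides, so these must be reconciled. This is precisely what \cref{prop:logical-relations-minimality} supplies: after renaming, $u_1=u_2=u$, with $\id_u\subseteq R'$ and $V_1\,(\id_{s\oplus u})_\sigma\,V_2$; applying the value case of the induction at type $\sigma$ then gives $\norm{V_1,s\oplus u}=\norm{V_2,s\oplus u}$, whence $\norm{M_1,s}=\nu u.\,\norm{V_1,s\oplus u}=\nu u.\,\norm{V_2,s\oplus u}=\norm{M_2,s}$. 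Two points need care throughout: the branches $M_{i,n}$ for private $n\in t_i$ never arise as instances of the relation, since a private name lies outside the domain of any admissible span and so cannot be passed as an argument; and all equalities are read up to renaming of bound variables and fresh names, consistent with \cref{prop:normal-form-facts-wd} and the determinism of evaluation.
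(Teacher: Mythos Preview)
Your proposal is correct and follows essentially the same approach as the paper: the $(2)\Rightarrow(1)$ direction via \cref{prop:normal-form-facts}(\ref{prop:normal-form-facts-relations}) and transitivity, and the $(1)\Rightarrow(2)$ direction by induction along the normal-form construction, with the expression case handled by \cref{prop:logical-relations-minimality} exactly as you describe. Your treatment is slightly more explicit about symmetry and about why private branches $M_{i,n}$ for $n\in t_i$ are irrelevant, but the structure and key lemma invocations coincide with the paper's proof.
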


\begin{proof}
If $M_i \in \Safe^s_\tau$ and $\norm{M_1, s} = \norm{M_2, s}$, then $\norm{M_1, s} \,(\id_s)_\sigma\, \norm{M_2, s}$ and so by transitivity of logical relations (\ref{lem:transitivity}) and \cref{prop:normal-form-facts} we have $M_1 \,(\id_s)_\sigma\, M_2$.

For the converse, suppose that $M_1 \,(\id_s)_\sigma\, M_2$. By transitivity, it is clear that $M_i \,(\id_s)_\sigma\, M_i$.

To show that $\norm{M_1, s} = \norm{M_2, s}$, we argue by induction, following the construction of the normal forms. The base case is clear. Now consider the inductive step at values. In the case that $\sigma = \tnames \to \tau$, we $\eta$-expand and write
\[M_i = \lambda x.\ite {x = n \in s \oplus t_i} {M^i_n} {M^i_0}.\]
By definition of logical relations, because $M_1 \,(\id_s)_\sigma\, M_2$, we have $M^1_0 \,(\id_{s \oplus \{x\}})_\sigma\, M^2_0$ and $M^1_n \,(\id_s)_\sigma\, M^2_n$ for $n \in s$. By our inductive hypothesis, this means that $\norm{M^1_0, s \oplus \{x\}} = \norm{M^2_0, s \oplus \{x\}}$ and $\norm{M^1_n, s} = \norm{M^2_n, s}$ for $n \in s$. It follows that $\norm{M_1, s} = \norm{M_2, s}$. The case that $\sigma = \tbool \to \tau$ is the same.

In the case of expressions, let $V_i \in \Val_\sigma(s \oplus t_i \oplus t'_i)$ be the values such that $s \oplus t_i \vdash M_i \Downarrow (t'_i)V_i$. Let $u_i = \Leak(V, s) \subseteq t_i'$. Then $V_i \in \Safe^{s \oplus u_i}_\tau$ and $\norm{M_i, s} = \nu u_i.\norm{V_i, s \oplus u_i}$. We know that $M_1 \,(\id_s)_\sigma\, M_2$, so there is some $R\colon t'_1 \leftrightharpoons t'_2$ such that $V_1 \,(\id_s \oplus R)_\sigma\, V_2$. By \cref{prop:logical-relations-minimality}, after possibly renaming names we have $u_1 = u_2 = u$, $\id_u \subseteq R$ and $V_1 \,(\id_{s \oplus u})_\sigma\, V_2$. We therefore have $\norm{V_1, s \oplus u} = \norm{V_2, s \oplus u}$ by our inductive hypothesis, and so $\norm{M_1, s} = \norm{M_2, s}$.
\end{proof}

\subsection{Full Abstraction at First-Order Types}
\label{sec:proof_fullabstraction}

At first-order types, it is sufficient to eliminate private names in order to prove abstraction:

\begin{theorem}\label{thm:abstract-iff-normal-form-equal}
  Let $\cat{C}$ be a categorical model of the $\nu$-calculus. $\cat{C}$ is fully abstract at first-order types if and only if for all first-order types $\tau$ and all $M \in \Exp_\tau(s)$, we have
  \begin{equation}\label{eq:cat-preserve-normal-form}
    \sem{M}_{\neq s} = \sem{\norm{M, s}}_{\neq s}.
  \end{equation}
\end{theorem}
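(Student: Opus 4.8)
The plan is to read ``fully abstract at first order'' as the biconditional $M_1 \approx_\tau M_2 \iff \sem{M_1}_{\neq s} = \sem{M_2}_{\neq s}$ (for first-order $\tau$ and $M_1, M_2 \in \Exp_\tau(s)$) and to collapse both sides to a syntactic comparison of normal forms. The engine is the chain of equivalences
\[
  M_1 \approx_\tau M_2 \iff M_1\,(\id_s)_\tau\,M_2 \iff \norm{M_1, s} = \norm{M_2, s},
\]
where the first step is \cref{thm:logical-relation-observational-equivalence} and the second is \cref{thm:observational-equivalence-normal-form}. Here I would note that any $M \in \Exp_\tau(s)$ corresponds to the case $t = \emptyset$, so $\Leak(M, s) = \emptyset$ and $M \in \Safe^s_\tau$ automatically, which makes $\norm{M, s}$ defined and lets me drop the $\Safe$ side-condition of \cref{thm:observational-equivalence-normal-form}. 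I would also record from \cref{prop:normal-form-facts} that $\norm{M, s} \in \Exp_\tau(s)$ and $M\,(\id_s)_\tau\,\norm{M, s}$, so that $M \approx_\tau \norm{M, s}$: the normal form is a canonical representative of the $\approx_\tau$-class.

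For the forward direction I would assume full abstraction and apply it to the equivalence $M \approx_\tau \norm{M, s}$ just recorded; its ``$\approx \Rightarrow$ equal denotation'' half yields $\sem{M}_{\neq s} = \sem{\norm{M, s}}_{\neq s}$, which is exactly \eqref{eq:cat-preserve-normal-form}.

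For the converse I would assume \eqref{eq:cat-preserve-normal-form} and establish the two halves of the biconditional separately. The completeness half is the one that uses the hypothesis: if $M_1 \approx_\tau M_2$ then the engine gives $\norm{M_1, s} = \norm{M_2, s}$ up to renaming of bound variables and names, hence $\sem{\norm{M_1, s}}_{\neq s} = \sem{\norm{M_2, s}}_{\neq s}$ (syntactically identical terms, up to $\alpha$-renaming, have identical denotations), and feeding \eqref{eq:cat-preserve-normal-form} into each side gives $\sem{M_1}_{\neq s} = \sem{M_2}_{\neq s}$. The adequacy half, $\sem{M_1}_{\neq s} = \sem{M_2}_{\neq s} \Rightarrow M_1 \approx_\tau M_2$, is not new content: it is the adequacy of the categorical semantics (\cref{thm:sound-adequate}), a standing property of the models in question.

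The main obstacle is conceptual bookkeeping rather than hard analysis: one has to notice that the hypothesis \eqref{eq:cat-preserve-normal-form} supplies exactly the completeness direction of full abstraction, while the converse (adequacy) direction must be imported from soundness of the model and cannot be recovered from \eqref{eq:cat-preserve-normal-form} alone, since that would require $\sem{\cdot}$ to be injective on normal forms, which is not assumed. All the genuine mathematical weight sits in the already-proved normal-form characterization \cref{thm:observational-equivalence-normal-form}, which converts the semantic question into the syntactic identity of normal forms; the rest is a short diagram chase through the displayed equivalences.
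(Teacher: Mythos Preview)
Your proposal is correct and follows essentially the same route as the paper's own proof: necessity via $M \approx_\tau \norm{M,s}$ (from \cref{prop:normal-form-facts} and \cref{thm:logical-relation-observational-equivalence}), sufficiency via the chain $M_1 \approx_\tau M_2 \Rightarrow \norm{M_1,s} = \norm{M_2,s} \Rightarrow \sem{M_1}_{\neq s} = \sem{M_2}_{\neq s}$ using \cref{thm:observational-equivalence-normal-form} and the hypothesis. You are more explicit than the paper in two respects: you spell out why $\norm{M,s}$ is defined (the $t=\emptyset$ observation giving $M \in \Safe^s_\tau$ automatically via \cref{thm:logical-relation-observational-equivalence}), and you flag that the adequacy half of full abstraction must be imported from \cref{thm:sound-adequate} rather than derived from \eqref{eq:cat-preserve-normal-form}; the paper leaves both points implicit.
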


\begin{proof}
That this is necessary is clear, as by \cref{prop:normal-form-facts} normal forms preserve logical relations and therefore (by \cref{thm:logical-relation-observational-equivalence}) observational equivalence. To see that it is sufficient, suppose that $\cat{C}$ satisfies \eqref{eq:cat-preserve-normal-form} and let $M_1, M_2 \in \Exp_\tau(s)$ for a first-order type $\tau$. If $M_1 \approx_\tau M_2$, then by \cref{thm:logical-relation-observational-equivalence,thm:observational-equivalence-normal-form} $\norm{M_1, s} = \norm{M_2, s}$, and so
\[\sem{M_1}_{\neq s} = \sem{\norm{M_1, s}}_{\neq s} = \sem{\norm{M_2, s}}_{\neq s} = \sem{M_2}_{\neq s}.\vspace{-6mm}\]
\end{proof}

For the remainder of this section, we will let the space of names be the circle $\T = [0, 1)$ and we will let $\nu$ be the uniform measure on $\T$ (we may assume this is the case by \cref{rmk:any-sbs}). We choose to work with the circle as there is a canonical group structure $(\T, +)$ on $\T$, namely addition modulo 1, that is both compatible with the measurable structure (and hence, by Prop.~\ref{prop:conservativity}, the quasi-Borel structure) of $\T$ and is \emph{$\nu$-invariant}. This means that the maps $+\colon \T \times \T \to \T$ and $-\colon \T \times \T \to \T$ are quasi-Borel, and for all $g \in \T$ and $B \subseteq \T$ Borel we have $\nu(g + B) = \nu(B)$. More generally, this implies that for all $f: \T \to P(X)$ and $g \in \T$, we have
\begin{equation*}
  \text{let $x \leftarrow \nu$ in $f(g + x)$} = \int_\T f(g + x)d\nu(x) = \int_\T f(x)d\nu(x) = \text{let $x \leftarrow \nu$ in $f(x)$}.
\end{equation*}
The idea of $\nu$-invariance will be used to treat private names as interchangeable in $\qbs$.

We will now use the $\nu$-invariant group structure on $\T$, along with privacy, to prove that passing to normal forms preserves $\qbs$ semantics.

\begin{example}\label{eg:trans-abstract}
  Consider the transposition $\nu a.\nu b.\lambda x.(a\,b)x$. We have seen that
  \begin{equation*}
  \norm{\nu a.\nu b.\lambda x.(a\,b)x} = \lambda x.x.
  \end{equation*}
  We wish to show that their semantics are equal in $\qbs$, i.e.~$\sem{\nu a.\nu b.\lambda x.(a\,b)x} = \sem{\lambda x.x} : P(P(\T)^\T).$
  To do this, we define a function $f: 2^\T \times \T^3 \to \T$ as follows:
  \begin{equation*}
    f(B, a, b, x) =
    \begin{cases}
      (x - a) + b & \mathsf{if}~ x - a \in B,\\
      (x - b) + a & \mathsf{else~if}~ x - b \in B,\\
      x & \mathsf{otherwise}.
    \end{cases}
  \end{equation*}
  This function behaves like a generalized transposition, parameterized by a new set-argument $B$. If $B = \emptyset$, then $f(\emptyset, a, b, x) = x$ is just the identity on $x$. If $B = \{g\}$ is a singleton, then
  \begin{equation*}
    f(\{g\}, a, b, x) =
    \begin{cases}
      g + b & \mathsf{if}~x = g + a,\\
      g + a & \mathsf{else~if}~ x = g + b,\\
      x & \mathsf{otherwise},
    \end{cases}
  \end{equation*}
  so that $f$ is a transposition whose parameters have been shifted by $g$.

  We then take $f': 2^\T \times \T^2 \to P(P(\T)^\T)$ to be the map $f'(B, a, b) = [\lambda x.[f(B, a, b, x)]]$, so that
  \begin{equation*}
    f'(\emptyset, a, b) = \sem{\lambda x.x} \quad\text{and}\quad f'(\{g\}, a, b) = \sem{\lambda x.(a\,b)x}(g + a, g + b),
  \end{equation*}
  and we define $h: 2^\T \to P(P(\T)^\T)$ to be
  \begin{equation*}
    h(B) = \letin a \nu \letin b \nu {f'(B, a, b)}.
  \end{equation*}
  It is clear that $h(\emptyset) = \sem{\lambda x.x}$. On the other hand, by the $\nu$-invariance of the action we have
  \begin{align*}
    h(\{g\}) &= \letin a \nu \letin b \nu {\sem{\lambda x.(a\,b)x}(g + a, g + b)}\\
    & = \letin a \nu \letin b \nu {\sem{\lambda x.(a\,b)x})(a, b)}\\
    & = \sem{\nu a.\nu b.\lambda x.(a\,b)x},
  \end{align*}
  independently of $g \in \T$.

  Our problem now reduces to the privacy equation. Specifically, we have
  \begin{align*}
    \sem{\lambda x.x} &= \letin B {[\emptyset]} {h(B)}\\
    & = \letin B {\left(\letin n \nu {[\{n\}]}\right)} {h(B)}\\
    & = \letin n \nu {h(\{n\})}\\
    & = \letin n \nu {\sem{\nu a.\nu b.\lambda x.(a\,b)x}}\\
    & = \sem{\nu a.\nu b.\lambda x.(a\,b)x},
  \end{align*}
  where the second equality is \eqref{eq:priv-c} and the final equality follows by discardability \eqref{eqn:affine}.
\end{example}

\begin{notation}
  If $\vec{t} = (t_1, \dots, t_n)$ is a vector in $\T^n$ and $g \in \T$, we write $g + \vec{t} = (g + t_1, \dots, g + t_n)$. Additionally, we write $\text{let $t \leftarrow \nu$}$ to be shorthand for drawing $t$ samples in a sequence:
  \begin{equation*}
    \letin {t_1} \nu \cdots \mathsf{let}~t_k \leftarrow \nu.
  \end{equation*}
\end{notation}

Now suppose that $\tau$ is a first-order type and $M \in \Exp_\tau(s)$. We will prove that $\sem{M} = \sem{\norm{M, s}}$ by constructing a function $f: 2^\T \times \T^{\neq s} \to P(\sem{\tau})$ satisfying
\[f(\emptyset, -) = \sem{\norm{M, s}}_{\neq s}(-) ~~\text{and}~~ f(\{n\}, -) = \sem{M}_{\neq s}(-),\]
as we did in \cref{eg:trans-abstract}, and applying the privacy equation \eqref{eq:priv-c}.

We will construct this $f$ inductively, parallel to the construction of the normal forms. In order to do this, we will provide a more general, parametrized version of this construction: given $M \in \Safe^s_\tau$ with names in $s \oplus t$, we will construct a function $f: 2^\T \times \T^{\neq s \oplus t} \to P(\sem{\tau})$ such that
\[f(\emptyset, -, \vec{t}) = \sem{\norm{M, s}}_{\neq s}(-) ~~\text{and}~~ f(\{n\}, -, \vec{t}) = \sem{M}_{\neq s}(-, n + \vec{t}).\]
We will use this parametrized version in the inductive step of our proof.

The construction of $f$ itself is somewhat high-level, and is analogous to the difference between the $\eta$-normal form of a term and its normal form. It takes as arguments a set of name-permutations $B$, a sequence $s$ of potentially leaked names, and a sequence of names $t$ that are guaranteed to remain private. It then identifies the redundant parts of the $\eta$-normal form --- where we compare against a private name $t_i$ --- and instead checks whether the name matches one of the names in $B + t_i$.

By selecting $B$ to be a fresh permutation $\vec{t} \leftrightarrow \vec{t'}$, we recover the semantics of the $\eta$-normal form. On the other hand, by letting $B$ be the empty set we skip redundant comparisons against private names, recovering the semantics of the normal form. We can then use the privacy equation to equate these two denotations, proving that each term is denotationally equivalent to its normal form.

\begin{proposition}\label{prop:reduction}
  Let $\tau$ be a first-order type and let $M \in \Exp_\tau(s \oplus t)$. If $M \in \Safe^s_\tau$, then there is a quasi-Borel map
  \begin{equation*}
    f: 2^\T \times \T^{\neq s \oplus t} \to P(\sem{\tau})
  \end{equation*}
  such that
  \begin{equation*}
    f(\emptyset, \vec{s}, \vec{t}) = \sem{\norm{M, s}}_{\neq s}(\vec{s}) ~~\text{and}~~ f(\{g\}, \vec{s}, \vec{t}) = \sem{M}_{\neq s \oplus t}(\vec{s}, g + \vec{t})
  \end{equation*}
  whenever $(\vec{s}, g + \vec{t}) \in \T^{\neq s \oplus t}$.

  In the case that $M = V$ is a value, $f$ factors through the unit of the monad.
\end{proposition}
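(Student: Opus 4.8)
The plan is to construct $f$ by induction on the type $\tau$, following the clause-by-clause construction of $\norm{M,s}$ and treating values before expressions at each type, exactly mirroring \cref{eg:trans-abstract}. Throughout, the parameter $B\in 2^\T$ plays the role of a switch: instantiating $B=\{g\}$ will reconstruct the $g$-shifted $\eta$-normal form of $M$ and hence $\sem{M}_{\neq s\oplus t}(\vec{s},g+\vec{t})$, while instantiating $B=\emptyset$ will erase every comparison against a private name and leave exactly $\sem{\norm{M,s}}_{\neq s}(\vec{s})$. The hypothesis $(\vec{s},g+\vec{t})\in\T^{\neq s\oplus t}$ is what guarantees that, in the $B=\{g\}$ instantiation, all of the names $s_j$ and $g+t_i$ are distinct, so that each conditional in the $\eta$-expansion selects an unambiguous branch and the two denotations genuinely agree.

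The value cases at ground type and at $\tbool\to\tau'$ are routine: I set $f(B,\vec{s},\vec{t})=[\cdots]$ where the body is built from the inductively-obtained functions for the subterms $M_\true,M_\false$ (resp. from the ground value itself), and neither involves a private comparison, so $B$ is inert. The crucial case is a value of type $\tnames\to\tau'$, whose $\eta$-normal form is $\lambda x.\,\ite{x=n\in s\oplus t}{M_n}{M_0}$. For the public tests $n\in s$ and for the fall-through $M_0$ (with $x$ adjoined to the public names) I recurse via the proposition, since $M\,(\id_s)_\tau\,M$ yields $M_n\in\Safe^s_{\tau'}$ and $M_0\in\Safe^{s\oplus\{x\}}_{\tau'}$. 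The subtle point is the private tests $n=t_i$: here $M_{t_i}$ need not be safe (already the branch $M_a=b$ in the transposition of \cref{eg:trans-abstract} fails $b\in\Safe_\tnames$), so I do \emph{not} recurse on it. Instead I replace the test ``$x=t_i$'' by the membership test ``$x-t_i\in B$'' and, in that branch, return the distribution $\sem{M_{t_i}}_{\neq s\oplus t}(\vec{s},(x-t_i)+\vec{t})$ directly, as a quasi-Borel function of $x$. When $B=\{g\}$ this branch fires precisely at $x=g+t_i$ and returns $\sem{M_{t_i}}(\vec{s},g+\vec{t})$, reconstructing the shifted $\eta$-normal form; when $B=\emptyset$ the branch never fires, so only the public tests and $M_0$ survive, reconstructing $\norm{M,s}$.

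For the expression case, I use soundness (\cref{thm:sound-adequate}) to write $s\oplus t\vdash M\Downarrow_\tau(w)V$, split $w$ into the leaked names $u=\Leak(V,s)$ and the remaining private names $w\setminus u$, and apply the (already-constructed) value case to $V\in\Safe^{s\oplus u}_\tau$ to obtain $f_V$. I then set
\[
  f(B,\vec{s},\vec{t})\ =\ \letin{\vec{u}}{\nu}{\letin{\vec{w}'}{\nu}{f_V\bigl(B,(\vec{s},\vec{u}),\vec{t},\vec{w}'\bigr)}},
\]
generating $\vec{u}$ (which the normal form keeps) and fresh draws $\vec{w}'$ for $w\setminus u$. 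For $B=\{g\}$, the $\nu$-invariance of the uniform measure on $\T$ lets me re-index $\vec{w}'\mapsto\vec{w}'-g$ so that the shifted draws become genuinely fresh, recovering $\letin{\vec{w}}{\nu}{\sem{V}(\vec{s},g+\vec{t},\vec{w})}=\sem{M}(\vec{s},g+\vec{t})$; for $B=\emptyset$, $f_V$ is independent of $\vec{w}'$, so discardability \eqref{eqn:affine} collapses the inner draw and leaves $\letin{\vec{u}}{\nu}{\sem{\norm{V,s\oplus u}}(\vec{s},\vec{u})}=\sem{\norm{M,s}}(\vec{s})$. Since the value cases all produce an $f$ of the form $[\cdots]$, the final clause ``$f$ factors through the unit when $M=V$'' is immediate.

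I expect the main obstacle to be the $\tnames\to\tau'$ value case---specifically, recognizing that private branch bodies cannot be handled recursively and must instead be inserted verbatim via their shifted denotations, with the membership test against $B$ simultaneously encoding ``compare against the shifted private name'' and ``skip the comparison''. The remaining work is a routine verification that $f$ is quasi-Borel jointly in $(B,\vec{s},\vec{t})$: the only non-obvious ingredient is that $(B,x,t_i)\mapsto(x-t_i\in B)$ is a morphism, which follows from measurability of the uncurried evaluation $2^\T\times\T\to 2$ together with the quasi-Borel subtraction on $\T$, after which the quasi-Borel case-split axiom assembles the branches.
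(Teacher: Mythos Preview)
Your proposal is correct and follows essentially the same approach as the paper: both construct $f$ by induction on the type (parallel to the normal-form construction), both handle the critical $\tnames\to\tau'$ value case by recursing on the public-name branches and on $M_0$ while inserting the private-name branches verbatim via the shifted denotation $\sem{M_{t_i}}(\vec{s},(x-t_i)+\vec{t})$ guarded by the membership test $x-t_i\in B$, and both handle the expression case by sampling the freshly-generated names and applying $\nu$-invariance to the private ones. Your observation that the $M_{t_i}$ need not lie in $\Safe^s_{\tau'}$ (so recursion is unavailable there) is exactly the subtlety, and you identify it more explicitly than the paper does; your remarks on why the assembled $f$ is a quasi-Borel morphism are also a useful addition the paper leaves implicit.
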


\begin{proof}
We construct $f$ inductively, in parallel to the construction of the normal forms.

\textbf{Ground case:} If $\tau$ is a ground type and $V$ is a value, then $\norm{V, s} = V$ so we simply let
\begin{equation*}
  f(B, \vec{s}, \vec{t}) = \sem{\norm{V, s}}(\vec{s}) = [|\norm{V, s}|](\vec{s}).
\end{equation*}

\textbf{Function case $\tbool \to \tau$:} Suppose that $V$ is a value of type $\tbool \to \tau$ and that we have already constructed these functions for expressions of type $\tau$. We $\eta$-expand $V$, so that
\begin{equation*}
  V = \lambda x.\ite {x = \true} {M_\true} {M_\false}.
\end{equation*}
By definition of logical relations and the normal form we have $M_\true, M_\false \in \Safe^s_\tau$ and
\begin{equation*}
  \norm{V, s} = \lambda x.\ite {x = \true} {\norm{M_\true, s}} {\norm{M_\false, s}}.
\end{equation*}
By assumption we have functions $f_\true, f_\false: 2^\T \times \T^{\neq s \oplus t} \to P(\sem{\tau})$ satisfying the conditions of \cref{prop:reduction} for $M_\true$ and $M_\false$. We then define $f: 2^\T \times \T^{\neq s \oplus t} \to P(\sem{\tau})^B$ by
\begin{equation*}
  f(B, \vec{s}, \vec{t}) = \lambda x.
  \begin{cases}
    f_\true(B, \vec{s}, \vec{t}) & \mathsf{if}~ x = \true,\\
    f_\false(B, \vec{s}, \vec{t}) & \mathsf{otherwise.}
  \end{cases}
\end{equation*}
It is clear that $f(\emptyset, \vec{s}, \vec{t}) = |\norm{V, s}|(\vec{s})$ and $f(\{g\}, \vec{s}, \vec{t}) = |V|(\vec{s}, g + \vec{t})$ when $(\vec{s}, g + \vec{t}) \in \T^{\neq s \oplus t}$, so that $[f]$ satisfies \cref{prop:reduction} for $V$.

\textbf{Function case $\tnames \to \tau$:} Suppose that $V$ is a value of type $\tnames \to \tau$ and that we have already constructed these functions for expressions of type $\tau$. We $\eta$-expand $V$, so that
\begin{equation*}
  V = \lambda x.\ite {x = n \in s \oplus t} {M_n} {M_0}.
\end{equation*}
By definition of logical relations and the normal form we have $M_0 \in \Safe^{s \oplus \{x\}}_\tau$, $M_n \in \Safe^s_\tau$ for $n \in s$ and
\begin{equation*}
  \norm{V, s} = \lambda x.\ite {x = n \in s} {\norm{M_n, s}} {\norm{M_0, s \oplus \{x\}}}.
\end{equation*}
By assumption we have functions $f_n: 2^\T \times \T^{\neq s \oplus t} \to P(\sem{\tau})$ for $n \in s$ and $f_0: 2^\T \times \T^{\neq s \oplus t \oplus \{x\}} \to P(\sem{\tau})$ satisfying the conditions of \ref{prop:reduction} for $M_n$ and $M_0$. Writing $t = (t_1, \dots, t_k)$, we define $f: 2^\T \times \T^{\neq s \oplus t} \to P(\sem{\tau})^\T$ by
\begin{equation*}
  f(B, \vec{s}, \vec{t}) = \lambda x.
  \begin{cases}
    f_n(B, \vec{s}, \vec{t}) & \mathsf{if}~ x = n \in \vec{s},\\
    \sem{M_{t_1}}(\vec{s}, (x - t_1) + \vec{t}) & \mathsf{else~if}~ (x - t_1) \in B,\\
    \dots\\
    \sem{M_{t_k}}(\vec{s}, (x - t_k) + \vec{t}) & \mathsf{else~if}~ (x - t_k) \in B,\\
    f_0(B, \vec{s}, \vec{t}, x) & \mathsf{otherwise}.
  \end{cases}
\end{equation*}
If $B = \emptyset$, then
\begin{equation*}
  f(\emptyset, \vec{s}, \vec{t}) = \lambda x.
  \begin{cases}
    \sem{\norm{M_n, s}}(\vec{s}) & \mathsf{if}~ x = n \in \vec{s},\\
    \sem{\norm{M_0, s}}(\vec{s}) & \mathsf{otherwise}
  \end{cases}
\end{equation*}
so that $f(\emptyset, \vec{s}, \vec{t}) = |\norm{V, s}|(\vec{s})$. On the other hand, if $B = \{g\}$ is a singleton, then
\begin{equation*}
  f(\{g\}, \vec{s}, \vec{t}) = \lambda x.
  \begin{cases}
    \sem{M_n}(\vec{s}, g + \vec{t}) & \mathsf{if}~ x = n \in \vec{s},\\
    \sem{M_{t_1}}(\vec{s}, g + \vec{t}) & \mathsf{else~if}~ x = g + t_1,\\
    \dots\\
    \sem{M_{t_k}}(\vec{s}, g + \vec{t}) & \mathsf{else~if}~ x = g + t_k,\\
    \sem{M_0}(\vec{s}, g + \vec{t}, x) & \mathsf{otherwise}
  \end{cases}
\end{equation*}
so that $f(\{g\}, \vec{s}, \vec{t}) = |V|(\vec{s}, g + \vec{t})$ when $(\vec{s}, g + \vec{t}) \in \T^{\neq s \oplus t}$. Thus $[f]$ satisfies \cref{prop:reduction} for $V$.

\textbf{Expression case:} Suppose that we have constructed these reductions for values of type $\tau$. We have $M \,(\id_s)_\tau\, M$, so by definition of logical relations and the normal form there is some $V \in \Val_\tau(s \oplus t \oplus u \oplus w)$ such that $s \oplus t \vdash M \Downarrow_\tau (u \oplus w)V$ and $u = \Leak(V, s)$. Therefore $V \in \Safe^{s \oplus u}_\tau$ and
\begin{equation*}
  \norm{M, s} = \nu u.\norm{V, s \oplus u}.
\end{equation*}
By assumption, there is a function $f_V: 2^\T \times \T^{\neq s \oplus t \oplus u \oplus w} \to P(\sem{\tau})$ satisfying the conditions of \cref{prop:reduction} for $V$ and $\norm{V, s \oplus u}$. We then define $f: 2^\T \times \T^{\neq s \oplus t} \to P(\sem{\tau})$ by
\begin{equation*}
  f(B, \vec{s}, \vec{t}) = \letin u \nu \letin w \nu {f_V(B, \vec{s}, \vec{t}, \vec{u}, \vec{w})}.
\end{equation*}
It follows that
\begin{align*}
  f(\{g\}, \vec{s}, \vec{t}) &= \letin u \nu \letin w \nu {f_V(\{g\}, \vec{s}, \vec{t}, \vec{u}, \vec{w})}\\
  & = \letin u \nu \letin w \nu {\sem{V}_{\neq s \oplus t \oplus u \oplus w}(\vec{s}, g + \vec{t}, \vec{u}, g + \vec{w})}\\
  & = \letin u \nu \letin w \nu {\sem{V}_{\neq s \oplus t \oplus u \oplus w}(\vec{s}, g + \vec{t}, \vec{u}, \vec{w})}\\
  & = \sem{\nu u.\nu w.V}_{\neq s \oplus t}(\vec{s}, g + \vec{t})\\
  & = \sem{M}_{\neq s \oplus t}(\vec{s}, g + \vec{t})
\end{align*}
whenever $(\vec{s}, g + \vec{t}) \in \R^{\neq s \oplus t}$, where the third equality follows by $\nu$-invariance and the last by soundness (\cref{thm:sound-adequate}). Similarly, we verify that $f(\emptyset, \vec{s}, \vec{t}) = \sem{\norm{M, s}}_{\neq s}(\vec{s})$ by discardability \eqref{eqn:affine} and soundness.
\end{proof}

We note that this construction is not specific to quasi-Borel spaces; it can be performed completely syntactically in a metalanguage asserting that $N$ carries a $\nu$-invariant group structure.

It follows immediately that passing to normal forms preserves $\qbs$ semantics, and therefore that $\qbs$ is fully abstract at first-order types:

\begin{theorem}\label{thm:fullabstraction}
  $\qbs$ is fully abstract at first-order types.
\end{theorem}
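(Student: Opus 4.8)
The plan is to reduce the theorem to three ingredients already in hand: the normal-form criterion \cref{thm:abstract-iff-normal-form-equal}, the parametrized reduction \cref{prop:reduction}, and the privacy equation \cref{thm:privacy}. Since $\qbs$ is a categorical model of the $\nu$-calculus (\cref{thm:qbs-model}), \cref{thm:abstract-iff-normal-form-equal} tells us that full abstraction at first order is equivalent to the single semantic identity $\sem{M}_{\neq s} = \sem{\norm{M, s}}_{\neq s}$ holding for every first-order type $\tau$ and every $M \in \Exp_\tau(s)$. So the entire theorem comes down to establishing that one equation.

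First I would check that the right-hand side is even defined. The normal form is only defined on $\Safe^s_\tau$, but for $M \in \Exp_\tau(s)$ all free names lie in $s$, so the private part $t$ is empty; since $M \approx_\tau M$ trivially, \cref{thm:logical-relation-observational-equivalence} gives $M \,(\id_s)_\tau\, M$, whence $\Leak(M, s) = \emptyset$ and $M \in \Safe^s_\tau$. This both justifies $\norm{M, s}$ and lets me invoke \cref{prop:reduction} in the degenerate case $t = \emptyset$, obtaining a quasi-Borel map $f : 2^\T \times \T^{\neq s} \to P(\sem{\tau})$ with $f(\emptyset, \vec s) = \sem{\norm{M, s}}_{\neq s}(\vec s)$ and, because $\vec t$ is empty, $f(\{g\}, \vec s) = \sem{M}_{\neq s}(\vec s)$ for every $g \in \T$.

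With $f$ in hand, the argument is exactly the template already run in \cref{eg:trans-abstract}: I would compute
\begin{align*}
  \sem{\norm{M, s}}_{\neq s}(\vec s) &= \letin B {[\emptyset]} {f(B, \vec s)}
    = \letin B {\Big(\letin n \nu {[\{n\}]}\Big)} {f(B, \vec s)}\\
  &= \letin n \nu {f(\{n\}, \vec s)}
    = \letin n \nu {\sem{M}_{\neq s}(\vec s)}
    = \sem{M}_{\neq s}(\vec s),
\end{align*}
where the second equality replaces $[\emptyset]$ by the random singleton using privacy \eqref{eq:priv-c} (\cref{thm:privacy}), the third uses associativity of the monad, and the last uses discardability \eqref{eqn:affine} together with the fact that $f(\{n\}, \vec s)$ does not depend on $n$. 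Since this holds pointwise in $\vec s \in \T^{\neq s}$, it yields the morphism identity $\sem{\norm{M, s}}_{\neq s} = \sem{M}_{\neq s}$ demanded by \cref{thm:abstract-iff-normal-form-equal}.

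I expect essentially no obstacle at this final stage: all of the genuine difficulty --- the inductive, translation-invariant construction of $f$ matching the normal form, and the descriptive-set-theoretic content of privacy --- has already been discharged in \cref{prop:reduction} and \cref{thm:privacy}. The only points demanding care are the bookkeeping observation that $M \in \Exp_\tau(s)$ forces $M \in \Safe^s_\tau$ (so that both the normal form and \cref{prop:reduction} are available), and the correct sequencing of the let-bindings so that privacy and discardability can be applied in the intended order.
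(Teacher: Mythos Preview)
Your proposal is correct and follows essentially the same route as the paper: reduce via \cref{thm:abstract-iff-normal-form-equal}, apply \cref{prop:reduction} in the degenerate case $t=\emptyset$, then equate the two branches using \eqref{eq:priv-c} and discardability~\eqref{eqn:affine}. The only cosmetic difference is that the paper first curries $f$ to a map $h:2^\T\to P(\sem\tau)^{\T^{\neq s}}$ before running the let-chain, whereas you argue pointwise in $\vec s$; your explicit check that $M\in\Safe^s_\tau$ (because $t=\emptyset$) is a useful clarification the paper leaves implicit.
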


\begin{proof}
  By \cref{thm:abstract-iff-normal-form-equal} it is enough to show that $\qbs$ validates passing to normal forms. Let $\tau$ be a first-order type and let $M \in \Exp_\tau(s)$. By \cref{prop:reduction} there is a quasi-Borel map $f: 2^\T \times \T^{\neq s} \to P(\sem{\tau})$ such that
  \begin{equation*}
    f(\emptyset, \vec{s}) = \sem{\norm{M, s}}_{\neq s}(\vec{s}) \quad\text{and}\quad f(\{g\}, \vec{s}) = \sem{M}_{\neq s}(\vec{s}).
  \end{equation*}
  Currying, we get a map $h: 2^\T \to P(\sem{\tau})^{\T^{\neq s}}$ such that
  \begin{equation*}
    h(\emptyset) = \sem{\norm{M, s}}_{\neq s} \quad\text{and}\quad h(\{n\}) = \sem{M}_{\neq s}.
  \end{equation*}
  It follows that
  \begin{align*}
    \sem{\norm{M, s}}_{\neq s} &= \letin B {[\emptyset]} {h(B)} = \letin B {\left(\letin n \nu {[\{n\}]}\right)} h(B)\\
    & = \letin n \nu {h(\{n\})} = \letin n \nu {\sem{M}_{\neq s}} = \sem{M}_{\neq s},
  \end{align*}
  where the second equality is \eqref{eq:priv-c} and the final equality follows by discardability \eqref{eqn:affine}.
\end{proof}

\section{Structural Consequences}
\label{sec:structural}

In this section, we highlight some consequences our main result has on the category of quasi-Borel spaces and other models of name generation. The privacy equation makes it impossible in $\catname{Qbs}$ to find certain conditional probabilities, as this would require revealing a private name (Prop.~\ref{prop:noconditionals}). This means care is needed for Bayesian inference in a higher-typed situation. We will give a broader context for this result using recent notions from synthetic probability theory, allowing us to consider any model of name generation as a categorical model of probability.

\begin{definition}[{\cite[11.1]{fritz}}]
  Let $\mu \in P(X \times Y)$ be a probability distribution and $\mu_X \in P(X)$ its first marginal. A \emph{conditional distribution} for $\mu$ is a morphism $\mu_{|X} : X \to P(Y)$ such that
  \[\mu = \letin x {\mu_X} {\letin y {\mu_{|X}(x)} {[(x,y)]}}.\]
\end{definition}

We will now consider the distribution $\mu \in P(2^\R \times \R)$
\begin{equation}
  \mu = \letin a \nu {[(\{a\}, a)]} \label{eq:leaking}
\end{equation}
which returns a closure with private name $a$, but also leaks the name $a$ in the second component.

\begin{proposition}\label{prop:noconditionals}
In $\catname{Qbs}$, conditionals need not exist at function types.
\end{proposition}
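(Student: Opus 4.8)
The plan is to exhibit the distribution $\mu$ from \eqref{eq:leaking} and argue by contradiction that a conditional $\mu_{|2^\R} : 2^\R \to P(\R)$ would let us recover the private name $a$ from the set-valued first component $\{a\}$, thereby contradicting the privacy equation (\cref{thm:privacy}). First I would compute the first marginal $\mu_{2^\R}$ of $\mu$. Pushing $\mu$ forward along the projection $2^\R \times \R \to 2^\R$ gives $\mu_{2^\R} = \letin a \nu {[\{a\}]}$, which by privacy (\cref{thm:privacy}) equals $[\emptyset]$, the Dirac distribution on the empty set. So the marginal is concentrated on the single point $\emptyset \in 2^\R$.

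Next I would use the defining equation of a conditional. Suppose for contradiction that a conditional $\mu_{|2^\R}$ exists. Then
\[
  \mu = \letin B {\mu_{2^\R}} {\letin a {\mu_{|2^\R}(B)} {[(B, a)]}} = \letin a {\mu_{|2^\R}(\emptyset)} {[(\emptyset, a)]},
\]
using that $\mu_{2^\R} = [\emptyset]$ together with discardability/the unit law to collapse the outer integral over the Dirac at $\emptyset$. Writing $\rho := \mu_{|2^\R}(\emptyset) \in P(\R)$, this says $\mu$ is the pushforward of $\rho$ along $a \mapsto (\emptyset, a)$; in particular $\mu$ is concentrated on the set $\{\emptyset\} \times \R \subseteq 2^\R \times \R$.

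The contradiction comes from comparing this with the actual form of $\mu$, which is concentrated on the ``diagonal'' $\{(\{a\}, a) : a \in \R\}$. The key step is to produce a single quasi-Borel morphism $g : 2^\R \times \R \to 2$ that separates these two supports: namely the membership test $g(B, x) = (x \in B)$, which is a morphism since it arises by currying from a morphism $\R \to 2^\R$ (Aumann's obstruction is avoided precisely because the induced $\sigma$-algebra on the product $2^\R \times \R$ is strictly larger than the product algebra, as remarked after \cref{prop:qbs-meas-adjunction}). Integrating $g$ against $\mu$ gives $\int g \, d\mu = \P(a \in \{a\}) = 1$, whereas integrating $g$ against the conditional form gives $\int g \, d\mu = \rho\{x : x \in \emptyset\} = \rho(\emptyset) = 0$. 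Hence $1 = 0$, a contradiction, so no conditional exists.

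I expect the main obstacle to be the bookkeeping around the unit and discardability laws that justify collapsing $\letin B {[\emptyset]} {(\cdots)}$ down to evaluation at $\emptyset$, and ensuring the membership morphism $g$ is genuinely quasi-Borel (rather than invoking an illegal evaluation/equality-of-sets map, which \cref{ex:qbs-nonemptiness} shows is \emph{not} a morphism). Care is needed that $g$ is the \emph{membership} map $(B,x) \mapsto (x \in B)$, whose currying $\R \to 2^\R$ sends $x$ to the principal up-set — one must check this currying lands in $M_{2^\R}$, equivalently that the corresponding subset of $\R \times \R$ is Borel, which holds since it is (a section of) the diagonal's complement structure and is plainly Borel.
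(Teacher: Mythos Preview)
Your proof is essentially the paper's: both use the privacy equation to identify the first marginal as $[\emptyset]$, substitute into the conditional decomposition, and then derive a contradiction via the membership predicate $(\ni): 2^\R \times \R \to 2$. Your justification that this membership map is a morphism is garbled, however---it does not arise by currying from a map $\R \to 2^\R$ (that would uncurry to a map $\R \times \R \to 2$, not $2^\R \times \R \to 2$); rather, it is simply the evaluation morphism of the exponential $2^\R$, which is a morphism by cartesian closure.
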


\begin{proof}
By the privacy equation (\ref{eq:priv-c}), the first marginal of $\mu$ (\ref{eq:leaking}) equals
\[ \mu_1 = \letin a \nu {[\{a\}]} = [\emptyset] \quad:P(2^\R). \]
If $\mu$ admitted a conditional distribution $\mu_{|1} : 2^\R \to P(\R)$, we would obtain
\[ \mu = \letin A {[\emptyset]} {\letin b {\mu_{|1}(A)} {[(A,b)]}} = \letin b {\mu_{|1}(\emptyset)} {[(\emptyset,b)]} \quad:P(2^\R\times \R)\text. \]
This is a contradiction, as the predicate $(\ni):2^\R\times \R\to 2$ is always true for $\mu$, and always false for the RHS. To condition on $\mu_1$ would mean to reconstruct the value $a$ given only access to the marginal $\{a\}$, which is impossible.
\end{proof}

All conditionals from practical statistics (at ground types like $\R$) are still supported by quasi-Borel spaces. The situation is different at function types, but this is not a coincidental pathology of $\catname{Qbs}$: Name generation offers a systematic reason why conditioning on function types is inconsistent. To make this precise, we will consider any model of name generation as a categorical model of probability theory, and study conditioning in that context. We show that the privacy equation is inconsistent with an axiom called `positivity', which is valid in traditional measure-theoretic probability, but not in $\catname{Qbs}$ by our full-abstraction result.

Categorical or synthetic probability theory is the abstract axiomatization of probabilistic systems. Its high-level nature ties it closely to the semantics of probabilistic programming languages: One could argue that such languages are precisely the internal languages of synthetic probability theories, and different axioms appear as admissible program equations (see \eqref{eq:det}). The subject has been explored among others by \cite{kock, scibior, fritz}. Of these approaches, we adopt the language of Markov categories which is increasingly widely used~\cite{quantum-markov,fritz,shiebler-likelihood,patterson-thesis}.

\begin{definition}[{\cite[2.1]{fritz}}]
  A \emph{Markov category} $\cat C$ is a symmetric monoidal category in which every object $X$ is equipped with the structure of a commutative comonoid $\cpy_X : X \to X \otimes X$, $\del_X : X \to I$ satisfying naturality conditions.
\end{definition}

Morphisms in a Markov category capture stochastic computation (Markov kernels); the interchange law of $\otimes$ encodes exchangeability/Fubini, and naturality of $\del$ the discardability of such computations. $\cpy$ allows us to introduce correlations. Morphisms $\mu : I \to X$ are called \emph{distributions} on $X$. Product distributions are formed by the tensor product, and if $\mu : I \to X \otimes Y$ is a distribution, we can take its marginals $\mu_X = (\id_X \otimes \del_Y) \circ \mu, \mu_Y= (\del_X \otimes \id_Y) \circ \mu$.

An important class of examples are Kleisli categories. If $T$ is a commutative and affine monad on a category $\cat C$ with finite products, then the Kleisli category $\mathbf{Kl}(T)$ is a Markov category \cite[3.2]{fritz}. Examples are the categories $\set$, $\meas$ and $\qbs$, all equipped with their respective probability monads. We observe that name generation (cf. Def.~\ref{def:catmodel}) is a synthetic probabilistic effect.

\begin{observation}\label{obs:name-gen-markov}
  For every categorical model $(\cat C, T)$ of the $\nu$-calculus, the category $\catname{Kl}(T)$ is a Markov category.
\end{observation}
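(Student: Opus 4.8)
The plan is to reduce the statement directly to the cited theorem of Fritz \cite[3.2]{fritz}, which asserts that the Kleisli category of any commutative affine monad on a category with finite products is a Markov category. All of the work therefore lies in checking that the data of a categorical model of the $\nu$-calculus (\cref{def:catmodel}) supply precisely these hypotheses.

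First I would observe that the underlying category $\cat C$ is cartesian closed with finite limits, so in particular it has finite products; this is the ambient structure Fritz's construction requires. Next, the two monad axioms in \cref{def:catmodel} state exactly that $T$ is affine and commutative, which are the only conditions \cite[3.2]{fritz} needs. The remaining $\nu$-calculus data---the object of names $N$, the coproduct $B$, the equality test, $\new$, and the freshness axiom---play no role in the Markov structure and may simply be forgotten.

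For completeness I would recall how the Markov structure arises, since this makes the reduction transparent. The symmetric monoidal product on $\catname{Kl}(T)$ is the lift of the cartesian product $\times$ of $\cat C$, well-defined and symmetric precisely because $T$ is commutative: its double strength furnishes a canonical map $TX \times TY \to T(X \times Y)$ that is independent of the order of evaluation. Each comonoid structure $\cpy_X$ and $\del_X$ is obtained by post-composing the diagonal $X \to X \times X$ and the terminal map $X \to 1$ of $\cat C$ with the unit $[-]$ of the monad; commutativity yields the comonoid and naturality laws for $\cpy$, while affineness ($T1 \cong 1$) is exactly what makes $\del$ natural, i.e.\ what makes every Kleisli morphism discardable.

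The proof is essentially bookkeeping, and I expect no genuine obstacle. The single point requiring care is merely confirming that ``cartesian closed with finite limits'' entails ``finite products,'' and that the discardability and exchangeability equations (\ref{eqn:affine}) and (\ref{eqn:commutativity}) of the monadic metalanguage are literally the affineness and commutativity hypotheses of \cite[3.2]{fritz}. Once these identifications are made, the observation follows immediately.
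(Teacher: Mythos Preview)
Your proposal is correct and follows exactly the paper's own approach: the paper's proof is the single sentence ``The monad $T$ is assumed commutative and affine, so we apply \cite[3.2]{fritz}.'' Your additional unpacking of how the Markov structure arises is accurate but goes beyond what the paper records.
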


\begin{proof}
  The monad $T$ is assumed commutative and affine, so we apply \cite[3.2]{fritz}.
\end{proof}

This makes the probabilistic semantics of this paper conceptually very natural: We have taken a synthetic probabilistic effect and given an interpretation using actual randomness. In what follows, we will explore some of the structural differences between name generation and traditional probability theory. By our full abstraction result, this behaviour will apply to quasi-Borel spaces as well.

We let $\cat C$ denote a Markov category and recall the following definitions

\begin{definition}[{\cite[10.1]{fritz}}]
  A morphism $f : X \to Y$ is \emph{deterministic} if it commutes with copying:
  \begin{equation*}
    \cpy_Y \circ f = (f \otimes f) \circ \cpy_X.
  \end{equation*}
\end{definition}

In the case of Kleisli categories, determinism is equivalent to the following program equation in the metalanguage:
\begin{equation}\label{eq:det}
x : X \vdash \letin y {f(x)} [(y,y)] = \letin {y_1} {f(x)} {\letin {y_2} {f(x)} {[(y_1,y_2)]}} : T(Y \times Y)
\end{equation}
Note that any morphism that factors through the unit of the monad is deterministic, but the converse is false in general.

\begin{definition}[{\cite[11.22]{fritz}}]\label{def:positive}
  A Markov category $\cat C$ is called \emph{positive} if whenever $f : X \to Y$ and $g : Y \to Z$ are such that $g \circ f$ is deterministic, then
  \begin{equation*}
    (g \otimes \id_Y) \circ \cpy_Y \circ f = ((g \circ f) \otimes f) \circ \cpy_X.
  \end{equation*}
\end{definition}

This equation is valid in discrete and measure-theoretic probability by \cite[11.25]{fritz}. We suggest the reading that ``irrelevant intermediate results cannot introduce correlations'': On the RHS, the output of $f$ is resampled instead of copied. This blatantly fails in the presence of negative probabilities: There is a monad $D_\pm$ on $\set$ assigning to $X$ distributions which sum to $1$, but whose weights can be negative. Probabilities thus are allowed to interfere destructively. The Kleisli category of $D_\pm$ is still a valid Markov category, and it is in this positivity axiom that its theory deviates from standard probability \cite[11.27]{fritz}. A consequence of positivity is this:

\begin{proposition}[One deterministic marginal]\label{prop:one-deterministic-marginal}
  Let $\cat C$ be a positive Markov category, and ${\mu \colon I \to X \otimes Y}$ be a distribution. If the marginal $\mu_X : I \to X$ is deterministic, then $\mu = \mu_X \otimes \mu_Y$.
\end{proposition}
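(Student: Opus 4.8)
The plan is to instantiate the positivity axiom (Def.~\ref{def:positive}) with a carefully chosen pair of morphisms and then marginalize out an auxiliary copy. Take $f := \mu : I \to X \otimes Y$, and let $g := \id_X \otimes \del_Y : X \otimes Y \to X$ be the first projection, so that $g \circ f = (\id_X \otimes \del_Y) \circ \mu = \mu_X$. By hypothesis $\mu_X$ is deterministic, so positivity applies (here the definition's $X,Y,Z$ are instantiated as $I$, $X\otimes Y$, $X$) and yields
\[
(g \otimes \id_{X \otimes Y}) \circ \cpy_{X \otimes Y} \circ \mu = (\mu_X \otimes \mu) \circ \cpy_I
\]
in $\cat{C}(I, X \otimes X \otimes Y)$. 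Since $\cpy_I$ is the coherence isomorphism $I \to I \otimes I$, the right-hand side is just the product distribution $\mu_X \otimes \mu$.

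The second step is to discard the intermediate copy of $X$ that positivity has inserted. I would post-compose both sides with $\id_X \otimes \del_X \otimes \id_Y : X \otimes (X \otimes Y) \to X \otimes Y$, keeping the $X$ produced by $g$ and the $Y$ of the original $\mu$. On the right-hand side this immediately gives $\mu_X \otimes \big((\del_X \otimes \id_Y) \circ \mu\big) = \mu_X \otimes \mu_Y$, by the definition of the marginal $\mu_Y$. On the left-hand side the composite, read off the two copies produced by $\cpy_{X \otimes Y}$, keeps the $X$-coordinate of the first copy and the $Y$-coordinate of the second; by cocommutativity and the counit law for the comonoid structure on $X \otimes Y$ this reassembles to $\id_{X \otimes Y}$, so the left-hand side collapses back to $\mu$. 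Comparing the two computations yields $\mu = \mu_X \otimes \mu_Y$.

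The only genuinely delicate point is the left-hand side simplification: one must verify that copying the pair and then selecting the $X$-part of one copy and the $Y$-part of the other recovers $\id_{X \otimes Y}$. This is where the comonoid axioms do the work — the decomposition of $\cpy_{X \otimes Y}$ in terms of $\cpy_X$, $\cpy_Y$ and the symmetry, together with counitality on each factor — and I expect it to be a short string-diagram calculation rather than a conceptual obstacle. Intuitively, the argument says that positivity lets us replace the copied $X$-coordinate by an independently resampled $\mu_X$; since that coordinate is then discarded, what survives is $\mu_X$ tensored with the untouched $Y$-marginal, and the use of determinism of $\mu_X$ is precisely what licenses this replacement.
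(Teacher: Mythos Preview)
Your proof is correct and follows the same strategy as the paper: take $f=\mu$ and $g$ the marginalization $X\otimes Y\to X$, invoke positivity, and then simplify. The paper's proof merely says ``the result is obtained by simple string diagram manipulation from the positivity axiom''; you have written out that manipulation explicitly, including the post-composition with $\id_X\otimes\del_X\otimes\id_Y$ and the verification via the comonoid axioms that the left-hand side collapses to $\mu$.
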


\begin{proof}
  Let $f=\mu$ and $g : X \otimes Y \to X$ be marginalization. By assumption $g \circ f$ is deterministic. The result is obtained by simple string diagram manipulation from the positivity axiom.
\end{proof}

In $\catname{Meas}$, nothing can be correlated with a constant: If $(X,Y)$ is a joint distribution and $X \overset{d}{=} x_0$ is deterministic, then $Y$ is independent from $X$. The privacy equation implies that this does not hold for name generation, analogously to Prop. \ref{prop:noconditionals}.

\begin{proposition}\label{prop:nu-nonpositive}
  Any non-degenerate model of the $\nu$-calculus that verifies \eqref{eq:priv-c} is non-positive.
\end{proposition}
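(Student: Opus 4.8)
The plan is to exhibit a single distribution in $\mathbf{Kl}(T)$ that violates the conclusion of Proposition~\ref{prop:one-deterministic-marginal}, so that the positivity axiom must fail. This reuses the leaking distribution from Proposition~\ref{prop:noconditionals}, but now read inside the categorical metalanguage rather than specifically in $\qbs$. Concretely, I would work with
\[\mu = \letin a \new {[(\lambda x.(x=a),\, a)]} \;:\; T((N \to B) \times N),\]
which packages a closure with a private name $a$ in its first component while simultaneously leaking $a$ in its second.

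First I would compute the two marginals. The first marginal is $\mu_1 = \letin a \new {[\lambda x.(x=a)]}$, which by the privacy equation \eqref{eq:priv-c} equals $[\lambda x.\false]$; since this factors through the unit of the monad, it is deterministic. The second marginal is $\mu_2 = \letin a \new {[a]} = \new$. Thus $\mu$ has a deterministic first marginal, and Proposition~\ref{prop:one-deterministic-marginal} becomes applicable under the assumption of positivity.

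Next, supposing toward a contradiction that the model is positive, Proposition~\ref{prop:one-deterministic-marginal} gives $\mu = \mu_1 \otimes \mu_2 = \letin a \new {[(\lambda x.\false,\, a)]}$. To separate the two descriptions of $\mu$ I would apply the evaluation morphism $\mathrm{ev} : (N \to B) \times N \to B$, which plays the role of the predicate $(\ni)$ used in Proposition~\ref{prop:noconditionals} and which, being a morphism of the ambient cartesian closed category, is deterministic in $\mathbf{Kl}(T)$. Pushing forward the original $\mu$ and using reflexivity of the decidable equality test together with discardability \eqref{eqn:affine} yields $\mathrm{ev}_*\mu = \letin a \new {[(a=a)]} = [\true]$, whereas pushing forward the factored form yields $\mathrm{ev}_*(\mu_1 \otimes \mu_2) = \letin a \new {[\false]} = [\false]$. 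Hence positivity forces $[\true] = [\false] : T(B)$, which contradicts non-degeneracy.

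The argument is essentially a reindexing of Proposition~\ref{prop:noconditionals}, so I expect no serious obstacle in the main computation. The one point requiring care is checking that every step of the metalanguage reasoning is justified in an \emph{arbitrary} categorical model rather than in $\qbs$ specifically: that the first marginal reduces exactly to the left-hand side of \eqref{eq:priv-c}, that $[\lambda x.\false]$ and hence $\mu_1$ is deterministic, that $\mathrm{ev}$ is a genuine CCC morphism, and that reflexivity ($a=a$ is $\true$) holds for the decidable equality classifying the diagonal. Finally I would make explicit what ``non-degenerate'' must mean for the conclusion to bite, namely $[\true] \neq [\false]$ in $T(B)$ — exactly the mild hypothesis ($1$ not initial and $[-]_B$ monic) already invoked for adequacy in Theorem~\ref{thm:sound-adequate}.
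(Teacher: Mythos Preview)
Your proposal is correct and follows essentially the same argument as the paper: the paper uses the distribution $\mu = \letin a \new {[(\{a\},a)]}$ with the predicate $(\ni) : B^N \times N \to B$, which in the categorical metalanguage is precisely your $\letin a \new {[(\lambda x.(x=a),\,a)]}$ together with the evaluation map. Your exposition is simply more explicit about the reduction to $[\true]=[\false]$ and about what ``non-degenerate'' must mean, but the structure and the key idea are identical.
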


\begin{proof}
  Consider the distribution $\mu = \letin a \new {[(\{a\}, a)]}$. Its first marginal is deterministic, as $\mu_1 = {\letin a \new {[\{a\}]}} = [\emptyset]$ by \eqref{eq:priv-c}. Yet $\mu$ is not the product of its marginals $[\emptyset] \otimes \new$, as the map $(\ni) : B^N \times N \to B$ distinguishes the two distributions. This violates Prop.~\ref{prop:one-deterministic-marginal}.
\end{proof}

\begin{corollary}\label{prop:qbs-nonpositive}
  The category $\qbs$ is not positive at function spaces.
\end{corollary}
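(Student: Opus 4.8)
The plan is to obtain \Cref{prop:qbs-nonpositive} as a direct specialization of \Cref{prop:nu-nonpositive}, so the work reduces to checking that $\qbs$ satisfies the hypotheses of that proposition. First I would recall from \Cref{thm:qbs-model} that $\qbs$ is a categorical model of the $\nu$-calculus, with $N = \R$, $B = 2$, monad $T = P$, and $\new$ the Gaussian $\nu \in P(\R)$; the same theorem records that the model is non-degenerate, since $0 \not\cong 1$. Second I would invoke \Cref{thm:privacy}, which establishes that $\qbs$ validates the privacy equation \eqref{eq:priv-c}. With a non-degenerate model verifying \eqref{eq:priv-c} in hand, \Cref{prop:nu-nonpositive} immediately yields that the associated Markov category $\catname{Kl}(P)$ (a Markov category by \Cref{obs:name-gen-markov}) is non-positive.

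It then remains to justify the qualifier \emph{at function spaces}, for which I would trace through the witness appearing in \Cref{prop:nu-nonpositive}. Under the present instantiation, the exponential $B^N$ is precisely the function space $2^\R$ of Borel subsets of $\R$, and the distribution \eqref{eq:leaking}, namely $\mu = \letin a \nu {[(\{a\}, a)]}$, lives in $P(2^\R \times \R)$. By privacy its first marginal is deterministic, equalling $[\emptyset]$, yet $\mu$ is not the product $[\emptyset] \otimes \nu$ of its marginals, because the membership morphism $(\ni) : 2^\R \times \R \to 2$ evaluates to $\true$ on $\mu$ but to $\false$ on the product. This is a concrete violation of \Cref{prop:one-deterministic-marginal} taking place at the function-space object $2^\R$, which is exactly the statement that $\qbs$ fails positivity at function spaces.

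There is essentially no obstacle here: the corollary is a bookkeeping consequence of the earlier results, and the only points deserving care are the verification that $\qbs$ meets both hypotheses of \Cref{prop:nu-nonpositive} (non-degeneracy and privacy, supplied respectively by \Cref{thm:qbs-model,thm:privacy}) and the identification of the abstract object $B^N$ with the concrete function space $2^\R$, which is what pins the failure of positivity to function spaces rather than to ground types.
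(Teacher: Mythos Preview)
Your proposal is correct and follows exactly the route the paper intends: the corollary is stated without proof and is meant to be read as the immediate specialization of \Cref{prop:nu-nonpositive} to $\qbs$, using \Cref{thm:qbs-model} and \Cref{thm:privacy} for the two hypotheses. Your additional unpacking of the witness at $2^\R$ to justify the phrase ``at function spaces'' is more explicit than the paper itself, but entirely in line with its reasoning.
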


We have thus given a natural example of a non-positive Markov category, and this phenomenon has an intuitive meaning in the context of name generation. Any fixed singleton set $\{a\}$ is manifestly distinguishable from $\emptyset$, but only if we know where to look. By randomizing $a$, its value is perfectly anonymized and this information is lost, leaving us with the empty set. This is reminiscent of a limited form of destructive interference. Note that probabilities in quasi-Borel spaces remain non-negative.

The concept of non-positivity is useful to connect several structural properties of $\catname{Qbs}$. Firstly, it explains the non-existence of conditionals and disintegrations in Prop \ref{prop:noconditionals}, as by \cite[11.24]{fritz} conditionals imply positivity. Secondly, the failure of the functor $\Sigma : \qbs \to \meas$ (Prop.~\ref{prop:qbs-meas-adjunction}) to preserve products is necessary in order to violate Proposition \ref{prop:one-deterministic-marginal}, as we observe

\begin{observation}\label{obs:product-of-marginals}
  Let $X,Y$ be quasi-Borel spaces and $\mu \in P(X \times Y)$ such that $\mu_X = [x]$ for some $x \in X$. If $\Sigma(X \times Y) \cong \Sigma X \times \Sigma Y$, then $\mu$ is the product of its marginals.
\end{observation}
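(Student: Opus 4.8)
The plan is to reduce the equality of the two quasi-Borel distributions $\mu$ and $\mu_X \otimes \mu_Y$ in $P(X \times Y)$ to an equality of ordinary measures on a product $\sigma$-algebra, where it can be checked on rectangles alone. Recall that equality of distributions means extensional equality of the associated pushforward measures on the induced measurable space $(X \times Y, \Sigma_{M_{X \times Y}})$. The hypothesis $\Sigma(X \times Y) \cong \Sigma X \times \Sigma Y$ says exactly that $\Sigma_{M_{X \times Y}}$ is the product $\sigma$-algebra $\Sigma_{M_X} \otimes \Sigma_{M_Y}$. The measurable rectangles $A \times B$, with $A \in \Sigma_{M_X}$ and $B \in \Sigma_{M_Y}$, always lie in $\Sigma_{M_{X \times Y}}$ (pull them back along the projections, which are morphisms), they are closed under intersection, and under the hypothesis they generate the whole $\sigma$-algebra. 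Since both $\mu$ and $\mu_X \otimes \mu_Y$ are probability measures, the uniqueness theorem for measures on a generating $\pi$-system reduces the claim to checking $\mu(A \times B) = (\mu_X \otimes \mu_Y)(A \times B) = \mu_X(A)\,\mu_Y(B)$ for all such rectangles.

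The computation then splits on whether $x \in A$. Using that the projections are morphisms, the marginals satisfy $\mu_X(A) = \mu(A \times Y)$ and $\mu_Y(B) = \mu(X \times B)$, and the determinism hypothesis $\mu_X = [x]$ gives $\mu_X(A) = 1$ when $x \in A$ and $0$ otherwise. If $x \notin A$, then $\mu(A \times B) \le \mu(A \times Y) = \mu_X(A) = 0$, matching the right-hand side. If $x \in A$, then $\mu(A \times Y) = 1$, so its complement $A^c \times Y$ is $\mu$-null; intersecting $X \times B$ with the conull set $A \times Y$ leaves the measure unchanged, whence $\mu(A \times B) = \mu(X \times B) = \mu_Y(B) = \mu_X(A)\,\mu_Y(B)$. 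Thus the two measures agree on every rectangle, hence on all of $\Sigma_{M_{X \times Y}}$, and $\mu = \mu_X \otimes \mu_Y$.

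The entire weight of the statement rests on the hypothesis, which is where I expect the only real subtlety to lie: without $\Sigma_{M_{X \times Y}}$ being the product $\sigma$-algebra, the rectangles need not generate it and the conclusion genuinely fails. This is precisely the phenomenon behind the privacy equation (\cref{thm:privacy}) and the non-positivity of $\qbs$ (\cref{prop:nu-nonpositive}): taking $X = 2^\R$, $Y = \R$ and the leaking distribution $\mu = \letin a \nu {[(\{a\}, a)]}$ gives a deterministic first marginal $[\emptyset]$, yet $\mu$ is not the product of its marginals, distinguished by the membership morphism $(\ni) : 2^\R \times \R \to 2$ whose defining subset lies in $\Sigma_{M_{2^\R \times \R}}$ but not in the product $\sigma$-algebra. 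So the measure-theoretic argument itself is routine; the content is the observation that requiring $\Sigma$ to preserve this product is exactly what forbids such correlations.
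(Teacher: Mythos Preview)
Your proof is correct. The paper takes a different, more abstract route: it simply notes that when $\Sigma_{M_{X\times Y}}$ is the product $\sigma$-algebra, equality of quasi-Borel distributions on $X\times Y$ becomes equality of ordinary Giry-monad measures in $\catname{Meas}$, and then invokes the fact (from \cite[11.25]{fritz}) that the Kleisli category of the Giry monad on $\catname{Meas}$ is a \emph{positive} Markov category, so Proposition~\ref{prop:one-deterministic-marginal} applies directly. Your argument instead unpacks this by hand: you reduce to the $\pi$-system of measurable rectangles and verify the identity there by an elementary case split on $x\in A$, then appeal to the uniqueness theorem. Your approach is self-contained and avoids the Markov-categorical machinery; the paper's one-line proof is shorter and emphasizes that the content of the observation is exactly the positivity of classical measure-theoretic probability. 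Your closing paragraph correctly identifies the contrapositive use of the result, which is indeed how the paper deploys it.
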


\begin{proof}
  If $X \times Y$ carries a product-$\sigma$-algebra, the situation reduces to $\catname{Meas}$, which is positive.
\end{proof}

Proposition \ref{prop:nu-nonpositive} thus implies that the product $2^\R \times \R$ cannot be preserved. Similar arguments can be constructed for other product spaces like $2^\R \times 2^\R$. Another structural result on quasi-Borel spaces that follows from the methods of \S\ref{sec:fullabstraction} concerns the novel status of function spaces.

\begin{proposition}\label{thm:2r-not-measurable}
  The quasi-Borel space $2^\R$ is not isomorphic to $M(\Omega)$ for any measurable space $\Omega$.
\end{proposition}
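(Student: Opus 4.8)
The plan is to characterize which quasi-Borel spaces arise, up to isomorphism, as $M(\Omega)$, and then to exhibit a single descriptive set-theoretic counterexample showing that $2^\R$ fails this characterization. First I would record the defining feature of the image of $M$. For any measurable space $\Omega$ the random elements of $M(\Omega)$ are by definition $\meas(\R,\Omega)$; since $\Sigma_{M(\Omega)}=\qbs(M(\Omega),2)\supseteq\Sigma_\Omega$ and any map measurable into the coarser $\Sigma_\Omega$ is still measurable into the finer $\Sigma_{M(\Omega)}$ (compose with each morphism $M(\Omega)\to 2$), one gets $\meas(\R,\Omega)=\meas(\R,(\Omega,\Sigma_{M(\Omega)}))$. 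Thus $M(\Omega)$ has the property that \emph{its random elements are exactly the maps $\R\to\Omega$ that are measurable for its induced $\sigma$-algebra}; equivalently the unit $\eta\colon X\to M\Sigma X$ of the adjunction in \cref{prop:qbs-meas-adjunction} is invertible at every $M(\Omega)$. As both the induced $\sigma$-algebra and the set of random elements are preserved by quasi-Borel isomorphisms, it follows that if $2^\R\cong M(\Omega)$ then every $\Sigma_{2^\R}$-measurable (i.e.\ Borel-on-Borel-measurable) map $\gamma\colon\R\to 2^\R$ must be a random element, hence of the form $\lambda r.B_r$ for some Borel $B\subseteq\R^2$. So it suffices to produce a Borel-on-Borel-measurable $\gamma$ whose graph $\{(r,x)\mid x\in\gamma(r)\}$ is not Borel.

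For the counterexample I would fix a bijection $f\colon\R\to\R$ that is not Borel measurable (these exist, since there are $2^{\mathfrak c}$ bijections but only $\mathfrak c$ Borel maps) and set $\gamma(r)=\{f(r)\}$. Each value is a singleton, hence a Borel subset of $\R$, so $\gamma$ genuinely lands in $2^\R$, and its graph is exactly $\mathrm{graph}(f)$. This graph is non-Borel: for functions between standard Borel spaces, the graph is Borel iff the function is Borel (by Souslin's theorem $f^{-1}(A)$ would be both analytic and co-analytic), so a non-Borel bijection has non-Borel graph. Hence $\gamma$ is not a random element.

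The remaining point, and the one I would emphasize, is that $\gamma$ \emph{is} Borel on Borel, and this is exactly where \cref{lemma:bob-singleton} does the work. Since every value $\gamma(r)$ is a singleton, for any $\mathcal U\in\Sigma_{2^\R}$ the preimage depends only on the singleton trace $S_{\mathcal U}=\{x\in\R\mid\{x\}\in\mathcal U\}$, namely $\gamma^{-1}(\mathcal U)=f^{-1}(S_{\mathcal U})$. By \cref{lemma:bob-singleton}, applied both to $\mathcal U$ and (using that $\Sigma_{2^\R}$ is a $\sigma$-algebra) to its complement, $S_{\mathcal U}$ is always either countable or co-countable. Because $f$ is a bijection, $f^{-1}$ sends countable sets to countable sets and co-countable sets to co-countable sets, so $\gamma^{-1}(\mathcal U)$ is always countable or co-countable and in particular Borel. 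Therefore $\gamma$ is Borel-on-Borel-measurable but not a random element, contradicting $2^\R\cong M(\Omega)$.

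The main obstacle is precisely this last verification: a priori a Borel-on-Borel family could probe $\gamma$ through an arbitrary Borel set of singletons, and $f^{-1}$ of an arbitrary Borel set need not be Borel for our wildly non-measurable $f$. What rescues the argument is the singleton rigidity supplied by \cref{lemma:bob-singleton}, ultimately resting on Becker's inseparability theorem (\cref{thm:becker}): it confines the singleton trace of any Borel-on-Borel family to countable or co-countable sets, the only sets a bijection is guaranteed to preserve. The surrounding bookkeeping (isomorphism-invariance of random elements and of the induced $\sigma$-algebra, the characterization of the image of $M$ via $\eta$, and the observation that $\gamma^{-1}(\mathcal U)$ depends only on $S_{\mathcal U}$) I expect to be routine.
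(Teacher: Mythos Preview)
Your proposal is correct and follows essentially the same route as the paper's own proof: both invoke the idempotency of the $\Sigma\dashv M$ adjunction to reduce to showing $M_{2^\R}\subsetneq M_{\Sigma_{2^\R}}$, then exhibit the same counterexample $\gamma(r)=\{f(r)\}$ for a non-measurable bijection $f\colon\R\to\R$, using \cref{lemma:bob-singleton} to show that $\gamma$ is $\Sigma_{2^\R}$-measurable while the non-Borel graph of $f$ witnesses $\gamma\notin M_{2^\R}$. The only cosmetic differences are that the paper cites \cite[Theorem 4.5.2]{srivastava} for the graph being non-Borel where you invoke Souslin's theorem, and the paper states the idempotency directly rather than unfolding the unit condition.
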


\begin{proof}
  The adjunction $\Sigma \dashv M$ (Prop.~\ref{prop:qbs-meas-adjunction}) is idempotent, hence a quasi-Borel space $X$ lies in the essential image of $M$ if and only if $M_X = M_{\Sigma_X}$. We will show that $M_{2^\R}$ is strictly smaller than $M_{\Sigma_{2^\R}}$. Let $f : \R \to \R$ be a bijective function that is not measurable, and let $A \subseteq \mathbb R^2$ be the graph of $f$. By \cite[Theorem 4.5.2]{srivastava}, $A$ is not Borel and hence the map $\alpha : \R \to 2^\R, x \mapsto A_x = \{f(x)\}$ does not lie in $M_{2^\R}$. However $\alpha \in M_{\Sigma_{2^\R}}$, that is $\alpha$ is a measurable map from $\R$ to $(2^\R, \Sigma_{2^\R})$. Namely, for every $\mathcal U \in \Sigma_{2^\R}$, we have $\alpha^{-1}(\mathcal U) = \{ x : \{f(x)\} \in \mathcal U \}$. By Lemma~\ref{lemma:bob-singleton}, the set $S = \{ x : \{x\} \in \mathcal U \}$ is always countable or cocountable, and so is $\alpha^{-1}(\mathcal U) = f^{-1}(S)$ by bijectivity of $f$. So the preimage is a Borel set as desired.
\end{proof}

\section{Related Work and Context}
\label{sec:relwork}

\subsection{Names in Computer Science and Statistics}
\label{sec:related-names}

Names are important in almost every area of practical computer science. There are two main ways to implement name generation: the first is to have one or more servers that deterministically supply fresh names as requested, and the second is to pick them randomly. This paper has emphasised the surprising effectiveness of the latter approach for programming semantics, in that it provides a model that is fully abstract up to first order, not by construction, but by general properties of the real numbers.

Names might be server names in distributed systems, nonces in cryptography, object names in object oriented programming, gensym in Lisp, or abstract memory locations in heap-based programming. Beyond computer science, names play a vital role in logic and set theory.
Since this paper is in the theme of probabilistic programming, we emphasise in particular two ways that names are used in probabilistic programming and statistics, and the way that name generation is already understood in terms of randomness there.
\begin{itemize}
  \item The Dirichlet process can be used as a method for clustering data points where the number of clusters is unknown. The `base distribution' of a Dirichlet process allocates a label to each cluster that is discovered. It is common to use an atomless distribution such as a Gaussian for this, so that the labels are in effect fresh names for the clusters. In the Church probabilistic programming language, it is common to actually use Lisp's gensym as the base distribution for the Dirichlet process~\cite{roy-gensym}.
  \item A graphon is a measurable function $g\colon [0,1]^2\to [0,1]$, and determines a countably infinite random graph in the following way: we label nodes in the graph with numbers drawn uniformly from $[0,1]$, and there is an edge between two nodes $r,s$ with probability $g(r,s)$. Thus when building a graph node-by-node, the name of each fresh node is, in effect, a real number~\cite{orbanz-roy}.
\end{itemize}

While many programming languages support name generation directly or through libraries, we have here focussed on the $\nu$-calculus, which is stripped down so that the relationship between name generation and functions can be investigated. There are many other calculi for names, including $\lambda\nu$, which is a call-by-name analogue of the $\nu$-calculus~\cite{odersky:localnames}, and the $\pi$-calculus, for concurrency~\cite{milner-pi}.
Moreover, research on the  $\nu$-calculus has led to significant developments in different directions,
including memory references (e.g.~\cite{jeffrey-rathke,laird-games,murawski-tzevelekos}) and
cryptographic protocols (e.g.~\cite{sumii-pierce}).
It may well be informative to pursue quasi-Borel based analyses of these applications in the future.

\subsection{Models of the \texorpdfstring{$\nu$}{Nu}-Calculus}
\label{sec:nu}

Arguably the simplest model of the $\nu$-calculus is a set-theoretic model with a special set~$N$ of atoms, where abstractness of the atoms is enforced by an invariance property under permutations of the atoms. This model appears in different equivalent guises, including nominal sets and sheaves on finite sets of names and injective renamings. In this model, types are interpreted as sets, and expressions are interpreted as equivariant functions; see for instance \cite[Ch.~9]{pitts:nominalsets} or \cite[\S 3.7]{stark:thesis}. In nominal sets, equivariance is used to treat private names as interchangeable, which is reminiscent of the idea of $\nu$-invariance in \ref{sec:proof_fullabstraction}.

This simple model of nominal sets is very useful, but on its own it is only fully abstract at ground types~\cite[\S5]{stark:cmln}.
The privacy law~(\ref{eq:priv-c}) fails because the Boolean existence function ${\exists:(N\to B)\to B}$ (\ref{eqn:exists}) is a morphism of nominal sets, and so we can distinguish the expressions in (\ref{eq:priv-c}) via the context
\begin{equation}\label{eqn:ctx-ex}\letin f{(-)}(\exists f):B\text.\end{equation}
Nominal sets are a Boolean model of set theory~\cite[Thm.~2.23]{pitts:nominalsets}, and one would necessarily have this kind of existence function $\exists$ in any Boolean model of set theory.
Quasi-Borel spaces do form a kind-of model of set theory (a quasitopos), but it is an intuitionistic one, and there is no Boolean existence function (\cref{ex:qbs-nonemptiness}).

To deal with this incompleteness of nominal sets, Stark~\cite[\S 4.4]{stark:thesis} proposed a semantic version of the logical relations that we have recalled in Section~\ref{sec:fullabstraction}. This model, based on functors between double categories, is fully abstract at first order, as ours is.
Subsequently an alternative logical relations model was proposed by~\cite{zhang-nowak}, by working with logical relations over a functor category that more clearly distinguishes between public and private names.
$\qbs$ is different in spirit to these models, as it is a general purpose model of probability theory rather than a model purpose-built for full abstraction. A quasi-Borel space can be regarded as an $\R$-indexed logical relation (in the sense of~\cite{plotkin-lambda-def}), but it also has a basic role motivated by probability theory.

One curious aspect is that all of these models of the $\nu$-calculus will provide unusual Markov categories (Observation~\ref{obs:name-gen-markov}), i.e.~categorical models of probability theory, even if they do not exhibit any randomness in the familiar sense.
\paragraph{Full Abstraction at Higher Types.}
None of the set-based models justify the following observational equivalence at second-order \cite[Ex.~4(3)]{stark:whatsnew}:
\begin{equation}\label{eqn:tricky}
  \nu a. \nu b. \lambda f. (f a \Leftrightarrow f b) \approx_{(\tnames \to \tbool) \to \tbool} \lambda f. \true
\end{equation}
where $\Leftrightarrow$ denotes the biconditional of booleans.
To see that this equation fails in the quasi-Borel space model, notice that there is a $\qbs$ morphism $(0{>})\colon \R\to 2$ given by $(0{>})(r)=\true$ iff $0>r$, and so we can temporarily add this as a constant to the $\nu$-calculus and keep the rest of the denotational semantics the same.
Then $\sem{(\lambda f.\true)(0{>})}=\sem{\true}$, but $\sem{(\nu a.\nu b. \lambda f(f a\Leftrightarrow f b))(0{>})}$ is different; informally it returns $\true$ with probability $0.5$.

To our knowledge, the only models of~(\ref{eqn:tricky}) to date are game-semantic models~\cite{abramsky+:nominalgames,tzevelekos-thesis} and bisimulation models~\cite{benton-koutavas}. In common with our work, normal forms play an implicit role in those models, but those models are very different from ours at higher types.
In the future it may be interesting to impose further invariance properties on quasi-Borel spaces to bridge the gap.

\paragraph{Usage of Models in Practice.}
One major application of models is in validating observational equivalences that may be used for compiler optimizations. In probabilistic programming, optimizations are performed as part of statistical inference algorithms. For instance, discardability~(\ref{eqn:affine}) and exchangeability~(\ref{eqn:commutativity}) are simple but useful translations in practical inference~\cite{birch,r2}, and partial evaluation and normalization are used in several systems~\cite{lambda-psi,shan-ramsey}.
Our work in this paper is primarily foundational, but one application is that, in a higher-order probabilistic language, a statistical inference algorithm could legitimately simplify using our normalization algorithm (\S\ref{sec:nf}) or higher-typed equations such as the privacy equation~(\ref{eq:privnu}).

\subsection{Other Models of Higher-Order Probability}
\label{sec:related-hoprob}

In this paper we have focused on quasi-Borel spaces, but recently other models of higher-order probability have been proposed. We contend that there are two essential ingredients for using a model of higher-order probability to model the $\nu$-calculus, with name generation as randomness:
\begin{enumerate}
  \item it must support an atomless distribution, such as the normal distribution, on some uncountable space $N$;
  \item it must support equality checking on that space, as a function $N\times N\to 2$.
\end{enumerate}
Some models, such as probabilistic coherence spaces~\cite{pcoh}, do not seem to support atomless distributions, which makes it unclear how to use them for this purpose. Other models are based on the idea that all functions are continuous or computable, e.g.~\cite{escardo-testing,computable-probprog} and then it is impossible to have equality checking for $N=\mathbb{R}$.

This still leaves several recent models, including the stable cones model~\cite{ehrhard:cones}, a function analytic model~\cite{dahlqvist:roban}, game semantics~\cite{paquet-games}, geometry of interaction~\cite{goi-bayes}, boolean-valued sets~\cite{scott-bv-prob}, a boolean topos model~\cite{simpson-sheaves}, and an operational bisimulation~\cite{dal-lago}. There are also recent logics for higher order probability~\cite{sato:formal-verification}. We understand from the authors that operational bisimulation violates the privacy law, for an interesting reason, and that the boolean topos model violates it because of booleanness (as above, (\ref{eqn:ctx-ex})). It remains to be seen how abstract the other recent models are for interpreting the $\nu$-calculus. We note that~\cite{dahlqvist:roban,ehrhard:cones} are currently focused on call-by-\emph{name} semantics and so it is not obvious how to use them with the call-by-\emph{value} $\nu$-calculus that we considered in this paper (see~\eqref{eg:nu-lambda-commute}).

Finally we mention another model of higher-order probability that is purely combinatorial~\cite{staton:betabernoulli}. That work emphasizes two views of the same model. From one point of view, the space $N$ is a space of real numbers and supports the beta distributions (which are atomless). From another point of view, $N$ is a space of freshly generated names of urns, and real numbers do not arise. This is not a model of the $\nu$-calculus since it does not support name equality checking, but it is related in spirit nonetheless.

\subsection{Beyond \texorpdfstring{$\nu$}{Nu}-Calculus}

The $\nu$-calculus describes the basic interaction between functions and name generation. Going further, it is also important to investigate the situation where the names have further meaning or structure. In probabilistic programming and statistics, the reorderability of names amounts to sequence exchangeability (e.g.~\cite{staton:betabernoulli}), and this is of fundamental importance in statistics and probabilistic programming. But more elaborate symmetries and exchangeabilities  are also important (e.g.~\cite{orbanz-roy,hier-exch,syafr-pps}), and we leave this for future work.

\begin{acks}

We thank Alexander Kechris for a first proof of the privacy equation; we have independently developed a different proof based on Borel inseparability (\S\ref{sec:privacy}). We also thank Ohad Kammar for many insightful comments on an early draft of this work. The work has had three starting points: one in discussions with Alex Simpson in 2013; one in discussions with Cameron Freer and Dan Roy in 2016; and the last following discussions with Ohad Kammar and Prakash Panangaden in 2019. We also thank Tobias Fritz, Mathieu Huot and Sean Moss for helpful discussions. It has been helpful to present preliminary versions of this work at the LAFI and PPS workshops. This work is supported by
\grantsponsor{EPSRC}{EPSRC}{https://epsrc.ukri.org/} Grant No.~\grantnum{EPSRC}{EP/N509711/1},
a \grantsponsor{RS}{Royal Society University Research Fellowship}{http://www.royalsociety.org},
\grantsponsor{FRQNT-bourse-maitrise}{FRQNT}{http://www.frqnt.gouv.qc.ca/} Grant No.~\grantnum{FRQNT-bourse-maitrise}{290736},
\grantsponsor{NSERC}{NSERC Discovery Grant}{https://www.nserc-crsng.gc.ca/} No.~\grantnum{NSERC}{RGPIN-2020-05445},
\grantsponsor{NSERC-supplement}{NSERC Discovery Accelerator Supplement}{} No.~\grantnum{NSERC-supplement}{RGPAS-2020-00097}
and \grantsponsor{NCN}{NCN Grant Harmonia}{https://ncn.gov.pl/} No.~\grantnum{NCN}{2018/30/M/ST1/00668}.

\end{acks}

\bibliography{paper}

\end{document}